\newcommand{\N}{\mathbb{N}}
\newcommand{\Pp}{\mathbb{P}}
\newcommand{\E}{\mathbb{E}}
\newcommand{\F}{\mathcal{F}}
\newcommand{\cals}{\mathcal{S}}
\newcommand{\calp}{\mathcal{P}}
\newcommand{\calk}{\mathcal{K}}
\newcommand{\calm}{\mathcal{M}}
\newcommand{\cala}{\mathcal{A}}
\newcommand{\Rho}{\emph{\textsf{P\hspace{0.1ex}}}}
\newcommand{\Proc}{\rm P}
\newcommand{\n}{\mathbf{n}}
\DeclareRobustCommand{\rvdots}{%
  \vbox{
    \baselineskip4\p@\lineskiplimit\z@
    \kern-\p@
    \hbox{.}\hbox{.}\hbox{.}
  }}
\DeclareRobustCommand{\srvdots}{%
  \vbox{
    \baselineskip2\p@\lineskiplimit\z@
    \kern-\p@
    \hbox{\scriptsize.}\hbox{\scriptsize.}\hbox{\scriptsize.}
  }}
\DeclareRobustCommand{\srddots}{%
  \vbox{
    \baselineskip2\p@\lineskiplimit\z@
    \kern-\p@
    \hbox{\scriptsize.}\hbox{\scriptsize~.}\hbox{\scriptsize~.}
  }}
  \tikzstyle{resident}=[circle,fill=blue,draw=none,text=white]
  \tikzstyle{mutant}=[resident,fill=red]
  \tikzstyle{reproductor}=[circle,very thick, fill=red,draw=green!60!black,text=white]
  \tikzstyle{dead}=[circle,very thick,fill=blue,draw=black,text=white]
  \definecolor{zzzzzz}{rgb}{0.4,0.4,0.4}
  \definecolor{ffttqq}{rgb}{1,0.2,0}
\newtheorem{theorem}{\bf Theorem}[section]
\newtheorem{lemma}{\bf Lemma}[section]
\newtheorem{proposition}{\bf Proposition}[section]
\theoremstyle{definition}
\newtheorem{remark}{Remark}
\numberwithin{equation}{section}
\newcommand{\green}{\color{green!60!black}}
\title[Absorption and fixation times]{Absorption and fixation times for \\ evolutionary processes on graphs}
\author[F. Alcalde]{Fernando Alcalde Cuesta}
\address{Universidad de Santiago de Compostela, E-15782 Santiago de Compostela, Spain.
}
\email{fernando.alcaldecuesta@gmail.com}
\thanks{The authors would like to thank Jan Rychtář for suggesting the initial idea of this article. This work has been partially supported by Xunta de Galicia (Spain), \emph{Consolidación e Estruturación} 2023, GRC GI-2136. G.G. acknowledges ANII and CSIC (Uruguay) for financial support. A.L. has been supported by the Spanish Research projects PID2021-122961NB-I00 and PID2022-140556OB-I00 and by the European Regional Development Fund and Diputación General de Aragón (E22-23R)}
\author[G. Guerberoff]{Gustavo Guerberoff}
\address{Instituto de Matem\'atica y Estad\'{\i}stica Rafael Laguardia, 
         Facultad de Ingenier\'{\i}a, Universidad de la Rep\'ublica, 
         J. Herrera y Reissig 565, C.P.11300 Montevideo, Uruguay.}
\email{gguerber@fing.edu.uy}
\author[\'A. Lozano]{\'Alvaro Lozano Rojo}
\address{Departamento de Matemáticas, Instituto Universitario de Matemáticas y Aplicaciones (IUMA), Universidad de Zaragoza, Pedro Cerbuna 12, E-50009 Zaragoza, Spain.}
\email{alozano@unizar.es}
\begin{document}

\begin{abstract}
In this paper, we study the absorption and fixation times for evolutionary processes on graphs, under different updating rules. 
While in Moran process a single neighbour is randomly chosen to be replaced, in proliferation processes other neighbours can be replaced using Bernoulli or binomial draws depending on $0 < p \leq 1$. There is a critical value $p_c$ such that the proliferation is advantageous or disadvantageous in terms of fixation probability depending on whether $p > p_c$ or $p < p_c$. 

We clarify the role of symmetries for computing the fixation time in Moran process. We show that the Maruyama-Kimura symmetry depend on the graph structure induced in each state, implying asymmetry for all graphs except cliques and cycles. There is a fitness value, not necessarily $1$, beyond which the fixation time decreases monotonically. 

We apply Harris' graphical method to prove that the fixation time decreases monotonically depending on $p$. Thus there exists another value $p_t$ for which the proliferation is advantageous or disadvantageous in terms of time. However, at the critical level $p=p_c$, the proliferation is highly advantageous when $r \to +\infty$.
\end{abstract}


\maketitle
  
\section{Introduction}
\label{SIntro}

Proliferation models were introduced in \cite{AlcaldeGuerberoffLozano2022} to more realistically describe the propagation of mutant individuals in a finite population arranged on a graph. Resident individuals evolve like in the structured version \cite{LHN} of the classical Moran process \cite{Moran}, that is, if the individual randomly chosen for reproduction is a resident, then its clone replaces a single neighbour randomly chosen for death. However, if the individual chosen for reproduction is a mutant, we distinguish two types of draw: \emph{Bernoulli}, where all or none of the neighbours become mutants depending on a certain probability $p$; or \emph{binomial}, where the number of neighbours chosen for death follow a binomial distribution with probabilistic parameter $p$. Both processes were 
used to model the spread of prion diseases and the proliferation of stem cells that produce multiple daughter cells, see \cite{AlcaldeGuerberoffLozano2022} and references therein. 

Although it may seem surprising, proliferation is not always to the advantage of mutants. The main result of \cite{AlcaldeGuerberoffLozano2022} states that, whatever the underlying population structure, there always exists a critical value $p_c$ (depending on the graph structure) such that the fixation probability for the proliferation process is equal to the fixation probability for the corresponding Moran process. If $p > p_c$, proliferation is advantageous, but if $p < p_c$ it is disadvantageous. Symbolic computation for a small population shows that this critical value is highly correlated to the mean degree of the graph so that proliferation is more advantageous in cliques than in star graphs. By randomly updating of the proliferation parameter, fluctuating waves in the population of proliferating mutant individuals (similar to the \emph{parasitaemia waves} exhibited in \cite{Gustavo}) can be also created, replacing fixation or extinction by a quasi-stationary state.

In this paper, we are interested in the absorption and fixation times. The \emph{mean absorption time} is the mean number of steps needed to reach one of the two absorbing states of the process, namely fixation or extinction of mutants, starting from any mutant or any set with the same number of mutants. The \emph{mean fixation time} is the mean time required for any mutant (or any set with the same number of mutants) to take over the whole population, assuming that this will happen. 

Similarly to fixation probability, definitions and derivations of these times for Birth-Death processes on (infinite) homogeneous populations can be found in \cite[Chapter 4, Section 7]{KT}.  A subsequent description of finite-state continuous-time Markov chains \cite[Chapter 4, Section 8]{KT} has later been used by Ashcroft et al. \cite{Traulsen2015} and independently by Hathcock and Strogatz \cite{HathcockStrogatz} to study the distribution of mean fixation times in a finite homogeneous population. A full analytic computation has been given by Antal et al. \cite{Antal2006} in the homogeneous case, whereas computations for the Moran process on complete, cycle and star graphs have been derived in \cite{BHR, H}.

Maruyama \cite{MaruyamaTimes1974} and Maruyama and Kimura \cite{MaruyamaKimura} were the first to observe a symmetry between the fixation time of a set of $i$ mutants and the extinction time of the complementary set of $N-i$ residents in a homogeneous population. This symmetry was later rediscovered by Ashcroft et al. \cite{Traulsen2015} and also \cite{Taylor2006} in average. More recently Hathcock and Strogatz \cite{HathcockStrogatz} derived that the fixation time distributions of mutants of fitness $r$ and $1/r$ are identical in this case. 
We will obtain both symmetries from another kind of symmetry in a clique, which we call \emph{G-symmetry}, and we will use it to study the monotonicity of the fixation time of any mutant in the Moran process on any graph.
Although one would expect this time to decrease if the mutant is advantageous and his advantage increases, we will see that this is not strictly true.

Likewise, it could be expected that proliferation would reduce the mean fixation time, at least if the fixation probability is identical for both processes.
Here we will indeed prove that, in the critical regime, proliferation of mutant individuals reduces the mean absorption and fixation times, at least for some large classes of graphs. 
However, we will also show the existence of a new critical probability $p_t$ determining the transition from the regime where proliferation does not reduce the fixation time to the regime where proliferation reduces the fixation time.
 
More precisely, in Section~\ref{SBernoulli},
we explicitly compute absorption and fixation times for complete, cycle and star graphs of order $N$ for Bernoulli proliferation, allowing to compare these times to the corresponding times for Moran process on each graph. The critical case where the proliferation is neither advantageous nor disadvantageous is particularly interesting. In all studied cases, this critical rate (Moran times/proliferation times, at $p=p_c$) is strictly greater than one for any fitness value $r > 0$. The limit of the critical rate coincides for both, absorption and fixation times, as $r \to +\infty$, but depending on the type of graph, this limit is equal to:
\begin{itemize}
\item Complete graph: $(N-1)H_{N-1}$ where $H_{N-1}$ is the $(N-1)$-harmonic number, 

\item Cycle graph: $\frac{\textstyle 2N(N-1)}{\textstyle(N+1)(N-2)}$,

\item Star graph: 
$\frac{\textstyle N}{\textstyle 3N-2} 
\big[ (N-1) \sum_{i=1}^{N-1} H_ i \big].$
\end{itemize}
Analytical expressions and rate limits are completed with the exact computation of times and rates for all graphs of order $6$ included in Supplementary Material S4. Both approaches confirm that mean absorption and fixation proliferation times are lower for star and complete graphs, with cycles being the worst performing graphs.

In Section~\ref{Sbinomial}, we also give complete derivations of mean absorption times for binomial proliferating and non-proliferating individuals on a complete graph. Absorption and fixation times and rates are graphically described for all graphs of order $6$ and also for some additional orders, see Supplementary Material S4-S5.


 Finally, in Sections~\ref{Smonotone}~and~\ref{Sphase}, we demonstrate the existence of a second critical value $p_t$ such that the fixation time rate is greater or less than 1 depending on whether $p$ is greater or less than $p_t$. This imply the existence of three different regimes when we compare Bernoulli or binomial proliferation with the classical Moran process: chance and time disadvantageous, chance or time disadvantageous, and finally chance and time advantageous. 

\section{Proliferation models on evolutionary graphs}
\label{Sdef}

\emph{Proliferation models} on finite graphs $G=(V,E)$ has been introduced in \cite{AlcaldeGuerberoffLozano2022}. All such graphs will be assumed connected and undirected. As for the Moran model \cite{Moran} and its structured version \cite{LHN}, see also \cite{BR,BRS2,Maruyama1970,Maruyama1974,Nowak,SR2,Voorhees1}, we assume: 

\begin{itemize}

\item A finite population is arranged on the vertex
 set $V$ of cardinal $|V|= N$.

\item There exist two types of individuals/particles/alleles/cells: \emph{mutant/resident}, \emph{type-1/type-2}, \emph{advantageous/disadvantageous} or \emph{mutant/wild-type}. 

\item Mutant individuals have a relative fitness $r>0$ with respect to resident individuals. 

\item The population evolves when clones of a mutant or resident individual occupying a vertex $v$ replace individuals, called \emph{neighbours}, occupying vertices $w$ linked to $v$ by an element of the edge set $E$. 
\end{itemize}

\noindent
To define the evolutionary process, we suppose that, at a given time 
$n$, there are $i$ mutants and $N-i$ residents occupying $V$.  Let $S_n \subset V$ be the set of vertices occupied by mutants at this time. A vertex $v$ is randomly 
chosen with probability $p_v =\frac{r}{w_i}$ if $v \in S_n$ and $p_v = \frac{1}{w_i}$ if $v \notin S_n$, where $w_i = ri + N-i$ is the total reproductive weight of all individuals at this state.
As in the Moran process, if the vertex 
$v$ is occupied by a resident then we randomly choose a neighbour $w$ of $v$ and put a resident on the vertex $w$.
If $v$ is occupied by a mutant, under the classical 
model, a neighbour $w$ is also chosen in an equiprobable way to put a mutant. However, in \cite{AlcaldeGuerberoffLozano2022}, we considered two types of clonal proliferation, both associated to a parameter $p \in [0,1]$ which regulates its intensity: 
 
\begin{list}{\itemii}{\leftmargin=14pt}

  \item[i)~] \textbf{Bernoulli proliferation}: with probability $p$ all the 
 neighbours of vertex $v$ (whose number $d(v)$ is the \emph{degree of $v$}) become occupied by mutants, and with probability $1-p$ nothing happens.

  \item[ii)] \textbf{Binomial proliferation}: each neighbour of $v$ is occupied by a mutant with probability $p$ independently. Thus the number of potentially occupied neighbours of $v$ follows a binomial distribution with parameters $d(v)$ and $p$.

\end{list}

The \emph{fixation probability} is the basic quantity for the stochastic analysis of the proliferation studied in \cite{AlcaldeGuerberoffLozano2022}.
For the Moran process on a graph $G = (V,E)$, given a set of vertices $S \subset V$ occupied by mutants, the \emph{fixation  probability}
\[
\Phi^M_S(r)
            = \mathbb{P} \, [ \, \exists n \geq 0 : S_n = V \mid S_0 = S \,]
\]
is the solution of a system of $2^N$ linear equations with boundary conditions 
$\Phi^M_\emptyset(r)=0$ and $\Phi^M_V(r)=1$. Here $S_n$ denotes the set of vertices occupied by mutants at time $n\geq 0$. As $G$ is undirected, such a linear 
system admits a unique solution~\cite{KT:ISM}. 
The \emph{(mean) fixation probability}
is given by 
\[
\Phi^M_1(r) = \frac 1N \sum_{v \in V} \Phi^M_{\{v\}}(r),
\]
being a rational function on the fitness $r$. Similarly, we denote by $\Phi^B_1(r,p)$ and 
$\Phi^b_1(r,p)$ the mean 
fixation probability for Bernoulli and binomial proliferation on 
$G$ with fitness $r$ and proliferation parameter $p$ respectively. In \cite{AlcaldeGuerberoffLozano2022}, we compared the fixation probabilities for proliferating and non-proliferating processes obtaining the following result: 
\begin{theorem}[\cite{AlcaldeGuerberoffLozano2022}]
  \label{thm:critical}
  For any undirected 
  graph $G$, there is a unique pair of 
  critical values $p_c^B = p_c^B(r)$ and $p_c^b = p_c^b(r)$ depending on the 
  fitness $r >0$ such that 
 $$
 \Phi^B_1(r,p_c^B) = \Phi^M_1(r)
\quad \text{and} \quad
\Phi^b_1(r,p_c^b) = \Phi^M_1(r).
$$
\end{theorem}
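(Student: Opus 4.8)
The plan is to fix $r>0$ and study $\Phi^B_1(r,\cdot)$ as a function of $p\in[0,1]$: I would prove it is continuous, strictly increasing, and satisfies $\Phi^B_1(r,0)=0<\Phi^M_1(r)\le\Phi^B_1(r,1)$, so that the intermediate value theorem gives existence of $p_c^B$ and strict monotonicity gives uniqueness. For continuity, note that conditioned on a mutant at $v$ being selected, the Bernoulli update is the mixture ``with probability $p$ replace $S$ by $S\cup\Gamma(v)$, with probability $1-p$ do nothing'' (here $\Gamma(v)$ is the neighbourhood of $v$); hence every one-step transition probability of the chain on $2^V$ is affine in $p$, the vector $(\Phi^B_S(r,p))_S$ solves a linear system whose coefficients are affine in $p$, and so each $\Phi^B_S(r,p)$, and the average $\Phi^B_1(r,p)$, is a rational and in particular continuous function of $p$ on $[0,1]$.

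First I would settle the boundary values. At $p=0$ a selected mutant does nothing, so the number of mutants never increases; by connectedness, from any $S$ with $\emptyset\ne S\subsetneq V$ there is a resident adjacent to a mutant, which with probability bounded below is selected and deletes a mutant, so $|S_n|$ reaches $0$ almost surely and never reaches $N$. Thus $\Phi^B_S(r,0)=0$ for every $S\ne V$, giving $\Phi^B_1(r,0)=0<\Phi^M_1(r)$, the last inequality because $\Phi^M_1(r)>0$. At $p=1$ a selected mutant converts all of its neighbours, dominating the single-neighbour Moran update; the coupling described below keeps the Bernoulli mutant set containing the Moran one and yields $\Phi^B_1(r,1)\ge\Phi^M_1(r)$.

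The heart of the argument is monotonicity, which I would obtain from one order-preserving coupling. Since fixation probabilities depend only on the embedded jump chain, I would pass to the continuous-time description in which each vertex carries two independent exponential clocks, a resident clock of rate $1$ and a mutant clock of rate $r$, acting through whichever clock matches its current type: a ringing resident converts a uniform neighbour to resident, and a ringing mutant proliferates Bernoulli$(p)$ onto all of $\Gamma(v)$. Crucially this removes the state-dependent normalization $1/w_i$, which does not affect fixation probabilities. Given $p_1\le p_2$, I run both processes from the same initial set with the same clocks and the same proliferation coins, process $i$ proliferating at a mutant ring iff the shared uniform is $\le p_i$. An induction on events shows $S^{(1)}_t\subseteq S^{(2)}_t$ is preserved: a resident ring only deletes mutants from the process in which the ringing vertex is resident (safe, it shrinks the smaller-or-equal set), a mutant ring only adds $\Gamma(v)$ to the process in which it is mutant, and when the vertex is mutant in both, $p_1\le p_2$ forces process $2$ to proliferate whenever process $1$ does. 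Hence fixation of process $1$ forces fixation of process $2$, so $\Phi^B_S(r,p_1)\le\Phi^B_S(r,p_2)$ and, averaging over single-mutant starts, $\Phi^B_1(r,p_1)\le\Phi^B_1(r,p_2)$; the same construction with two ordered initial sets and a common $p$ gives set-monotonicity and, applied to $p=1$ versus Moran, the coupling invoked above. The main obstacle is precisely getting this coupling right: reusing ``the same selected vertex'' fails because a vertex's selection rate depends on its type, which differs between the two configurations once they separate, and it is the two-clock decomposition, together with checking that every elementary move preserves the inclusion, that resolves this.

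Strictness, needed for uniqueness, comes from a separating event of positive probability: starting from a single mutant at a vertex $v$ of degree $\ge 2$ (which exists for connected $G$ with $N\ge 3$), with positive probability the first ring is a mutant ring at $v$ with shared uniform in $(p_1,p_2]$, so $S^{(2)}$ jumps to $\{v\}\cup\Gamma(v)$ while $S^{(1)}$ stays $\{v\}$; from this strictly ordered pair, connectedness and $r>0$ give positive probability that $S^{(1)}$ dies out while $S^{(2)}$ fixates, whence $\Phi^B_1(r,p_1)<\Phi^B_1(r,p_2)$. Finally, the binomial case is identical once ``add $\Gamma(v)$ with probability $p$'' is replaced by ``add a $\mathrm{Binomial}(d(v),p)$ random subset of $\Gamma(v)$'': increasing $p$ stochastically enlarges this subset, so the shared-uniform (thinning) coupling again preserves the inclusion, the boundary values $\Phi^b_1(r,0)=0$ and $\Phi^b_1(r,1)=\Phi^B_1(r,1)\ge\Phi^M_1(r)$ are unchanged, and the same continuity, monotonicity and intermediate value argument produces the unique $p_c^b$.
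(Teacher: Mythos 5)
Your proposal is correct and takes essentially the same route as the source: your two-clock continuous-time construction with shared uniforms and thinned proliferation arrows is exactly the Harris-type graphical representation of \cite{AlcaldeGuerberoffLozano2022} recalled in Section~\ref{Smonotone} (processes with parameters $p' \leq p$ coupled by randomly deleting red arrows), and the continuity plus strict-monotonicity plus intermediate-value scheme, with boundary values $\Phi^\beta_1(r,0)=0$ and $\Phi^\beta_1(r,1)\geq\Phi^M_1(r)$, is how the unique critical value is obtained there. The one step you leave implicit --- unique solvability of the $2^N$-state linear system for every $p\in[0,1]$, so that the rational functions $\Phi^\beta_S(r,\cdot)$ have no poles on $[0,1]$ --- follows from almost-sure absorption by the same resident-replacement argument you already give at $p=0$, so it is not a gap.
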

\noindent
Thus, for every fitness value $r > 0$, there is a transition between two 
different regimes where the proliferation is \emph{advantageous} for mutants if $p > p_c^\beta(r)$, but \emph{disadvantageous} if $p < p_c^\beta(r)$, with $\beta = b,B$. As explained in \cite{AlcaldeGuerberoffLozano2022}, this fact alters the balance between amplifiers, suppressors and isothermal graphs in the Moran 
process~\cite{PLOS,PLOS2018,Allen2020,HindersinTraulsen}.

Here we study other fundamental quantities in the stochastic analysis of the evolution in a finite population: the \emph{mean absorption time} and the \emph{mean fixation time}. 
Denoting again $S_n$ the set of vertices occupied by mutants at time $n\geq 0$, 
for each subset $S$ of $V$, the \emph{absorption time} is the random variable 
$$
\tau^\beta_S = \min \{ \; n\geq 0 \; | \, S_n = \emptyset \text{ or } S_n =V \, \}
$$
conditioned to $S_0 = S$. Its expectation is given as solution of a system of linear equations, although this is now non-homogeneous. With a slight abuse of notation, we denote both the random variable and its expectation by the same symbol. The \emph{mean absorption time} $\tau^\beta_i$ is the average of the expected times $\tau^\beta_S$ associated to the subsets $S$ having $i$ elements.
In the next sections, we will consider the equations in some specific cases, although a precise general description is given in Supplementary Material S1.  
The \emph{fixation time} $T^\beta_S$ is the expectation of the random variable $\tau^\beta_S$, but conditioned to the fixation of mutants (see also Supplementary Material S1). The \emph{mean fixation time} $T^\beta_i$ is again the average of $T^\beta_S$ when $|S|=i$. 
 For convenience, we will usually omit the term mean for the times \mbox{$\tau^\beta_i$ and $T^\beta_i$.}

For comparing the absorption and fixation times $\tau^\beta_1$ and $T^\beta_1$ of a single proliferating mutant (in average) to the absorption and fixation times $\tau^M_1$ and $T^M_1$ of a single non-proliferating mutant in the usual Moran process, we introduce the \emph{absorption} and \emph{fixation rates}

\[
\rho^\beta(r,p) = \frac{\tau^M_1(r)}{\tau^\beta_1(r,p)} \quad \text{and} \quad
\Rho^\beta(r,p) = \frac{T^M_1(r)}{T^\beta_1(r,p)}.
\]
We will be especially interested in the case $p = p_c^\beta(r)$ where proliferation does not imply an advantage or a disadvantage in which respect to the fixation probability, see Sections~\ref{SBernoulli}~and~\ref{Sbinomial}. The indices $M$ and $\beta$ will be suppress when the type of process is clear. At the limit when $r \to +\infty$, we can observe the following properties (which will be proved in Supplementary Material S2): 
\begin{proposition}\label{prop:infinityfitness} 
For any graph $G$, for any state $S \neq \emptyset, V$, and for any $p > 0$, we have:
$$\lim_{r \to +\infty} \Phi_S^M(r) = \lim_{r \to +\infty} \Phi_S^\beta(r,p) = 1,$$ 
where $\beta \in \{b,B\}$. We also have:
$$
\lim_{r \to +\infty} \tau_S^M(r) = \lim_{r \to +\infty} T_S^M(r) \quad \text{and} \quad
\lim_{r \to +\infty} \tau_S^\beta(r,p) = \lim_{r \to +\infty} T_S^\beta(r,p). 
$$
\end{proposition}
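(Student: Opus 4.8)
The plan is to analyze the degenerate process obtained in the formal limit $r = +\infty$, in which only mutants are ever selected for reproduction, and to transfer information from this limit process back to the finite-$r$ quantities by continuity of the associated linear systems. The feature exploited throughout is that the probability of selecting a resident for reproduction at a state with $i$ mutants, namely $\frac{N-i}{ri+N-i}$, is $O(1/r)$; hence in the limit the set of mutants can only grow, and every transition of the limit process sends a state $S$ to a superset $S' \supseteq S$. Since $G$ is connected and $p > 0$, from any nonempty proper $S$ there is a boundary mutant that spreads to a resident neighbour with positive probability, so the limit process started at $S \neq \emptyset$ is a monotone process that strictly grows at most $N$ times and therefore reaches $V$ with probability $1$.

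First I would establish the limits of the fixation probabilities. Each $\Phi_S(r)$ solves the linear system $\Phi_S = \sum_{S'} P_{S\to S'}(r)\,\Phi_{S'}$ with boundary values $\Phi_\emptyset = 0$ and $\Phi_V = 1$, and (being a bounded rational function of $r$) has a limit $\phi_S \in [0,1]$; the transition probabilities $P_{S\to S'}(r)$ converge to those $P^\infty_{S\to S'}$ of the limit process, so the $\phi_S$ satisfy the limiting system. I would then induct on $N - |S|$: for $S = V$ we have $\phi_V = 1$, and for $\emptyset \neq S \subsetneq V$ the limit transitions go only to supersets, so (using $\phi_{S'}=1$ for $S'\supsetneq S$) one gets $\phi_S = q\,\phi_S + (1-q)\cdot 1$ with $q = P^\infty_{S\to S} < 1$, the self-loop probability being strictly below $1$ precisely because a boundary mutant spreads with positive probability; hence $\phi_S = 1$. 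This proves the first identity uniformly for the Moran process and for $\beta \in \{b,B\}$, since only the qualitative structure just described is used.

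For the second identity I would compare the absorption-time and fixation-time linear systems. The unconditioned times satisfy $(I - Q(r))\,\tau = \mathbf{1}$, where $Q(r)$ is the substochastic transition matrix restricted to the transient states (the nonempty proper subsets). By the Doob $h$-transform, conditioning on fixation replaces $Q(r)$ by $\tilde Q(r)$ with entries $\tilde Q_{S,S'}(r) = P_{S\to S'}(r)\,\Phi_{S'}(r)/\Phi_S(r)$, and the fixation times solve $(I - \tilde Q(r))\,T = \mathbf{1}$. Using the first part, $\Phi_{S'}(r)/\Phi_S(r) \to 1$, so both $Q(r)$ and $\tilde Q(r)$ converge to the same matrix $Q^\infty$, the transient block of the limit process. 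Since absorption at $V$ is certain in the limit process, $Q^\infty$ has spectral radius strictly less than $1$, so $I - Q^\infty$ is invertible; by continuity of matrix inversion, both $\tau(r)$ and $T(r)$ converge to the common vector $(I - Q^\infty)^{-1}\mathbf{1}$, which is exactly the asserted equality of limits.

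The main obstacle is the convergence of the time vectors rather than of the probabilities. Writing $\tau_S = \Phi_S\, T_S + (1-\Phi_S)\,T_S^{\mathrm{ext}}$, where $T_S^{\mathrm{ext}}$ is the expected absorption time conditioned on the (vanishingly rare) event of extinction, one sees that $\tau_S - T_S = (1-\Phi_S)(T_S^{\mathrm{ext}} - T_S)$ pairs a vanishing factor $1-\Phi_S$ with a conditional time $T_S^{\mathrm{ext}}$ that could a priori diverge as $r \to +\infty$. Routing the argument instead through the $h$-transformed matrix $\tilde Q(r)$ sidesteps any direct estimate of this conditional tail: the one delicate point is the invertibility of $I - Q^\infty$, which is guaranteed by the monotonicity-plus-connectedness statement established in the first part. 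The decomposition above can then be invoked a posteriori as a sanity check, confirming that $(1-\Phi_S)\,T_S^{\mathrm{ext}} \to 0$.
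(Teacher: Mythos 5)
Your proof is correct and follows essentially the same route as the paper's argument (deferred to its Supplementary Material S2): your degenerate $r=+\infty$ chain in which only mutants reproduce is precisely the paper's \emph{red process}, and your treatment of the conditioning via the Doob $h$-transform $\tilde Q_{S,S'}(r) = P_{S\to S'}(r)\,\Phi_{S'}(r)/\Phi_S(r)$ is the same device the paper invokes when it identifies both limits with the fixation time of that process. Your packaging of the convergence through $\tau(r)=(I-Q(r))^{-1}\mathbf{1}$, $T(r)=(I-\tilde Q(r))^{-1}\mathbf{1}$ and continuity of matrix inversion at $I-Q^\infty$ (invertible because the monotone limit process fixates almost surely from every nonempty state) is a clean and rigorous formalization of that identification.
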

\noindent
 These quantities will be later interpreted as the fixation time of the so-called \emph{red process} associated to the Moran process or the proliferation processes, see Section~\ref{Smonotone} and Supplementary Material~S3.

As with the fixation probability, it is natural to be interested in the monotonicity of the mean absorption and fixation times for all processes considered here. Figure~\ref{fig:Moranorder6} shows the difference between unconditional and conditional times for the Moran process. 

\begin{figure}[h!]
  \centering
\subfigure[Mean absorption time $\tau^M_1(r)$]{
  \includegraphics[width=0.42\columnwidth]{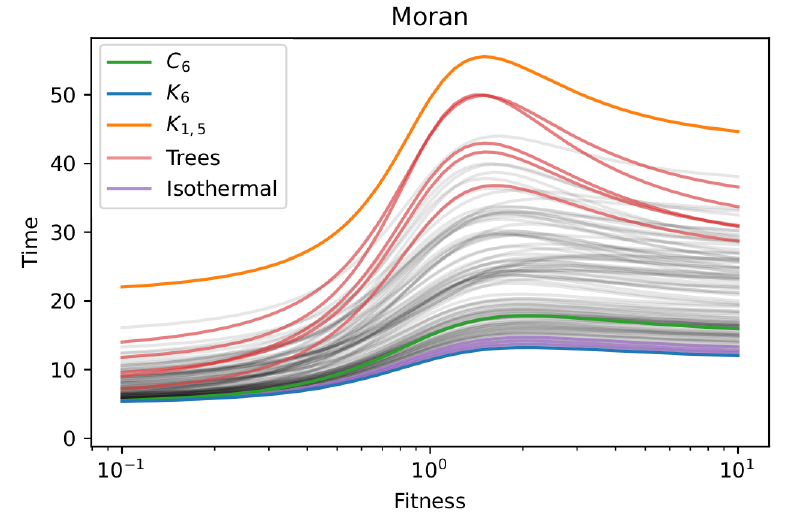}
  \label{fig:moran_unconditional}
}
\subfigure[Mean fixation time $T^M_1(r)$]{
  \includegraphics[width=0.42\columnwidth]{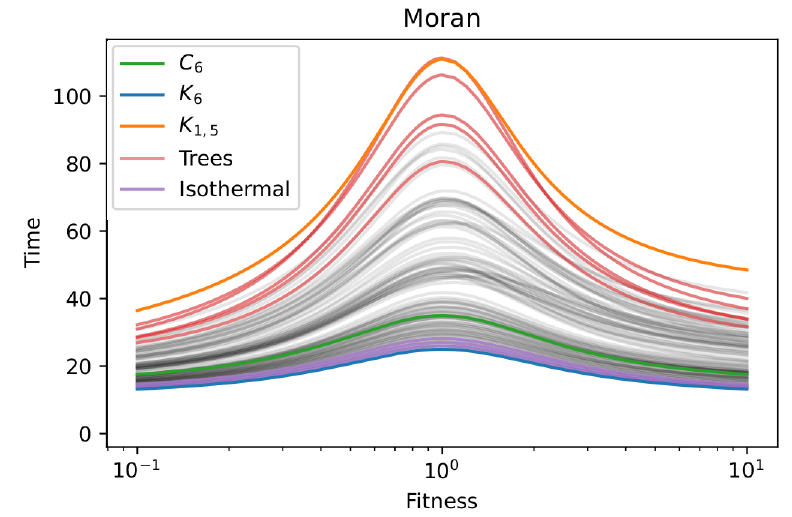}
  \label{fig:moran_conditional}
}
\caption{Mean absorption and fixation times for the Moran process on all graphs of order $6$. 
}
\label{fig:Moranorder6}
\end{figure}
As we will see in Section~\ref{Smonotone}, the symmetry of expected fixation times is only apparent, unless the graph is a clique or a cycle. The following theorem details the general situation:

\begin{theorem} \label{thm:decreaseMoran}
For any graph $G$ and for any vertex $v \in V$, there exist a fitness value $r = r(v) > 0$ depending on $v$ such that the fixation time $T^M_{\{v\}}$ decreases on $[r,+\infty)$. 
\end{theorem}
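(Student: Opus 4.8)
The plan is to reduce the statement to a single sign computation and then settle that sign by a first‑order perturbation analysis in $1/r$.

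First I would record that $T^M_{\{v\}}(r)$ is a rational function of $r$ on $(0,+\infty)$. The fixation probabilities $\Phi^M_S(r)$ solve a linear system whose coefficients are polynomials in $r$, and the products $g_S:=\Phi^M_S\,T^M_S$ solve the linear system $g_S=\Phi^M_S+\sum_{S'}P_{S\to S'}\,g_{S'}$ with $g_V=g_\emptyset=0$, again with polynomial entries. By Cramer's rule $\Phi^M_{\{v\}}$ and $g_{\{v\}}$, hence $T^M_{\{v\}}=g_{\{v\}}/\Phi^M_{\{v\}}$, are rational in $r$, finite and positive on $(0,+\infty)$. Its derivative is therefore rational with finitely many zeros, so $T^M_{\{v\}}$ is eventually strictly monotone or eventually constant. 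By Proposition~\ref{prop:infinityfitness} the limit $t_0:=\lim_{r\to+\infty}T^M_{\{v\}}(r)$ exists and is finite. If I can show $T^M_{\{v\}}(r)>t_0$ for all large $r$, then the eventually‑constant case (where $T^M_{\{v\}}\equiv t_0$) and the eventually‑increasing case (an increasing function with limit $t_0$ must stay below $t_0$) are both excluded, leaving only the eventually strictly decreasing case. Thus it suffices to prove that $T^M_{\{v\}}$ approaches its limit strictly from above.

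Next I would identify the limit and expand. Setting $\epsilon=1/r$, every transition probability is rational in $\epsilon$ and analytic at $\epsilon=0$, where the process degenerates to the monotone \emph{red process}: only mutant reproductions survive and the mutant count never decreases. By connectivity this pure‑birth chain reaches $V$ almost surely, so $t_0$ equals its mean absorption time from $\{v\}$. Writing the fixation‑conditioned (Doob) transition matrix as $\tilde P(\epsilon)=P^0+\epsilon P^1+O(\epsilon^2)$ with $P^0$ the red matrix, and expanding $T^M_S=t_0(S)+\epsilon\,t_1(S)+O(\epsilon^2)$ in $T^M_S=1+\sum_{S'\ne V}\tilde P_{S\to S'}T^M_{S'}$, the zeroth order recovers the red‑process times $t_0$, while the first order gives $(I-Q^0)\,t_1=\rho$, where $Q^0$ is the substochastic red transition matrix on the transient states $S\ne V$ and $\rho(S)=\sum_{S'}\tilde P^1_{S\to S'}\,t_0(S')$. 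Since the red chain is absorbing, the fundamental matrix $N^0=(I-Q^0)^{-1}=\sum_{k\ge0}(Q^0)^k$ has nonnegative entries, so $t_1=N^0\rho$ and the entire question reduces to the sign of $\rho$.

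The hard part will be proving $t_1>0$. Decomposing $\tilde P^1_{S\to S'}=P^1_{S\to S'}+P^0_{S\to S'}(\psi_S-\psi_{S'})$, where $\Phi^M_S=1-\epsilon\,\psi_S+O(\epsilon^2)$, and using that $\tilde P$ is stochastic (so $\sum_{S'}\tilde P^1_{S\to S'}=0$), one may write $\rho(S)=\sum_{S'}\tilde P^1_{S\to S'}\bigl(t_0(S')-t_0(S)\bigr)$ and read off two competing effects. The raw dynamics term $P^1$ transfers mass from forward (spreading) transitions to lazy and backward ones, i.e.\ to successors $S'$ with larger red‑time $t_0(S')$, pushing $\rho$ up; the conditioning term, with $\psi$ increasing as the number of residents grows, transfers mass toward fixation and pushes $\rho$ down. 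The crux is to show that the delaying dynamics dominates the accelerating conditioning, so that $\rho(S)\ge0$ for every transient $S$; then $t_1=N^0\rho>0$ because the fundamental matrix is strictly positive along the fixation paths issuing from $\{v\}$ in a connected graph with $N\ge2$. I expect this domination to be the genuine obstacle, since it requires controlling the extinction‑pressure potential $\psi$ (itself the solution of a red‑process Poisson equation) against $P^1$. I would first verify the inequality on cliques and cycles, where the chain collapses to a birth–death chain in the mutant count and both $\psi$ and $t_0$ are explicit, and then establish monotonicity of $\psi$ in the resident count to sign $\rho$ in general. A robust alternative, bypassing $\rho$ entirely, is a Harris‑type graphical coupling of the conditioned processes at two large fitness values showing directly that $T^M_{\{v\}}(r)\ge t_0$ for large $r$; combined with the rationality reduction of the first paragraph, this already forces $T^M_{\{v\}}$ to be eventually strictly decreasing and produces the threshold $r(v)$.
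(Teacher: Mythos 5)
Your first paragraph already contains, in substance, the paper's entire proof: the paper argues that the G-symmetry on $\calk_N$ can be broken at only finitely many states, so the fitness value at which the maximum fixation time is attained can shift only finitely often, and then invokes precisely your observation that a rational function has finitely many local maxima to conclude that $T^M_{\{v\}}(r)$ decreases for $r \geq r(v)$. Notably, the paper never carries out the extra step you correctly identify as logically necessary: eventual monotonicity of a rational function with finite limit $t_0=\lim_{r\to+\infty}T^M_{\{v\}}(r)$ does not by itself exclude eventual \emph{increase} toward $t_0$ from below, nor eventual constancy. Your reduction (it suffices to show $T^M_{\{v\}}(r)>t_0$ for all large $r$) is correct, and your perturbation setup in $\epsilon=1/r$ is sound: the Doob-transformed matrix is analytic at $\epsilon=0$ with the red process at zeroth order, the first-order equation $(I-Q^0)t_1=\rho$ with $t_1=N^0\rho$ is the right one, and the rewriting $\rho(S)=\sum_{S'}\tilde P^1_{S\to S'}\bigl(t_0(S')-t_0(S)\bigr)$ via the zero row sums of $\tilde P^1$ is valid.

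The gap is that the decisive inequality is never established: you leave the domination of the delaying dynamics over the accelerating conditioning (equivalently $\rho\ge 0$, or in the alternative route the Harris-type coupling yielding $T^M_{\{v\}}(r)\ge t_0$) explicitly as an expectation and a plan (``I expect this domination to be the genuine obstacle'', ``I would first verify the inequality on cliques and cycles''). By your own correct analysis, this is exactly the step on which the conclusion rests, so the proposal is a program rather than a proof. The program itself may also need repair: entrywise nonnegativity of $\rho$ is sufficient but not necessary for $t_1>0$, and it is not clear it holds state by state --- at states where the conditioning pressure $\psi$ varies steeply the conditioning term could locally dominate --- so you may be forced to prove the weighted positivity $(N^0\rho)(\{v\})>0$ directly; furthermore, your strict positivity claim for $t_1$ still requires $\rho$ to be strictly positive at some state reachable from $\{v\}$, which is asserted rather than shown, and if $t_1$ vanished one would need the next order in $\epsilon$. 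In fairness, the step you could not complete is also absent from the published argument, which passes from ``eventually monotone'' to ``decreasing'' without excluding approach to the red-process limit from below; your proposal is more careful in isolating what must actually be proved, but as written it does not prove the theorem.
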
 
For proliferation processes, we will prove the following monotonicity result: 
\begin{theorem} \label{thm:decreasebeta}
For any graph $G$, for any fitness value $r > 0$, and for any vertex $v \in V$, we have that 
$T^\beta_{\{v\}}(r,p') \geq T^\beta_{\{v\}}(r,p)$ 
if $0 < p' \leq p$.  
\end{theorem}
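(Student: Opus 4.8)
The plan is to realise both processes, with parameters $p$ and $p'\le p$, on a single probability space by Harris' graphical method, and to exhibit a coupling under which the mutant set of the $p$-process always contains that of the $p'$-process. The obstruction to a naive step-by-step coupling is the state-dependent selection rule: the vertex $v$ is chosen with weight $r/w_i$ or $1/w_i$, so the selection probabilities differ between the two processes as soon as their configurations differ. To bypass this I would pass to the continuous-time version whose embedded jump chain is exactly the discrete process: attach to each vertex $v$ two independent Poisson clocks, a \emph{mutant clock} of rate $r$ that fires an effective step only while $v$ carries a mutant (triggering the proliferation of $v$) and a \emph{resident clock} of rate $1$ that fires only while $v$ carries a resident (triggering the replacement of a random neighbour by a resident). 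At any instant exactly one clock is active at $v$, so the total effective rate is $ri+(N-i)=w_i$ and the jump chain reproduces the weights $p_v$. All clocks are shared by the two processes; moreover each mutant-clock firing at $v$ carries the shared randomness of the proliferation event — a single uniform variable in the Bernoulli case, or one uniform $U_{v,w}$ per neighbour $w$ in the binomial case — and each resident-clock firing carries the shared choice of the overwritten neighbour.

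The heart of the argument is the pathwise domination $S'_t\subseteq S_t$ for all $t\ge 0$ from the common initial state $\{v\}$, where $S_t$ and $S'_t$ denote the mutant sets of the $p$- and $p'$-processes. I would prove this by checking that each firing preserves the inclusion, according to the type of $v$ in the two processes. The only subtle case is a mutant firing with $v$ a mutant in both: the neighbours turned into mutants are $\{w:U_{v,w}<p\}$ in the $p$-process and $\{w:U_{v,w}<p'\}$ in the $p'$-process (all neighbours or none according to a single uniform in the Bernoulli case), and $p'\le p$ forces the latter set into the former, so together with $S'\subseteq S$ the inclusion persists. When $v$ is a mutant of the $p$-process but a resident of the $p'$-process, the mutant clock only enlarges $S$ and the resident clock only shrinks $S'$, and the remaining cases are symmetric. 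This is exactly the attractiveness that Harris' method exploits.

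From the domination one reads off that the $p$-process fixates no later than the $p'$-process: if $S'_t=V$ then $S_t=V$, while if $S_t=\emptyset$ then $S'_t=\emptyset$. Hence the event $A'$ that the $p'$-process fixates is contained in the event $A$ that the $p$-process fixates (so the fixation probability is nondecreasing in $p$), and on $A'$ the absorption times satisfy $\tau\le\tau'$ pathwise. The difficulty — and the main obstacle — is that $T^\beta_{\{v\}}$ is a \emph{conditional} expectation, $T^\beta_{\{v\}}(r,p)=\E[\tau\mid A]$ and $T^\beta_{\{v\}}(r,p')=\E[\tau'\mid A']$, taken with respect to the \emph{different} events $A$ and $A'$. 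The pathwise bound only gives $\E[\tau'\mid A']\ge\E[\tau\mid A']$, and since $A'\subsetneq A$ there is in general no inequality between $\E[\tau\mid A']$ and $\E[\tau\mid A]$; the unconditioned coupling therefore does not by itself control the conditioned times.

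To overcome this I would reformulate the conditioned dynamics through the \emph{red process} of Section~\ref{Smonotone}: the fixation time $T^\beta_{\{v\}}$ is the absorption time of the process conditioned on fixation — the Doob transform by the harmonic function $\Phi^\beta_S$ — which reaches $V$ almost surely, so that for it absorption and fixation coincide and conditioning disappears, exactly as in the limit $r\to+\infty$ of Proposition~\ref{prop:infinityfitness}. It then suffices to run the two red processes on the same graphical space and maintain $S'_t\subseteq S_t$; the monotone coupling would compare their (unconditioned, almost surely finite) fixation times directly and yield $T^\beta_{\{v\}}(r,p')\ge T^\beta_{\{v\}}(r,p)$. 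The delicate point, where the real work lies, is precisely that this inclusion survives the conditioning, i.e. that the red process stays attractive even though its transition weights involve the $p$-dependent probabilities $\Phi^\beta_S$; together with the bookkeeping that identifies the discrete step count with the effective reproduction events of the graphical construction, this is the crux of the argument.
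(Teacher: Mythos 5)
Your graphical construction, the shared-clock coupling, and the pathwise domination $S'_t\subseteq S_t$ are sound and agree with the paper's setup (the paper phrases the passage from $p$ to $p'\leq p$ as a random deletion of red arrows, equivalent to your shared uniforms). You also correctly isolate the true obstruction: the two fixation times are expectations conditioned on the \emph{different} events $\F'\subseteq\F$, so the pathwise bound alone proves nothing. The gap is in your resolution. You propose to couple the two \emph{conditioned} chains (the Doob transforms by $\Phi^\beta_S(r,p')$ and $\Phi^\beta_S(r,p)$) on a common graphical space and maintain the inclusion there, and then you yourself concede that whether ``this inclusion survives the conditioning'' is ``the crux of the argument'' --- that is, the decisive step is left unproven. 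And it is genuinely problematic: the $h$-transform reweights each transition by a ratio of fixation probabilities that differs between the two parameter values and destroys the local product structure that Harris' method needs; conditioned chains are not attractive in general, and you exhibit no shared-randomness realisation of the two conditioned dynamics. You also conflate the paper's \emph{red process} with the Doob transform: in the paper the red process is the pure-growth process obtained by deleting all blue marks and arrows (it fixates trivially and captures the $r\to+\infty$ limit of Proposition~\ref{prop:infinityfitness}); it is not the conditioned chain.

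The paper sidesteps any coupling of conditioned processes by an algebraic decomposition that your write-up is missing: it writes $T^\beta_{S_0}(r,p)=\E[t^\beta_{\n}]/\Phi^\beta_{S_0}(r,p)$, a \emph{restricted} (not conditional) expectation over the fixation event divided by the fixation probability, and then chains two separate monotonicities as in \eqref{eq:monotonep}: the denominator satisfies $\Phi^\beta_{S_0}(r,p')\leq\Phi^\beta_{S_0}(r,p)$ --- a fact your own coupling already gives you via $\F'\subseteq\F$, but which you never use in the final assembly --- while the numerators are compared through the unconditioned coupling, reduced to the red process via the inequalities $\bar t^{\beta}_n(p')\geq\bar t^{\beta}_n(p)$ on coupled trajectories, with the observation that the blue marks act identically on both processes. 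No attractiveness of the Doob transform is ever needed. If you replace your final paragraph by this decomposition --- keeping your coupling for the numerator and invoking the monotonicity of $\Phi^\beta$ in $p$ for the denominator --- your argument closes; as written, the conclusion rests on an unestablished monotone coupling of conditioned dynamics.
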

From this result, we will deduce the existence of a second phase transition $p_t$ for the mean fixation time of proliferating mutant individuals: 
\begin{theorem} \label{thm:ptbeta}
For any graph $G$ and for any fitness value $r > 0$, there exists a prolife\-ration parameter $p^\beta_t = p^\beta_t(r)$ such that 
$\Rho^\beta(r,p) \leq 1$ if $p \leq p^\beta_t$ and 
$\Rho^\beta(r,p) \geq 1$ if $p \geq p^\beta_t$.
\end{theorem}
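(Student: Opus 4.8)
The plan is to deduce Theorem~\ref{thm:ptbeta} from the monotonicity established in Theorem~\ref{thm:decreasebeta}. First I would pass from vertices to the mean by averaging: since $T^\beta_1(r,p)=\frac1N\sum_{v\in V}T^\beta_{\{v\}}(r,p)$ and Theorem~\ref{thm:decreasebeta} makes each summand non-increasing in $p$ on $(0,1]$, the mean fixation time $T^\beta_1(r,p)$ is itself non-increasing in $p$. Because the numerator $T^M_1(r)$ is independent of $p$ and the denominator is positive, the fixation rate
\[
\Rho^\beta(r,p)=\frac{T^M_1(r)}{T^\beta_1(r,p)}
\]
is therefore \emph{non-decreasing} in $p$. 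This single monotonicity is what forces exactly one transition between the time-disadvantageous regime $\Rho^\beta\le 1$ and the time-advantageous regime $\Rho^\beta\ge 1$.

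Next I would note that $p\mapsto\Rho^\beta(r,p)$ is continuous on $(0,1]$: the one-step transition probabilities of both Bernoulli and binomial proliferation are polynomials in $p$, so the conditional times $T^\beta_{\{v\}}(r,p)$ solving the associated (conditioned) linear systems are rational in $p$ without poles on $(0,1]$, and hence so is their average $T^\beta_1(r,p)$. With monotonicity and continuity secured, I would set
\[
p^\beta_t(r)=\sup\{\,p\in(0,1]:\Rho^\beta(r,p)\le 1\,\},
\]
using the convention that the supremum of the empty set is $0$. For $p>p^\beta_t$ the value lies outside the defining set, so $\Rho^\beta(r,p)>1$; for $p<p^\beta_t$ there exists $q\in(p,p^\beta_t]$ with $\Rho^\beta(r,q)\le 1$, whence monotonicity gives $\Rho^\beta(r,p)\le\Rho^\beta(r,q)\le 1$. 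Continuity then forces $\Rho^\beta(r,p^\beta_t)=1$ whenever $p^\beta_t$ is interior, so the two inequalities meet precisely at the threshold, which is the assertion of the theorem.

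The only genuinely delicate point is the behaviour at $p=1$, needed to rule out the degenerate case in which $\Rho^\beta(r,p)<1$ for every $p$ (where no admissible threshold would exist). I would verify that maximal proliferation is never time-disadvantageous, i.e. $\Rho^\beta(r,1)\ge 1$. At $p=1$ a selected mutant converts \emph{all} of its neighbours simultaneously, so on a common Harris graphical representation the mutant set dominates the one driven by the single-neighbour Moran update; running the coupling exactly as in the proof of Theorem~\ref{thm:decreasebeta} shows that fixation under $p=1$ occurs no later, giving $T^\beta_1(r,1)\le T^M_1(r)$ and hence $\Rho^\beta(r,1)\ge 1$. The explicit formulas for complete, cycle and star graphs in Sections~\ref{SBernoulli}~and~\ref{Sbinomial} confirm this directly. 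I expect the real difficulty of the whole development to lie not in Theorem~\ref{thm:ptbeta}, which is essentially a corollary, but in Theorem~\ref{thm:decreasebeta} itself: it is the graphical-coupling argument there that simultaneously delivers the monotonicity in $p$ and the $p=1$ comparison with the Moran process.
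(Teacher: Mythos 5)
Your proof is correct, and its skeleton coincides with the paper's: monotonicity of $\Rho^\beta(r,\cdot)$ deduced from Theorem~\ref{thm:decreasebeta} (the paper's Lemma~\ref{lem:1}), an endpoint comparison $\Rho^\beta(r,1)\geq 1$ via a Harris-type coupling of the $p=1$ proliferation process with the Moran process (the paper's Lemma~\ref{lem:3}, where the hypothesis $G\neq K_2$ only serves to make the coupling strict somewhere; for $K_2$ one gets equality, which your weak inequality already covers), and a threshold extracted from monotonicity. You deviate in two substantive ways. First, the paper has a third ingredient, Lemma~\ref{lem:2}: for each vertex $v$ the expected sojourn time in the initial state $\{v\}$ diverges as $p\to 0$, so $\lim_{p\to 0}\Rho^\beta(r,p)<1$; you instead absorb the small-$p$ end into the convention $\sup\emptyset=0$. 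This suffices for the literal statement of the theorem, but it proves strictly less: the paper's lemma shows the set $\{p:\Rho^\beta(r,p)\leq 1\}$ is never empty, i.e.\ $p^\beta_t>0$ and a time-disadvantageous regime genuinely exists for every graph and every $r$, which is the content behind the three-regime picture announced in the introduction. Second, you make explicit a point the paper glosses over: continuity of $p\mapsto\Rho^\beta(r,p)$ via rationality of $T^\beta_{\{v\}}(r,p)$ in $p$ with no poles on $(0,1]$ (the conditioned linear system is nonsingular there since absorption is certain and $\Phi^\beta_{\{v\}}(r,p)>0$). This is actually needed: the paper's conclusion that there is a value with $\Rho^\beta(r,p_t)=1$ does not follow from monotonicity alone, since a monotone function could jump over the value $1$, and the theorem's two weak inequalities at $p=p_t$ force $\Rho^\beta(r,p_t)=1$ whenever $p_t$ is interior. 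So your write-up is more careful at the threshold and less informative at $p\to 0$; adding the paper's sojourn-time observation would recover the stronger conclusion $p^\beta_t\in(0,1]$ at no real cost.
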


\section{Absorption and fixation times for Bernoulli proliferation}
\label{SBernoulli}

In this section we shall describe the absorption and fixation times for Bernoulli proliferation on cliques, cycles and star graphs. The method we use is quite similar to that used in \cite{BHR,H} to compute the absorption and fixation times for the Moran process on graphs, although Bernoulli proliferation simplifies some computations. 

\subsection{Absorption time for Bernoulli proliferation on complete graphs}

Let $G=K_N$ be the complete graph of order $N$. Since $K_N$ is symmetric, if the starting state $S$ contains $i$ mutants, we can assume $S =\{1,\dots, i\}$ as in  Figure~\ref{fig:initialcomplete}. Then $S$ can only evolve to a state with $i-1$ mutants (equal to $\{1,\dots, i-1\}$ up to symmetry), $S$ or $V$. The non-trivial transition probabilities are given by 
\begin{equation} \label{eq:prob_complete}
\begin{split}
  p_{i,i-1} & =  \frac{\textstyle N-i}{\textstyle w_i} \frac{\textstyle i}{\textstyle N-1}  \\
  p_{i,N} & =  \frac{\textstyle rpi}{\textstyle  w_i}  \\
  p_{i,i}  & =    1 - p_{i,i-1} -  p_{i,N}.
\end{split} 
\end{equation}
For $i=1,2, \ldots, N-1$, the absorption times $\tau_i$ satisfy the equations:
\begin{align} \label{eq:ATcomp}
\begin{split}
\tau_i  & = p_{i,i-1} \tau_{i-1} + p_{i,N} \tau_N + (1 -  p_{i,i-1} -  p_{i,N}) \tau_i + 1 \\
 & = p_{i,i-1} \tau_{i-1} + (1 -  p_{i,i-1} -  p_{i,N}) \tau_i + 1 
\end{split}
\end{align}
as $\tau_N =0$, see Figure~\ref{fig:statecomplete}. Thus we get:
 \begin{align}  \label{eq:eq_abs_complete}
\begin{split}
     \tau_i & = \frac{p_{i,i-1}}{p_{i,i-1} + p_{i,N}} \tau_{i-1} +  \frac{1}{p_{i,i-1} + p_{i,N}} \\
            & = \frac{1 - \frac{i-1}{N-1}}{1+rp - \frac{i-1}{N-1}} \tau_{i-1} + \frac{1}{1+rp - \frac{i-1}{N-1}}\frac{w_i}{i}. 
\end{split}
 \end{align}
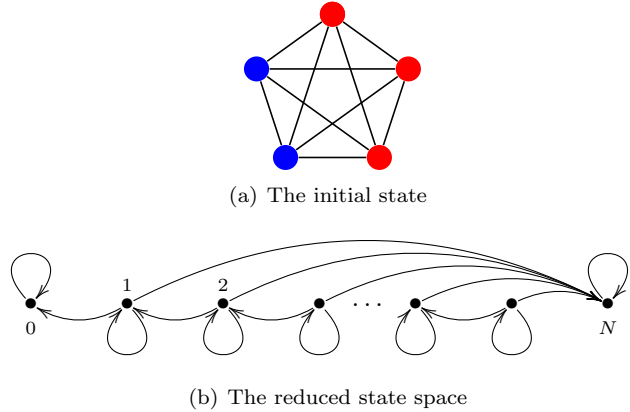
\begin{figure}[b]
  \centering
  \subfigure[The initial state]{
    \label{fig:initialcomplete}
    \begin{tikzpicture}[x=1.0cm,y=1.0cm,scale=0.7]
      \node[mutant] (A0) at (90:1.5)   {};
      \foreach \a in {0,1,2}
        \node[mutant] (A\a) at (90-72*\a:1.5) {};
\foreach \b in {3,4}
        \node[resident] (A\b) at (90-72*\b:1.5) {};

      \foreach  \a in {0,1,2,3}
        \foreach \b in {1,2,3,4}{
          \ifnum\a<\b
            \draw [
            semithick] (A\a) -- (A\b);
          \fi
        }

      \path[use as bounding box] 
        ($(current bounding box.north east)+(1,0)$) rectangle 
          ($(current bounding box.south west)-(1,0)$); 

    \end{tikzpicture}%
  }
  \subfigure[The reduced state space]{
    \label{fig:statecomplete}
    \begin{tikzpicture}[x=1.0cm,y=1.0cm,scale=0.75]
      \clip(-6,-1.2) rectangle (5.3,1.2);
      \node at (-0.5,0) {$\xy
          0;/r.15pc/: 
          (-80,0)*{\bullet}="A0"; 
          (-80,-5)*{\scriptstyle 0};
          (-60,0)*{\bullet}="B0"; 
          (-60,4)*{\scriptstyle 1};
          (-40,0)*{\bullet}="C0"; 
          (-40,4)*{\scriptstyle  2};
          (-20,0)*{\bullet}="D0"; 
          (-10,0)*{\dots};
          (0,0)*{\bullet}="E0"; 
          (20,0)*{\bullet}="F0"; 
          (40,0)*{\bullet}="G0"; 
          (40,-5)*{\scriptstyle N};
          {\ar@/^0.5pc/ "B0"; "A0"}; 
          {\ar@/^0.5pc/ "C0"; "B0"}; 
          {\ar@/^0.5pc/ "D0"; "C0"};  
          {\ar@/^0.5pc/ "F0"; "E0"}; 
          {\ar@/^2pc/ "B0"; "G0"}; 
          {\ar@/^1.6pc/ "C0"; "G0"}; 
          {\ar@/^1.2pc/ "D0"; "G0"}; 
          {\ar@/^0.8pc/ "E0"; "G0"}; 
          {\ar@/^0.4pc/ "F0"; "G0"}; 
          {\ar@(ul,ur) "A0"; "A0"}; 
          {\ar@(dr,dl) "B0"; "B0"}; 
          {\ar@(dr,dl) "C0"; "C0"};  
          {\ar@(dr,dl) "D0"; "D0"};  
          {\ar@(dr,dl) "E0"; "E0"};  
          {\ar@(dr,dl) "F0"; "F0"}; 
          {\ar@(ul,ur) "G0"; "G0"}; 
        \endxy$};
    \end{tikzpicture}
  }
  \caption{Bernoulli proliferation on a complete graph: 
    \subref{fig:initialcomplete}~the initial state $S=\{1,2,3\}$, 
    \subref{fig:statecomplete}~the reduced state space.}
\end{figure}
Reasoning by recurrence, for $i=1,2, \ldots, N-1$, we deduce: 
\begin{equation} \label{eq:abs_complete}
  \tau_i  = \sum_{j=0}^{i-1}  \frac{1}{[ 1+rp- \frac{j}{N-1}]} \frac{w_{j+1}}{j+1}\prod_{k=j+1}^{i-1} \frac{1 - \frac{k}{N-1}}{1+rp - \frac{k}{N-1}}, 
\end{equation}
where the empty product (for $j=i-1$) is equal to $1$. 

This time can be compared with the absorption time for the Moran process
on  the same graph \cite{BHR, H}: 
\begin{equation} \label{eq:abs_Moran1}
  \tau^M_i  = \Phi^M_i \sum_{j=1}^{N-1}  \frac{1}{p^M_{j,j+1}} \sum_{k=j}^{N-1}\prod_{l=j+1}^{k} \gamma_l -\sum_{j=1}^{i-1}  \frac{1}{p^M_{j,j+1}} \sum_{k=j}^{i-1}\prod_{l=j+1}^{k} \gamma_l \\
\end{equation}
where the empty sum (for $i=1$) is equal to $0$. Here $p^M_{j,j+1} = \frac{rj}{w_j}\frac{N-j }{N-1}$ for all $j=1,\dots,N-1$ and $\gamma_l = 1/r$ for all $l=1,\dots,N-1$.The fixation probability for the Moran process is given by
\begin{equation} \label{eq:FP_Moran}
  \Phi_i ^M = \frac{1+\sum_{j=1}^{i-1}\prod_{k=1}^{j} \gamma_k}{1+\sum_{j=1}^{N-1}\prod_{k=1}^{j} \gamma_k} = \frac{\displaystyle 1 - \frac{1}{r^i}}{\displaystyle1 - \frac{1}{r^N}} = r^{N-i} \frac{\displaystyle r^i-1}{\displaystyle r^N-1} .
\end{equation}
So that we have: 
  \begin{align}   \label{eq:abs_Moran}
\begin{split}
    \tau^M_i  &
    =  r^{N-i} \frac{\displaystyle r^i-1}{\displaystyle r^N-1} 
    \Big( \sum_{j=1}^{N-1}  \frac{1}{p^M_{j,j+1}} 
    \frac{\displaystyle 1}{\displaystyle r^{N-j-1}} 
    \frac{\displaystyle r^{N-j} - 1}{\displaystyle r -1} \Big)  
     \\ 
    & - \sum_{j=1}^{i-1}  \frac{1}{p^M_{j,j+1}} \frac{\displaystyle 1}{\displaystyle r^{i-j-1}} \frac{\displaystyle r^{i-j} - 1}{\displaystyle r -1},  \end{split}
\end{align}
where in the last term the empty sum (for $i=1$) remains equal to $0$.
In particular, the \emph{absorption time rate} is
\begin{align} \label{eq:abs_rate}
\begin{split}
\rho(r,p) = \frac{\tau_1^M}{\tau_1~}
& =  \frac{\displaystyle r^{N-1}\frac{\displaystyle r-1}{\displaystyle r^N-1} 
\Big( \sum_{j=1}^{N-1}  \frac{w_j}{rj} \frac{N-1}{N-j} 
\frac{1}{r^{N-j-1}} 
\frac{r^{N-j} - 1}{r -1} \Big)}{\displaystyle \frac{r+N-1}{1+rp}}  \\
 & =  (1+rp)\frac{N-1}{r+N-1} 
\Big( \sum_{j=1}^{N-1}
\frac{rj+N-j}{rj(N-j)} \frac{r^N - r^j}{r^N-1\phantom{o}}  \Big)  \end{split}
\end{align}
and more precisely
\begin{align} 
\label{eq:abs_criticalrate}
\rho(r,p_c) = \frac{r^{N-1} + \cdots + r + 1}{r^{N-2} + \cdots + r +1}\frac{N-1}{r+N-1} 
\Big( \sum_{j=1}^{N-1}
\frac{rj+N-j}{rj(N-j)}\frac{r^N - r^j}{r^N-1\phantom{o}} \Big)
\end{align}
when $p$ is equal to the critical value 
$$p_c = \frac{r^{N-2}}{r^{N-2} + \cdots + r +1}$$
described in \cite{AlcaldeGuerberoffLozano2022}. 
 Although in this case proliferation does not imply an advantage or a disadvantage for the fixation of mutant individuals, we have the following result: 
\begin{proposition} \label{prop:AR_complete}
Let $$H_{N-1} = \sum_{j=1}^{N-1} \frac{1}{N-j}$$ 
be the $(N-1)$-harmonic number. Then the absorption time for a single mutant individual in the Moran process on a complete graph approximates $(N-1)H_{N-1}$ times the absorption time for a single proliferating mutant individual when $p=p_c$ and $r \to +\infty$, that is, 
$\lim_{r \to +\infty} \rho(r,p_c)  = (N-1) H_{N-1}$. \qed
\end{proposition}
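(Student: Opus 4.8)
The plan is to read the limit directly off the explicit expression \eqref{eq:abs_criticalrate} for $\rho(r,p_c)$, since this is a rational function of $r$ whose behaviour as $r\to+\infty$ can be determined factor by factor. First I would rewrite the rational prefactor by recognising the two truncated geometric sums, namely
\[
\frac{r^{N-1} + \cdots + r + 1}{r^{N-2} + \cdots + r + 1} = \frac{r^N - 1}{r^{N-1} - 1},
\]
so that the two leading factors of \eqref{eq:abs_criticalrate} combine into
\[
\frac{r^N - 1}{r^{N-1} - 1}\cdot \frac{N-1}{r + N - 1} = \frac{(N-1)(r^N - 1)}{(r^{N-1}-1)(r+N-1)}.
\]
Both numerator and denominator are polynomials of degree $N$, with leading coefficients $N-1$ and $1$ respectively, so this prefactor tends to $N-1$ as $r\to+\infty$.

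Next I would treat the sum $\sum_{j=1}^{N-1} \frac{rj+N-j}{rj(N-j)}\,\frac{r^N - r^j}{r^N - 1}$. Since this is a finite sum, the limit passes inside it and it suffices to find the limit of each summand. For fixed $j \in \{1,\dots,N-1\}$ the factor $\frac{r^N - r^j}{r^N - 1}\to 1$ because $j < N$, while
\[
\frac{rj + N - j}{rj(N-j)} = \frac{1}{N-j}\cdot \frac{rj + N - j}{rj} \longrightarrow \frac{1}{N-j}.
\]
Hence each summand tends to $\frac{1}{N-j}$, and the whole sum tends to $\sum_{j=1}^{N-1}\frac{1}{N-j} = H_{N-1}$ by the definition of the $(N-1)$-harmonic number given in the statement. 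Multiplying the two limits yields $\lim_{r\to+\infty}\rho(r,p_c) = (N-1)H_{N-1}$, as claimed.

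The computation involves no genuine obstacle beyond careful bookkeeping of leading terms; the only point needing attention is confirming that the additive correction $N-j$ in the numerator of $\frac{rj+N-j}{rj(N-j)}$ is negligible against $rj$, so that each summand contributes exactly $\frac{1}{N-j}$ and the harmonic number $H_{N-1}$ emerges cleanly rather than a perturbed version of it. Because everything reduces to the limit of an explicit rational function and the sum is finite, no interchange-of-limit argument beyond the finiteness of the index set is required.
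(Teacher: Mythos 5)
Your proof is correct and follows essentially the same route as the paper, which states the proposition with an immediate \(\qedsymbol\) precisely because the limit is meant to be read directly off the explicit formula \eqref{eq:abs_criticalrate}: the prefactor tends to \(N-1\) and each summand tends to \(\frac{1}{N-j}\), giving \((N-1)H_{N-1}\). Your bookkeeping of the geometric sums, the degree-\(N\) leading coefficients, and the finiteness of the sum (so no interchange-of-limits issue arises) fills in exactly the details the paper leaves implicit.
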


\begin{remark} \label{rem:redcomplete}
Similarly, we also deduce that $\lim_{r \to +\infty} \tau_1^M = \lim_{r \to +\infty} T_1^M = (N-1) H_{N-1}$.
\end{remark}

\subsection{Fixation time for Bernoulli proliferation on complete graphs}

As explained in \cite{Antal2006}, \cite{BHR} and \cite{H}, for $i=1,2, \ldots, N-1$, the fixation times $T_i$ satisfy the equations:
\begin{align}   \label{eq:TFcomp}
\begin{split}
    \Phi_i T_i  & = p_{i,i-1} \Phi_{i-1} T_{i-1} + p_{i,N} \Phi_N T_N + (1 -  p_{i,i-1} -  p_{i,N}) \Phi_i T_i  + \Phi_i \\
   & = p_{i,i-1}\Phi_{i-1} T_{i-1} + (1 -  p_{i,i-1} -  p_{i,N}) \Phi_i T_i  + \Phi_i 
\end{split}
 \end{align}
as $T_N =0$. We use the convention $\Phi_{0} T_{0} =0$ as in \cite{H}. As before, we get:

\begin{align}  \label{eq:eq_fix_complete}
\begin{split}
     T_i  & = \frac{p_{i,i-1}}{p_{i,i-1} + p_{i,N}} \frac{\Phi_{i-1}}{\Phi_i} T_{i-1} + 
      \frac{1}{p_{i,i-1} + p_{i,N}} \frac{\Phi_i}{\Phi_i} \\
          &  = \frac{1 - \frac{i-1}{N-1}}{1+rp - \frac{i-1}{N-1}} \frac{\Phi_{i-1}}{\Phi_i} T_{i-1} + 
      \frac{1}{1+rp - \frac{i-1}{N-1}}\frac{w_i}{i}. 
\end{split}
\end{align}
Reasoning by recurrence, for $i=1,2, \ldots, N-1$, we deduce: 
\begin{equation} \label{eq:fix_complete}
  T_i  = \sum_{j=0}^{i-1}  \frac{1}{[1+rp- \frac{j}{N-1}]} \frac{w_{j+1}}{j+1} \frac{\Phi_{j+1}}{\Phi_i} \prod_{k=j+1}^{i-1} \frac{1 - \frac{k}{N-1}}{1+rp - \frac{k}{N-1}}.
\end{equation}
Once again, we can compare to classical situation (also described in \cite{H}) where 
\begin{equation} \label{eq:fix_Moran}
  T^M_i  =  \sum_{j=1}^{N-1}  \frac{\Phi^M_j}{p^M_{j,j+1}} \sum_{k=j}^{N-1}\prod_{l=j+1}^{k} \gamma_l - \frac{1}{\Phi^M_i}\sum_{j=1}^{i-1}  \frac{\Phi^M_j}{p^M_{j,j+1}} \sum_{k=j}^{i-1}\prod_{l=j+1}^{k} \gamma_l.
\end{equation}
The conventions for sums and products are the same as before. Now observe
$$
T_1 = \frac{r+N-1}{1+rp} = \tau_1
$$
as only three states ($1$ and the absorbing states $0$ and $N$) are involved in this computation. As before, the \emph{fixation time rate} is
\begin{align} \label{eq:fix_rate}
\begin{split}
\Rho(r,p) = \frac{T_1^M}{T_1~} & =  (1+rp)\frac{N-1}{r+N-1} 
\Big( \sum_{j=1}^{N-1} \frac{\Phi^M_j}{\Phi^M_1}
\frac{rj+N-j}{rj(N-j)}\frac{r^N - r^j}{r^N-1\phantom{o}}  \Big)    \\ 
& = (1+rp)\frac{N-1}{r+N-1} 
\Big( \sum_{j=1}^{N-1} \Phi^M_j
\frac{rj+N-j}{rj(N-j)} \frac{r^N - r^j\phantom{m}}{r^N- r^{N-1}}  \Big)  \\
& = (1+rp)\frac{N-1}{r+N-1} 
\Big( \sum_{j=1}^{N-1} \frac{r^N(r^j -1)}{r^j(r^N-1)}
\frac{rj+N-j}{rj(N-j)}\frac{r^N - r^j\phantom{m}}{r^N- r^{N-1}}  \Big), \end{split}
\end{align}
and hence the \emph{critical fixation time rate} is given by
\begin{align} \label{eq:fix_criticalrate}
\begin{split}
\Rho(r,p_c)  & =  \frac{r^{N-1} + \cdots + r + 1}{r^{N-2} + \cdots + r +1}\frac{N-1}{r+N-1} 
\big(  \sum_{j=1}^{N-1}
\frac{\Phi^M_j}{\Phi^M_1} \frac{rj+N-j}{rj(N-j)}\frac{r^N - r^j}{r^N-1} \big)  \\ 
 & =  \frac{r^{N-1} + \cdots + r + 1}{r^{N-2} + \cdots + r +1}\frac{N-1}{r+N-1} 
\big( \sum_{j=1}^{N-1} \frac{r^N(r^j -1)}{r^j(r^N-1)}
\frac{rj+N-j}{rj(N-j)}\frac{r^N - r^j\phantom{m}}{r^N- r^{N-1} }  \big). 
\end{split}
\end{align}
Observe that for $r=1$ we get  $\rho(1,p_c) = H_{N-1}$ and $\Rho(1,p_c)= N-1$. Finally, comparing \eqref{eq:fix_criticalrate} with \eqref{eq:abs_criticalrate}, as the rate $\Phi^M_j/\Phi^M_1$ converges to $1$ when $r \to +\infty$ (or applying directly Proposition~\ref{prop:infinityfitness}), we deduce immediately the following result:

\begin{proposition}  \label{prop:FR_complete}
The fixation time for a single mutant individual in the Moran process on a complete graph approximates $(N-1)H_{N-1}$ times the fixation time for a single proliferating mutant individual when $p=p_c$ and $r \to +\infty$, that is, 
$\lim_{r \to +\infty} \Rho(r,p_c)  = (N-1)H_{N-1}$. \qed
\end{proposition}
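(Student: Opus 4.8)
The plan is to leverage the fact that the critical fixation time rate $\Rho(r,p_c)$ displayed in \eqref{eq:fix_criticalrate} differs from the critical absorption time rate $\rho(r,p_c)$ displayed in \eqref{eq:abs_criticalrate} \emph{only} through the presence of the weights $\Phi^M_j/\Phi^M_1$ inside the summation; the prefactor outside the bracket and the remaining factors in each summand are literally identical in the two expressions. Since Proposition~\ref{prop:AR_complete} already establishes $\lim_{r \to +\infty}\rho(r,p_c) = (N-1)H_{N-1}$, it will suffice to show that inserting these weights does not alter the limit, which amounts to checking that each weight tends to $1$ as $r \to +\infty$.

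First I would compute the weights explicitly. Using the closed form $\Phi^M_i = r^{N-i}(r^i-1)/(r^N-1)$ from \eqref{eq:FP_Moran}, a direct cancellation of the common denominator gives
\[
\frac{\Phi^M_j}{\Phi^M_1} = \frac{r^{N-j}(r^j-1)}{r^{N-1}(r-1)} = \frac{r - r^{1-j}}{r-1}
\]
for each $j = 1,\dots,N-1$. As $r \to +\infty$, both numerator and denominator are asymptotic to $r$, so $\Phi^M_j/\Phi^M_1 \to 1$ for every fixed $j$. Alternatively, and more quickly, one may invoke Proposition~\ref{prop:infinityfitness}, which already guarantees $\Phi^M_S(r) \to 1$ for every intermediate state $S \neq \emptyset, V$; the ratio of any two such fixation probabilities then converges to $1$ with no explicit computation at all.

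The summation in \eqref{eq:fix_criticalrate} runs over the finite index set $j = 1,\dots,N-1$, so the limit may be passed inside the sum term by term. Replacing each weight $\Phi^M_j/\Phi^M_1$ by its limit $1$ transforms the bracketed expression of \eqref{eq:fix_criticalrate} into exactly the bracketed expression of \eqref{eq:abs_criticalrate}, and the common prefactor is unaffected. Therefore $\lim_{r\to +\infty}\Rho(r,p_c) = \lim_{r \to +\infty}\rho(r,p_c) = (N-1)H_{N-1}$, which is the claim.

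I do not expect any genuine obstacle here, since the statement is essentially a corollary of Proposition~\ref{prop:AR_complete}. The only point requiring a word of justification is the termwise passage to the limit, and this is automatic precisely because finitely many summands are involved and each converges individually. Had the number of terms grown with $r$, one would need a uniform or dominated-convergence argument, but that complication cannot arise on a complete graph of fixed order $N$.
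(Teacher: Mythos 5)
Your proof is correct and follows essentially the same route as the paper, which likewise deduces Proposition~\ref{prop:FR_complete} immediately by comparing \eqref{eq:fix_criticalrate} with \eqref{eq:abs_criticalrate} and observing that each ratio $\Phi^M_j/\Phi^M_1$ converges to $1$ as $r \to +\infty$ (or, as you also note, by invoking Proposition~\ref{prop:infinityfitness} directly). Your extra details, namely the explicit simplification $\Phi^M_j/\Phi^M_1 = (r - r^{1-j})/(r-1)$ and the remark that termwise passage to the limit is automatic for a sum over the fixed finite index set $j = 1, \dots, N-1$, simply make the paper's one-line deduction fully explicit.
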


\noindent
Propositions~\ref{prop:AR_complete}~and~\ref{prop:FR_complete} can be graphically visualised in Supplementary Material S5 for some values of $N$.

\subsection{Absorption time for Bernoulli proliferation on cycles}

For a cycle, we must distinguish when the starting state has one element or more than one element. In the first case, by symmetry, we can assume that the starting state is $S=\{1\}$. It may evolve to states 
with 0 and 3 elements as in Figure~\ref{fig:initialcycle} (or also remains at the same state).
The non-trivial transition probabilities 
are given by 
\begin{align*}
  p_{1,0} &= \frac{1}{w_1} \\
  p_{1,3} &= \frac{rp}{w_1} \\
  p_{1,1} &= 1 - (p_{1,0} + p_{1,3}), 
\end{align*} 
where $w_1 = r + N -1$. As depicted in Figure~\ref{fig:statecycle}, the first equation for the absorption time is given by 
\begin{align} \label{eq:AT1cycle}
\tau_1  
=  p_{1,3} \tau_3+ (1 -  p_{1,0} - p_{1,3}) \tau_1 + 1 
\end{align}
as $\tau_{0}=0$, and then 
\begin{equation} \label{eq:AT3cycle}
\tau_3 = \frac{1+rp}{rp} \tau_1 - \frac{w_1}{rp}.
\end{equation}
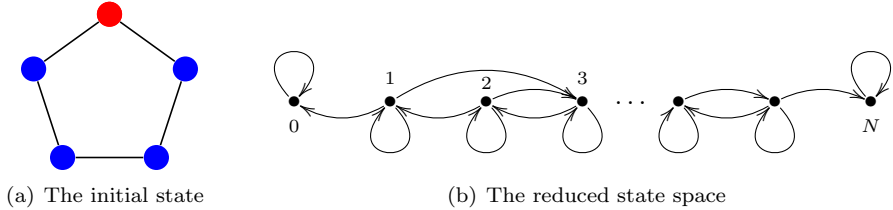
\begin{figure}[b]
  \centering
  \subfigure[The initial state]{
    \label{fig:initialcycle}
    \begin{tikzpicture}[x=1.0cm,y=1.0cm,scale=0.7]
          \foreach \a in {0}
        \node[mutant] (A\a) at (90-72*\a:1.5) {};
\foreach \b in {1,2,3,4}
        \node[resident] (A\b) at (90-72*\b:1.5) {};
      \draw [semithick] (A0)--(A1)--(A2)--(A3)--(A4)--(A0);
      \path[use as bounding box] 
        ($(current bounding box.north east)+(1,0)$) rectangle 
        ($(current bounding box.south west)-(1,0)$); 

\node[mutant] (A0) at (90:1.5)   {};

    \end{tikzpicture}%
  }
  \subfigure[The reduced state space]{
    \label{fig:statecycle}
    \begin{tikzpicture}[x=1.0cm,y=1.0cm,scale=0.75]
      \clip(-6,-1.2) rectangle (5.3,1.2);
      \node at (-0.5,0) {$\xy
        0;/r.15pc/: 
        (-80,0)*{\bullet}="A0"; 
        (-80,-5)*{\scriptstyle 0};
        (-60,0)*{\bullet}="B0"; 
        (-60,5)*{\scriptstyle 1};
        (-40,0)*{\bullet}="C0"; 
        (-40,4)*{\scriptstyle  2};
        (-20,0)*{\bullet}="D0"; 
        (-20,5)*{\scriptstyle  3};
        (-10,0)*{\dots};
        (0,0)*{\bullet}="E0"; 
        (20,0)*{\bullet}="F0"; 
        (40,0)*{\bullet}="G0"; 
        (40,-5)*{\scriptstyle N};
        {\ar@/^0.5pc/ "B0"; "A0"}; 
        {\ar@/^0.5pc/ "C0"; "B0"}; 
        {\ar@/^0.5pc/ "D0"; "C0"};  
        {\ar@/^0.5pc/ "F0"; "E0"}; 
        {\ar@/^1pc/ "B0"; "D0"}; 
        {\ar@/^0.4pc/ "C0"; "D0"}; 
        {\ar@/^0.4pc/ "E0"; "F0"}; 
        {\ar@/^0.4pc/ "F0"; "G0"}; 
        {\ar@(ul,ur) "A0"; "A0"}; 
        {\ar@(dr,dl) "B0"; "B0"}; 
        {\ar@(dr,dl) "C0"; "C0"};  
        {\ar@(dr,dl) "D0"; "D0"};  
        {\ar@(dr,dl) "E0"; "E0"};  
        {\ar@(dr,dl) "F0"; "F0"}; 
        {\ar@(ul,ur) "G0"; "G0"}; 
      \endxy$};
    \end{tikzpicture}%
  }
  \caption{Bernoulli proliferation on a cycle:
    \subref{fig:initialcycle}~the initial state $S=\{1\}$, 
    \subref{fig:statecycle}~the reduced state space.}
\label{fig:cycle}
\end{figure}
In general, as in the usual Moran model, all individuals of the same type are contiguous. Thus, for $i=2,3, \ldots , N-1$, the transition probabilities are given by:
  \begin{align}  \label{prob_cycle_2}
\begin{split}
    p_{i,i-1} &= \frac{\textstyle 1}{\textstyle w_i} \medskip  \\
    p_{i,i+1} &= \frac{\textstyle 2rp}{\textstyle w_i} \medskip  \\
    p_{i,i} &= 1 - (p_{i,i-1} + p_{i,i+1}), 
\end{split}
  \end{align}
where $w_i = ri + N - i$.
The corresponding absorption times are given by the equations:
\begin{align} \label{eq:AT_eq_cycle}
\tau_i  & = p_{i,i-1} \tau_{i-1} + p_{i,i+1} \tau_{i+1}+ (1 -  p_{i,i-1} -  p_{i,i+1}) \tau_i + 1.
\end{align}
As usually (see \cite{H} and \cite{AGSL}), we can solve this system by replacing
$\delta_i = \tau_i - \tau_{i-1}$
to obtain
\begin{align} \label{eq:ATsubs:cycle1}
\delta_{i+1} = \frac{1}{2rp}\delta_i -  \frac{w_i}{2rp}
\end{align}
for $i \geq 2$. More generally, we have 
\begin{align} \label{eq:ATsubs:cycle}
\delta_{i+1} = \big(\frac{1}{2rp}\big)^{i-1}\delta_2 - \sum_{j=2}^i
\big(\frac{1}{2rp}\big)^{i-j+1}w_j
\end{align}
for $i=2,3, \ldots , N-1$. Then
\begin{align} \label{eq:ATcycle1}
0 = \tau_N - \tau_0  & = \sum_{i=1}^N \delta_i  = \Big( \sum_{i=0}^{N-2} \big(\frac{1}{2rp}\big)^i \Big) \delta_2  -  \sum_{i=2}^{N-1} \sum_{j=2}^i \big(\frac{1}{2rp}\big)^{i-j+1}w_j + \delta_1.
\end{align}
Combining
$$
\tau_2 = \frac{1}{1+2rp} \tau_1 + \frac{2rp}{1+2rp}  \tau_3 + \frac{w_2}{1+2rp} 
$$

\noindent
with \eqref{eq:AT3cycle}, we obtain 
$$
\tau_2 = \Big( 1 + \frac{2}{1+2rp} \Big) \tau_1 - 
 \frac{2w_1}{1+2rp} + \frac{w_2}{1+2rp} 
$$
and hence 
$$
\delta_ 2 = \tau_2 - \tau_1 =\frac{2}{1+2rp} \tau_1 + \frac{w_2 - 2w_1}{1+2rp} = \frac{2}{1+2rp} \tau_1 -  \frac{N}{1+2rp}.
$$
Therefore \eqref{eq:ATcycle1} becomes:
\begin{align} \label{eq:ATcycle}
\begin{split}
 \Big(1 + \frac{2}{1+2rp}  \sum_{i=0}^{N-2} \big(\frac{1}{2rp}\big)^i \Big) \tau_1 & =   \sum_{i=2}^{N-1} \sum_{j=2}^i \big(\frac{1}{2rp}\big)^{i-j+1}w_j \\
& + \frac{N}{1+2rp} \sum_{i=0}^{N-2} \big(\frac{1}{2rp}\big)^i  
\end{split}
\end{align}
and finally we get: 
\begin{equation} \label{eq:ATCycle}
 \tau_1 =  \frac{\displaystyle \sum_{i=2}^{N-1} \sum_{j=2}^i \big(\frac{1}{2rp}\big)^{i-j+1}w_j + \frac{N}{1+2rp} \sum_{i=0}^{N-2} \big(\frac{1}{2rp}\big)^i}{\displaystyle 1 + \frac{2}{1+2rp}  \sum_{i=0}^{N-2} \big(\frac{1}{2rp}\big)^i}.
\end{equation}
Like for complete graphs, this value can be compared with the value of the absorption time for one single mutant in the Moran process on the cycle: 
\begin{align} \label{eq:ATMorancycle}
 \tau_1^M  =  r^{N-1} \frac{\displaystyle r -1}{\displaystyle r^N - 1}  
\Big( \sum_{i=1}^{N-1} \sum_{j=1}^i \frac{w_j}{r^{i-j+1}} \Big)   = 
\frac{\displaystyle r^{N-1}}{r^{N-1} + \cdots + r + 1} \Big( \sum_{i=1}^{N-1} \sum_{j=1}^i \frac{w_j}{r^{i-j+1}} \Big). 
\end{align}
As is shown in \cite{AlcaldeGuerberoffLozano2022}, the critical value $p_c$ is the only solution of the equation
\begin{align} 
\frac{2}{1+2rp} \sum_{i=0}^{N-2} \big(\frac{1}{2rp}\big)^i = \sum_{i=1}^{N-1} \big( \frac{1}{r} \big)^i.
\end{align}
Then the absorption time rate at $p=p_c$ becomes, after some simple calculations:
\begin{align} \label{eq:CriticalAb}
\rho(r,p_c) = \frac{\displaystyle \sum_{i=1}^{N-1} \sum_{j=1}^i \big(\frac{1}{r}\big)^{i-j+1}\big(rj + N-j \big)}{\displaystyle \sum_{i=2}^{N-1} \sum_{j=2}^i \big(\frac{1}{2rp_c}\big)^{i-j+1}\big(rj + N-j \big) + \frac{N}{2} \sum_{i=1}^{N-1} \big(\frac{1}{r}\big)^i}.
\end{align}
Observe that, for $r=1$ and $p_c=\frac{1}{2}$, we get:
\[
\rho(1,p_c) = \frac{N \sum_{i=1}^{N-1} i}{N \sum_{i=2}^{N-1}(i-1) + \frac{N}{2} (N-1)} = \frac{N}{N-1}.
\]
When $r \rightarrow + \infty$, 
the situation is now somewhat different from that described in Proposition~\ref{prop:AR_complete}: 

\begin{proposition} \label{prop:AR_cycle}
The absorption time rate for a single mutant individual on a cycle graph satisfies the condition:  
$$
\lim_{r \to +\infty} \rho(r,p_c)  = \frac{2N(N-1)}{(N+1)(N-2)}.
$$
\end{proposition}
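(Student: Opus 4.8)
The plan is to take the limit directly in the closed-form expression \eqref{eq:CriticalAb} for $\rho(r,p_c)$, treating the numerator and the denominator separately. The only quantity in that formula that depends implicitly on $r$ through $p_c$ is the factor $\frac{1}{2rp_c}$ appearing in the denominator, so the first task is to control the asymptotics of $p_c = p_c(r)$ as $r \to +\infty$.

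First I would determine the behaviour of $p_c$ from its defining equation. Writing $y = 2rp_c$, that equation reads $\frac{2}{1+y}\sum_{i=0}^{N-2} y^{-i} = \sum_{i=1}^{N-1} r^{-i}$. The right-hand side tends to $0$ like $1/r$, which forces $y \to +\infty$; expanding both sides for large $y$ and large $r$ gives $\frac{2}{y}(1+O(1/y)) = \frac{1}{r}(1+O(1/r))$, whence $y = 2r(1+O(1/r))$. Consequently $2rp_c \to +\infty$ and $p_c \to 1$. These two facts are exactly what the remaining estimates will require.

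Next I would pass to the limit in the numerator $\sum_{i=1}^{N-1}\sum_{j=1}^i r^{-(i-j+1)}(rj+N-j)$. Setting $m = i-j \geq 0$, a generic term equals $r^{-(m+1)}(rj+N-j) = j\,r^{-m} + (N-j)\,r^{-(m+1)}$, which tends to $0$ whenever $m \geq 1$ and tends to $j$ when $m = 0$ (that is, $i=j$). Hence only the diagonal terms survive and the numerator converges to $\sum_{j=1}^{N-1} j = \tfrac{N(N-1)}{2}$. Applying the same estimate to the first double sum of the denominator, now using $2rp_c \to +\infty$ and $p_c \to 1$, annihilates every off-diagonal term and leaves $\sum_{j=2}^{N-1}\frac{rj+N-j}{2rp_c} \to \sum_{j=2}^{N-1} \frac{j}{2} = \tfrac{(N-2)(N+1)}{4}$; meanwhile the residual sum $\frac{N}{2}\sum_{i=1}^{N-1} r^{-i}$ tends to $0$. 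Thus the denominator converges to $\tfrac{(N-2)(N+1)}{4}$, and dividing the two limits yields $\frac{N(N-1)/2}{(N-2)(N+1)/4} = \frac{2N(N-1)}{(N+1)(N-2)}$, as claimed.

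The main obstacle is the implicit control of $p_c$ in the first step: everything downstream depends on establishing both that $2rp_c \to +\infty$ (so that the off-diagonal contributions vanish) and that $p_c \to 1$ (so that the diagonal contribution in the denominator is $\tfrac{j}{2}$ rather than $\tfrac{j}{2c}$ for some other constant $c$). Once the first-order asymptotics $p_c = 1 + O(1/r)$ is secured, the remaining limits are term-by-term estimates of the same type already used for the complete graph in Proposition~\ref{prop:AR_complete}.
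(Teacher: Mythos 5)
Your proof is correct, and it takes a somewhat different route from the paper's on the one delicate point, namely the dependence $p_c = p_c(r)$. The paper does not argue from \eqref{eq:CriticalAb}: it factors $\tau_1^M/\tau_1$ at \emph{fixed} $p$ into the three factors \eqref{eq:factor1}--\eqref{eq:factor3}, shows the first two tend to $1$, computes $\lim_{r \to +\infty} D(r) = \frac{(N+1)(N-2)}{2}$ and $\lim_{r \to +\infty} D(2rp) = \frac{1}{2p}\frac{(N+1)(N-2)}{2}$ to get $\lim_{r \to +\infty} \rho(r,p) = \frac{2pN(N-1)}{(N+1)(N-2)}$, and only at the very end invokes $p_c \to 1$ (quoted, not re-derived) to set $p = 1$. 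Strictly speaking that is a sequential limit, whereas the proposition concerns the diagonal limit $\rho(r, p_c(r))$; the step is legitimate because the convergence in $r$ is uniform for $p$ in a neighbourhood of $1$, but the paper leaves this implicit. Your argument — extracting $2rp_c \to +\infty$ and $p_c = 1 + O(1/r)$ directly from the defining equation of $p_c$ and then substituting into \eqref{eq:CriticalAb} — treats the diagonal limit in a single rigorous pass and is self-contained on the asymptotics of $p_c$, which is a genuine improvement on exactly the point you flag as the main obstacle; what you give up is the fixed-$p$ limit $\frac{2pN(N-1)}{(N+1)(N-2)}$, which the paper's factorization yields as a by-product. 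The underlying computation is the same in both proofs (only diagonal terms $i=j$ survive), and your values check out: the numerator of \eqref{eq:CriticalAb} tends to $\frac{N(N-1)}{2}$, the first sum in the denominator to $\frac{1}{2}\sum_{j=2}^{N-1} j = \frac{(N+1)(N-2)}{4}$, and the residual term vanishes, giving the stated ratio.
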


\begin{proof}   
 From equations \eqref{eq:ATCycle} and \eqref{eq:ATMorancycle}, we deduce $\tau_1^M/\tau^1$ is the product of the following three factors:

\begin{equation} \label{eq:factor1}
  1 + \frac{2}{1+2rp}  \frac{ (2rp)^{N-2} + \cdots + 2rp + 1}{ (2rp)^{N-2}},
\end{equation}
\begin{equation} \label{eq:factor2}
\frac{r^{N-1}}{ r^{N-1} + \cdots + r + 1},
\end{equation}
\begin{align} \label{eq:factor3}
\begin{split}
\phantom{O} & \phantom{=}
  \frac{\displaystyle 
   \sum_{i=1}^{N-1} \sum_{j=1}^i \frac{w_j}{r^{i-j+1}} }{\displaystyle \sum_{i=2}^{N-1} \sum_{j=2}^i \big(\frac{1}{2rp}\big)^{i-j+1}w_j + \frac{N}{1+2rp} \sum_{i=0}^{N-2} \big(\frac{1}{2rp}\big)^i} \\
    & = \frac{\displaystyle  
 \sum_{i=2}^{N-1} \sum_{j=2}^i \frac{rj+N-j}{r^{i-j+1}} + 
\sum_{i=1}^{N-1}  \frac{r+N-1}{r^i}
}{\displaystyle 
\sum_{i=2}^{N-1} \sum_{j=2}^i \frac{rj+N-j}{(2rp)^{i-j+1}} + \frac{N}{1+2rp} \frac{(2rp)^{N-2} + \cdots + 2rp + 1}{(2rp)^{N-2}}}  \\
&  = \frac{\displaystyle  D(r) + \sum_{i=1}^{N-1}  \frac{r+N-1}{r^i}}{\displaystyle D(2rp) + \frac{N}{1+2rp} \frac{(2rp)^{N-2} + \cdots + 2p + 1}{(2rp)^{N-2}}}. 
\end{split}
\end{align}
The two first factors \eqref{eq:factor1} and \eqref{eq:factor2} converge to $1$ as $r \to +\infty$. For the third factor \eqref{eq:factor3}, we have: $$
\lim_{r \to +\infty} D(r)= \frac{(N+1)(N-2)}{2}
\quad \text{and} \quad 
\lim_{r \to +\infty} D(2rp) = \frac{1}{2p} \frac{(N+1)(N-2)}{2}, 
$$
so this factor converges to 
$$
\frac{2pN(N-1)}{(N+1)(N-2)}.$$
Therefore we have:
$$
\lim_{r \to +\infty} \rho(r,p)  = \frac{2pN(N-1)}{(N+1)(N-2)}.
$$
Finally, as $p_c$ converges to $1$ as $r \to +\infty$, we deduce:
$$
 \hspace{1.35in} \lim_{r \to +\infty} \rho(r,p_c)  = \frac{2N(N-1)}{(N+1)(N-2)}. \hspace{1.87in} \qedhere
$$
\end{proof} 

\begin{remark} \label{rem:redcycle}
From \eqref{eq:ATMorancycle}, we also deduce that $\lim_{r \to +\infty} \tau_1^M = \lim_{r \to +\infty} T_1^M = \frac{N(N-1)}{2}$.
\end{remark}

\subsection{Fixation time for Bernoulli proliferation on cycles}
As before, first we write the equation for the fixation time of one single mutant: 
  \begin{align}  \label{eq:TF_eq_cycle1}
    \Phi_1 T_1  & = p_{1,0} \Phi_{0} T_{0} + p_{1,3} \Phi_3 T_3 + (1 -  p_{1,0} -  p_{1,3}) \Phi_1 T_1 + \Phi_1 \\
      &= p_{1,3} \Phi_3 T_3 + (1 -  p_{1,0} -  p_{1,3}) \Phi_1 T_1 + \Phi_1 \nonumber
  \end{align}
as $\Phi_0T_0 = 0$. We deduce: 
\begin{align} \label{eq:TF3cycle}
T_3  =   \frac{\Phi_1}{\Phi_3} \big( \frac{1+rp}{rp} T_1 - \frac{w_1}{rp}\big). 
\end{align}
In general, to solve the equation 
\begin{align} \label{eq:TFcycle}
\Phi_i T_i  & = p_{i,i-1} \Phi_{i-1} T_{i-1} + p_{i,i+1} \Phi_{i+1} T_{i+1} + (1 -  p_{i,i-1} -  p_{i,i+1}) \Phi_i T_i  + \Phi_i
\end{align}
for $2 \leq i \leq N-1$, we proceed in two steps: 
\smallskip 

\noindent
(i) First, we replace $z_i = \Phi_iT_i$ to obtain: 
\begin{align} \label{eq:TFsimplecycle1}
z_i & = p_{i,i-1} z_{i-1} + p_{i,i+1} z_{i+1} + (1 -  p_{i,i-1} -  p_{i,i+1}) z_i  + \Phi_i.
\end{align}

\noindent
(ii) Second, we replace $\delta_i = z_i - z_{i-1}$ to get:
\begin{equation}
  \label{eq:TFsimplecycle2}
  \delta_{i+1} = \frac{1}{2rp}\delta_i - \frac{\Phi_i}{p_{i,i+1}} 
  = \frac{1}{2rp}\delta_i - \frac{w_i\Phi_i}{2rp}.
\end{equation}
As before, we deduce: 
\begin{align} \label{eq:TFcyclesum}
0 = z_N - z_0  & = \sum_{i=1}^N \delta_i  = \Big( \sum_{i=0}^{N-2} \big(\frac{1}{2rp}\big)^i \Big) \delta_2  -  \sum_{i=2}^{N-1} \sum_{j=2}^i \big(\frac{1}{2rp}\big)^{i-j+1}w_j\Phi_j+ \delta_1.
\end{align} 
Combining again 
$$
z_2 = \frac{1}{1+2rp} z_1 + \frac{2rp}{1+2rp} z_3 + \frac{w_2\Phi_2}{1+2rp} 
$$
with \eqref{eq:TF3cycle}, we obtain 
$$
z_2 = \big(1+ \frac{2}{1+2rp} \big) z_1 - 
 \frac{2w_1\Phi_1}{1+2rp} + \frac{w_2\Phi_2}{1+2rp} 
$$
and hence 
$$
\delta_ 2 = z_2 - z_1 =\frac{2}{1+2rp} z_1 + \frac{w_2\Phi_2- 2w_1\Phi_1}{1+2rp}.
$$
Therefore \eqref{eq:TFcyclesum} becomes:
\begin{equation} \label{eq:ATfinalcycle}
 \Big(1 + \frac{2}{1+2rp}  \sum_{i=0}^{N-2} \big(\frac{1}{2rp}\big)^i \Big) z_1 
=  \sum_{i=2}^{N-1} \sum_{j=2}^i \big(\frac{1}{2rp}\big)^{i-j+1}w_j\Phi_j  
+ \frac{2w_1\Phi_1 - w_2\Phi_2 }{1+2rp} \sum_{i=0}^{N-2} \big(\frac{1}{2rp}\big)^i  
\end{equation}
and finally we get: 
\begin{equation} \label{eq:TFCycle}
 T_1 =  \frac{\displaystyle \sum_{i=2}^{N-1} \sum_{j=2}^i \big(\frac{1}{2rp}\big)^{i-j+1}w_j\frac{\Phi_j}{\Phi_1} + \frac{ 2w_1 - w_2\frac{\Phi_2}{\Phi_1}}{1+2rp} \sum_{i=0}^{N-2} \big(\frac{1}{2rp}\big)^i}{\displaystyle  1 + \frac{2}{1+2rp}  \sum_{i=0}^{N-2} \big(\frac{1}{2rp}\big)^i }.
\end{equation}
Using a similar argument, the identity
\begin{equation*}
0   = \sum_{i=1}^N \delta_i  = \Big( \sum_{i=0}^{N-1} \big(\frac{1}{r}\big)^i \Big) \delta_1  -  \sum_{i=1}^{N-1} \sum_{j=1}^i \big(\frac{1}{r}\big)^{i-j+1}w_j\Phi^M_j, 
\end{equation*} 
implies that the fixation time for one single non-proliferating mutant is given by
\begin{equation} \label{eq:TFMorancycle}
 T_1^M    =  \frac{\delta _1}{\Phi^M_1} = 
 \sum_{i=1}^{N-1} \sum_{j=1}^i \frac{w_j}{r^{i-j+1}} \Phi^M_j.
\end{equation}
For $r=1$ and $p_c = \frac{1}{2}$, the critical fixation rate results (after using $w_i=N$ and $\Phi^M_i=\Phi_i = \frac{i}{N}$, for all $1 \leq i \leq N-1$, and some simple algebra):
\[
\Rho(1,p_c) = \frac{N(N+1)}{(N-2)(N+3)}.
\]
The same argument used before allows us to state the following result: 
\begin{proposition} \label{prop:FR_cycle}
The fixation time rate for a single mutant individual on a cycle graph satisfies the condition:
$$
\lim_{r \to +\infty} \rho(r,p_c)  = \frac{2N(N-1)}{(N+1)(N-2)}. \qedhere
$$
\end{proposition}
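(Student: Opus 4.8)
The plan is to mirror the proof of Proposition~\ref{prop:AR_cycle}, replacing the absorption formulas by the fixation formulas \eqref{eq:TFCycle} and \eqref{eq:TFMorancycle}, and to check that the fixation-probability factors they carry do not alter the limits. First I would observe that the denominator of $T_1$ in \eqref{eq:TFCycle} is precisely factor~\eqref{eq:factor1}, and that factor~\eqref{eq:factor2} equals $\Phi^M_1$, since $\frac{r^{N-1}}{r^{N-1}+\cdots+r+1} = \frac{r^{N-1}(r-1)}{r^N-1} = \Phi^M_1$ by \eqref{eq:FP_Moran}. This lets me write
\[
T^M_1 = \Phi^M_1 \sum_{i=1}^{N-1}\sum_{j=1}^i \frac{w_j}{r^{i-j+1}}\,\frac{\Phi^M_j}{\Phi^M_1}
\]
and factor the fixation rate $\Rho(r,p) = T^M_1/T_1$ as the product of \eqref{eq:factor1}, of $\Phi^M_1$, and of a third factor whose numerator is this double sum and whose denominator is the numerator of $T_1$ in \eqref{eq:TFCycle}.

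Next, for fixed $p$ I would let $r \to +\infty$. Factor~\eqref{eq:factor1} tends to $1$ as in Proposition~\ref{prop:AR_cycle}, and $\Phi^M_1 \to 1$ as well (by \eqref{eq:FP_Moran} or Proposition~\ref{prop:infinityfitness}). For the third factor the analysis is identical to that of \eqref{eq:factor3}: since every ratio $\Phi^M_j/\Phi^M_1$ and $\Phi_j/\Phi_1$ tends to $1$ (Proposition~\ref{prop:infinityfitness}), only the diagonal terms $i=j$ survive in both the numerator and the first sum of the denominator. The numerator then tends to $\sum_{i=1}^{N-1} i = \frac{N(N-1)}{2}$, while the first sum of the denominator tends to $\frac{1}{2p}\sum_{j=2}^{N-1} j = \frac{(N-2)(N+1)}{4p}$, so the third factor tends to $\frac{2pN(N-1)}{(N+1)(N-2)}$. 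Multiplying the three limits gives $\lim_{r\to\infty}\Rho(r,p) = \frac{2pN(N-1)}{(N+1)(N-2)}$, the very expression obtained for the absorption rate, and substituting $p = p_c \to 1$ (as at the end of Proposition~\ref{prop:AR_cycle}) yields the claim.

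The only place where the fixation computation genuinely departs from the absorption one — and the step I expect to need the most care — is the second summand $\frac{2w_1 - w_2\Phi_2/\Phi_1}{1+2rp}\sum_{i=0}^{N-2}(\tfrac{1}{2rp})^i$ in the denominator of $T_1$, which replaces the clean term $\frac{N}{1+2rp}\sum_{i=0}^{N-2}(\tfrac{1}{2rp})^i$ of \eqref{eq:ATCycle}. Writing $2w_1 - w_2\frac{\Phi_2}{\Phi_1} = N - w_2\big(\frac{\Phi_2}{\Phi_1}-1\big)$ and using $w_2 \sim 2r$ together with $\frac{\Phi_2}{\Phi_1}-1 \to 0$, this quantity grows at most linearly in $r$; hence after division by $1+2rp \to +\infty$ and multiplication by the geometric sum (which tends to $1$) the summand still vanishes in the limit, exactly as its absorption counterpart. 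This confirms that the denominator of the third factor has the same limit as in \eqref{eq:factor3}, so the fixation rate inherits the absorption limit $\tfrac{2N(N-1)}{(N+1)(N-2)}$ of Proposition~\ref{prop:AR_cycle}.
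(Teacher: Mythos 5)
Your proposal is correct and is essentially the paper's own argument: the paper proves this proposition merely by invoking ``the same argument used before'', i.e.\ the three-factor decomposition of Proposition~\ref{prop:AR_cycle} combined with the fact that the ratios $\Phi^M_j/\Phi^M_1$ and $\Phi_j/\Phi_1$ all tend to $1$ (Proposition~\ref{prop:infinityfitness}), which is exactly what you carry out in detail, including the only genuinely new term $\frac{2w_1-w_2\Phi_2/\Phi_1}{1+2rp}\sum_{i=0}^{N-2}\bigl(\frac{1}{2rp}\bigr)^i$ that the paper leaves implicit. One small wording fix: ``grows at most linearly'' would by itself only give boundedness after division by $1+2rp$; what your own ingredients actually establish is $2w_1-w_2\frac{\Phi_2}{\Phi_1}=N-(2r+N-2)\,o(1)=o(r)$, which is the estimate that makes the summand vanish.
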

\noindent
Propositions~\ref{prop:AR_cycle}~and~\ref{prop:FR_cycle} can be also graphically visualised in Supplementary Material S5. 

\subsection{Absorption time for Bernoulli proliferation on stars}

Using notations from~\cite{AGSL}, the starting state will be denoted by $(0,1)$ or $(1,0)$ depending on whether the mutant is placed in the central vertex (as in Figure~\ref{fig:initialstar}) or in some peripheral 
vertex. The transition probabilities depend on the pair $(i,\varepsilon)$ where $i$ is the number of peripheral vertices occupied by mutants and $\varepsilon = 0,1$ depending on the type of individual that occupies the centre, see Figure~\ref{fig:statestar}. We write $p^\pm_{i,\varepsilon}$  if the number of mutants increases or decreases. The total reproductive weight depend only on the total number of mutants, namely $j = i+\varepsilon$, and is denoted $w_j$. 
Thus, if we denote 
$\hat p^\pm_{i,\varepsilon}$ 
the transition probabilities for the embedded jump process (see Supplementary Material S2), then
\begin{align} \label{eq:star_11}
\begin{split}
  \tau_{0,1} &= \hat p_{0,1}^- \tau_{0,0} 
                  + \hat p_{0,1}^+ \tau_{N-1,1} + 
			\frac{1}{p_{0,1}^- +p_{0,1}^+}    \\ 
             & = \frac{N-1}{N-1+rp}\tau_{0,0} 
                  + \frac{rp}{N-1+rp}\tau_{N-1,1} + \frac{w_1}{N-1+rp} \phantom{xxxX} \\
             & = \frac{w_1}{N-1+rp}, 
\end{split}
\end{align}
\begin{align} \label{eq:star_12}
\begin{split}
\tau_{1,0} &= \hat p_{1,0}^- \tau_{0,0} 
                  + \hat p_{1,0}^+ \tau_{1,1} +  
	\frac{1}{p_{1,0}^- +p_{1,0}^+}  \\ 
             &= \frac{1}{(N-1)rp+1} \tau_{0,0} 
                  + \frac{(N-1)rp}{(N-1)rp+1}\tau_{1,1} + 
\frac{(N-1)w_1}{(N-1)rp+1}  \\
  &= \frac{(N-1)rp}{(N-1)rp+1}\tau_{1,1} + 
\frac{(N-1)w_1}{(N-1)rp+1}, 
\end{split}
\end{align}
\begin{align} \label{eq:star_13}
\begin{split}
\tau_{1,1} &= \hat p_{1,1}^- \tau_{1,0} + \hat p_{1,1}^+ \tau_{N-1,1}+ \frac{1}{p_{1,1}^- +p_{1,1}^+}  \\    
 &= \frac{N-2}{N-2+ rp} \tau_{1,0} 
                 + \frac{rp}{N-2+ rp}\tau_{N-1,1} + \frac{w_2}{N-2+rp}   \phantom{xxxX} \\
       &= \frac{N-2}{N-2+ rp} \tau_{1,0} + \frac{w_2}{N-2+rp}. 
\end{split}
\end{align}
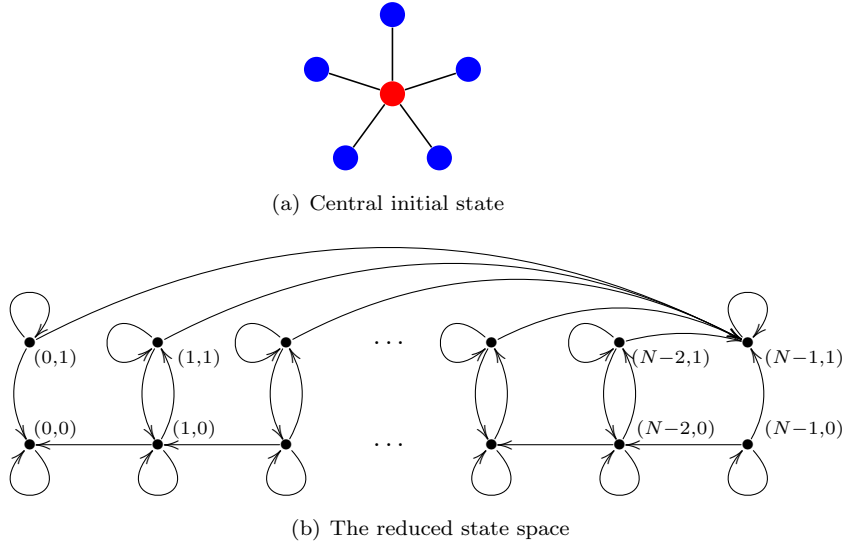
\begin{figure}[t]
\centering
\subfigure[Central initial state]{
\label{fig:initialstar}
 \begin{tikzpicture}[x=1.0cm,y=1.0cm,scale=0.7]
\clip(-4,-1.6) rectangle (4,1.8);
        \node[resident] (A)  at (90:1.5)  {};
        \foreach \a/\b in {1/B, 2/C, 3/D, 4/E}{
          \node[resident] (\b) at (90+72*\a:1.5) {};
        }
        \node[mutant] (O) at (0,0) {};
        \foreach \to in {A, B, C, D, E}
          \draw [semithick] (O) -- (\to);
      \end{tikzpicture}}
\subfigure[The reduced state space]{
\label{fig:statestar}
\begin{tikzpicture}[x=1.0cm,y=1.0cm,scale=0.78]
\clip(-8.5,-2.1) rectangle (9,2.3);
\node at (0,0) {
    $
     \xy
      0;/r.16pc/: 
      (-70,0)*{\bullet}="A0"; 
      (-65,3)*{\scriptstyle (0,0)};
      (-45,0)*{\bullet}="B0"; 
      (-38,3)*{\scriptstyle (1,0)};
      (-20,0)*{\bullet}="C0"; 
      (0,0)*{\dots}="D0"; 
      (20,0)*{\bullet}="E0"; 
      (45,0)*{\bullet}="F0"; 
      (56,3)*{\scriptstyle (N-2,0)};
      (70,0)*{\bullet}="G0"; 
      (81,3)*{\scriptstyle (N-1,0)};
      (-70,20)*{\bullet}="A1"; 
      (-65,17)*{\scriptstyle (0,1)};
      (-45,20)*{\bullet}="B1"; 
      (-37,17)*{\scriptstyle (1,1)};
      (-20,20)*{\bullet}="C1"; 
      (0,20)*{\dots}="D1"; 
      (20,20)*{\bullet}="E1"; 
      (45,20)*{\bullet}="F1"; 
      (55,17)*{\scriptstyle (N-2,1)};
      (70,20)*{\bullet}="G1"; 
      (81,17)*{\scriptstyle (N-1,1)};
      {\ar@{->} "B0"; "A0"}; 
      {\ar@{->} "C0"; "B0"}; 
      {\ar@{->} "F0"; "E0"}; 
      {\ar@{->} "G0"; "F0"}; 
      {\ar@/^3pc/"A1"; "G1"}; 
      {\ar@/^2.5pc/ "B1"; "G1"}; 
      {\ar@/^2pc/"C1"; "G1"}; 
      {\ar@/^1.1pc/"E1"; "G1"}; 
      {\ar@/^0.3pc/"F1"; "G1"}; 
      {\ar@/_0.5pc/ "A1"; "A0"}; 
      {\ar@/_0.5pc/ "G0"; "G1"}; 
      {\ar@/_0.5pc/ "B1"; "B0"}; 
      {\ar@/_0.5pc/ "B0"; "B1"}; 
      {\ar@/_0.5pc/ "C1"; "C0"}; 
      {\ar@/_0.5pc/ "C0"; "C1"}; 
     {\ar@/_0.5pc/ "E1"; "E0"}; 
      {\ar@/_0.5pc/ "E0"; "E1"}; 
      {\ar@/_0.5pc/ "F1"; "F0"}; 
      {\ar@/_0.5pc/ "F0"; "F1"}; 
      {\ar@(ul,ur) "A1"; "A1"}; 
      {\ar@(lu,ld) "B1"; "B1"}; 
      {\ar@(lu,ld) "C1"; "C1"}; 
      {\ar@(lu,ld)  "E1"; "E1"}; 
      {\ar@(lu,ld)  "F1"; "F1"}; 
      {\ar@(ul,ur) "G1"; "G1"}; 
      {\ar@(dr,dl) "A0"; "A0"}; 
      {\ar@(dr,dl) "B0"; "B0"}; 
      {\ar@(dr,dl) "C0"; "C0"}; 
     {\ar@(dr,dl) "E0"; "E0"}; 
      {\ar@(dr,dl) "F0"; "F0"}; 
      {\ar@(dr,dl) "G0"; "G0"}; 
    \endxy
  $};
    \end{tikzpicture}}
  \caption{Bernoulli proliferation on a star:
    \subref{fig:initialstar}~central initial state, 
    \subref{fig:statestar}~the reduced state space.}
\end{figure}
From \eqref{eq:star_12} and \eqref{eq:star_13}, we deduce:
\[
  \tau_{1,0} = \frac{(N-1)rpw_2}{(N-1)(rp)^2 + rp + N-2} + \frac{(N-1)(rp+N-2)w_1}{(N-1)(rp)^2 + rp + N-2}.
\]
Adding \eqref{eq:star_11}, the mean absorption time for one single mutant is given by: 
\begin{align}
  \label{eq:ATstar}
 \tau_1 & = \frac{1}{N} \Big( \frac{w_1}{N-1+rp} + (N-1)^2 \frac{ rpw_2 + (rp+N-2)w_1}{(N-1)(rp)^2 + rp + N-2}\Big). 
\end{align}
More generally, we have: 
\begin{align} \label{eq:ATstar_i1}
\begin{split}
  \tau_{i,1} &= \hat p_{i,1}^- \tau_{i,0} 
                  + \hat p_{i,1}^+ \tau_{N-1,1} + 
			\frac{1}{p_{i,1}^- +p_{i,1}^+}   \\ 
             &= \frac{N-1-i}{N-1-i+rp}\tau_{i,0} 
                  + \frac{rp}{N-1-i+rp}\tau_{N-1,1} + \frac{w_{i+1}}{N-1-i+rp}   \\
             &= \frac{N-1-i}{N-1-i+rp}\tau_{i,0} + \frac{w_{i+1}}{N-1-i+rp},
\end{split}
\end{align}
\begin{align} \label{eq:ATstar_i0}
\begin{split}
\tau_{i,0} & = \hat p_{i,0}^- \tau_{i-1,0} 
                  + \hat p_{i,0}^+ \tau_{i,1} +  
	\frac{1}{p_{i,0}^- +p_{i,0}^+}   \\ 
             &= \frac{1}{(N-1)rp+1} \tau_{i-1,0} 
                  + \frac{(N-1)rp}{(N-1)rp+1}\tau_{i,1} + 
\frac{(N-1)w_i}{\big((N-1)rp+1\big)i}. 
\end{split}
\end{align}
Combining both equations, we obtain the following recurrence formula: 
\begin{align} \label{eq:ATrecurrent}
\begin{split}
\tau_{i,0} = \frac{\displaystyle 1}{\displaystyle
\Big[ 1 - \frac{(N-1)rp}{[(N-1)rp+1]}\frac{N-1-i}{[N-1-i+rp]} \Big]} 
&  \Big( \frac{1}{[(N-1)rp+1]} \tau_{i-1,0} \\
 & + \frac{(N-1)rp}{[(N-1)rp+1]}\frac{w_{i+1}}{[N-1-i+rp]}  \\
& + \frac{(N-1)}{[(N-1)rp+1]}\frac{w_i}{i}\Big). 
\end{split}
\end{align}
The absorption time for one single mutant in the Moran process has been described in \cite{H}: 
\begin{align} \label{eq:ATMoranstar}
\begin{split}
 \tau_1^M  & = \frac{1}{N} \Big( \big( \frac{r}{N-1+r} + (N-1)^2 \frac{r}{(N-1)r+1} \big)\frac{1}{D(1)}  \sum_{i=2}^{N-1} D(i)E(i)  \\
& + 1 + (N-1)^2 \frac{w_1}{(N-1)r + 1} \Big), 
\end{split}
\end{align}
where
\begin{align*}
D(i)  &  = 1 + \frac{N-1}{N-1+r} \sum_{j=i}^{N-2}  \Big(\frac{N-1+r}{(N-1)r+1}\Big)^{j-i+1} \Big(\frac{1}{r}\Big)^{j-i+1}, \\
E(i)  & = \frac{N-1}{r} \sum_{j=1}^{i-1} \frac{(N-1)w_j}{((N-1)r + 1)j} \big(\frac{1}{(N-1)r+1}  \big)^{i-j-1}  + \frac{(N-1)w_i}{(N-i)r} \\ 
& = \frac{N-1}{r} \sum_{j=1}^{i-1} \frac{(N-1)(N-j+rj)}{((N-1)r + 1)j} \big(\frac{1}{(N-1)r+1}  \big)^{i-j-1}  + \frac{(N-1)(N-i+ri)}{(N-i)r}.
\end{align*}

\begin{proposition} \label{prop:AR_star}
The absorption time for a single mutant individual on a star graph satisfies the condition:  
$$
\lim_{r \to +\infty} \rho(r,p_c)  = \frac{N}{3N-2} 
\big[ (N-1) \sum_{i=1}^{N-1} H_ i \big].
$$
\end{proposition}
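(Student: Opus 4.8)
The plan is to follow the same scheme as in the proof of Proposition~\ref{prop:AR_cycle}: write $\rho(r,p_c)=\tau_1^M(r)/\tau_1(r,p_c)$ and compute the limits of the numerator and the denominator separately as $r\to+\infty$, using that the critical value $p_c=p_c(r)$ tends to $1$ as $r\to+\infty$, just as for cycles (this is established in \cite{AlcaldeGuerberoffLozano2022}). Since $\tau_1^M$ does not involve $p$, the critical value only enters through the denominator, and the overall strategy is to reduce everything to elementary asymptotics of the rational expressions \eqref{eq:ATstar} and \eqref{eq:ATMoranstar}.

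For the denominator I would start from the closed form \eqref{eq:ATstar}, fix $p\in(0,1]$, and let $r\to+\infty$. The first summand satisfies $\tau_{0,1}=\frac{w_1}{N-1+rp}\to\frac1p$. In the second summand the numerator behaves like $rp\,w_2+(rp+N-2)w_1\sim 3p\,r^2$ and the denominator like $(N-1)(rp)^2+rp+N-2\sim(N-1)p^2r^2$, so the fraction tends to $\frac{3}{(N-1)p}$ and, multiplied by $(N-1)^2$, contributes $\frac{3(N-1)}{p}$. Hence $\lim_{r\to+\infty}\tau_1(r,p)=\frac1N\bigl(\frac1p+\frac{3(N-1)}{p}\bigr)=\frac{3N-2}{Np}$; since $p_c\to1$ the denominator converges to $\frac{3N-2}{N}$.

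The harder part, and the main technical obstacle, is the numerator \eqref{eq:ATMoranstar}, because one must control the auxiliary quantities $D(i)$ and $E(i)$ together with several negligible terms. For $D(i)$ the prefactor $\frac{N-1}{N-1+r}\to0$ annihilates the inner sum, so $D(i)\to1$ for every $i$ (in particular $D(1)\to1$). For $E(i)$, every term of the first sum carries either a factor $\bigl(\tfrac1{(N-1)r+1}\bigr)^{i-j-1}$ with positive exponent or the global factor $\frac{N-1}{r}$, both vanishing, so that sum tends to $0$, while the last summand gives $\frac{(N-1)(N-i+ri)}{(N-i)r}\to\frac{(N-1)i}{N-i}$. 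The two prefactors in \eqref{eq:ATMoranstar} satisfy $\frac{r}{N-1+r}+(N-1)^2\frac{r}{(N-1)r+1}\to N$ and $1+(N-1)^2\frac{w_1}{(N-1)r+1}\to N$. Assembling these limits yields $\lim_{r\to+\infty}\tau_1^M=(N-1)\sum_{i=2}^{N-1}\frac{i}{N-i}+1$; each discarded term is an elementary rational estimate, so the difficulty is bookkeeping rather than anything conceptual.

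Finally I would combine the two limits and rewrite the numerator in terms of harmonic numbers. Reindexing by $k=N-i$ gives $\sum_{i=2}^{N-1}\frac{i}{N-i}=N H_{N-2}-(N-2)$, and the classical identity $\sum_{i=1}^{m}H_i=(m+1)H_m-m$ with $m=N-1$, together with $H_{N-1}=H_{N-2}+\frac1{N-1}$, shows that $(N-1)\sum_{i=2}^{N-1}\frac{i}{N-i}+1=(N-1)\sum_{i=1}^{N-1}H_i$. Dividing the numerator limit by the denominator limit $\frac{3N-2}{N}$ then gives $\lim_{r\to+\infty}\rho(r,p_c)=\frac{N}{3N-2}\bigl[(N-1)\sum_{i=1}^{N-1}H_i\bigr]$, as claimed. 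A quick check at $N=3$ (limits $5$ and $\frac73$, ratio $\frac{15}{7}$) confirms the bookkeeping.
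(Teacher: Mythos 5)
Your proof is correct and takes essentially the same route as the paper's: compute the limits of $\tau_1(r,p)$ and $\tau_1^M(r)$ separately from \eqref{eq:ATstar} and \eqref{eq:ATMoranstar}, using $E(i)\to\frac{(N-1)i}{N-i}$, the vanishing of the auxiliary sums, and $\lim_{r\to+\infty}p_c(r)=1$, then convert to harmonic numbers via $\sum_{i=1}^{m}H_i=(m+1)H_m-m$. The only cosmetic differences are that you show $D(i)\to1$ directly where the paper shows $D(i)/D(1)\to1$, and you reindex the sum $\sum_{i=2}^{N-1}\frac{i}{N-i}$ slightly differently before reaching the same identity.
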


\begin{proof} To compute 
$\lim_{r \to +\infty} \rho(r,p)$,
we will compute independently the limits of $\tau_1$ and $\tau_1^M$ using 
equations \eqref{eq:ATstar} and \eqref{eq:ATMoranstar}. First, 

\begin{align*}
\lim_{r \to +\infty} \tau_1 = \frac 1 N \Big( \frac 1 p + (N-1)^2 \frac{2p + p}{(N-1)p^2} \Big) 
= \frac{3N-2}{Np}.
\end{align*}
Secondly, we compute the limit of each summand in \eqref{eq:ATMoranstar}. Since 
\begin{align*}
\lim_{r \to +\infty} E(i) = \frac{N-1}{N-i}i 
\end{align*}
and 
\begin{align*}
\lim_{r \to +\infty} \frac{D(i)}{D(1)}   = 
\lim_{r \to +\infty}  \frac{\displaystyle 1 + \frac{N-1}{N-1+r} \sum_{j=i}^{N-2}  \Big(\frac{N-1+r}{(N-1)r+1}\Big)^{j-i+1} \Big(\frac{1}{r}\Big)^{j-i+1}}{\displaystyle 1 + \frac{N-1}{N-1+r} \sum_{j=1}^{N-2}  \Big(\frac{N-1+r}{(N-1)r+1}\Big)^{j} \Big(\frac{1}{r}\Big)^{j}} = 1,
\end{align*}
the first summand converges to 
$$
\sum_{i=2}^{N-1} \frac{N-1}{N-i}i =  (N-1) \sum_{i=1}^{N-1} \frac{i}{N-i} - 1 = (N-1) \sum_{i=1}^{N-1} H_ i -1,
$$
where 
$\sum_{i=1}^{N-1} H_ i = NH_{N-1} - (N-1) = N(H_{N-1}-1) +1$.
The second one is equal to $1/N$, and finally the third summand converges to $\frac{N-1}{N}$. Therefore we have:
\begin{align*}
\qquad \lim_{r \to +\infty} \rho(r,p) = \lim_{r \to +\infty} \frac{\tau_1^M}{\tau_1} 
 = \frac{\displaystyle \sum_{ i=2}^{N-1} \frac{N-1}{N-i} i  + \frac 1 N + \frac{N-1}{N}}{\displaystyle \frac{3N-2}{Np}} 
= p \frac{N}{3N-2} \big[ (N-1)
\sum_{i=1}^{N-1} H_ i \big]. \hspace{3em} 
\end{align*}
Remember finally that $\lim_{r \rightarrow + \infty} p_{c}(r) = 1$. 
\end{proof} 

\begin{remark} \label{rem:redstar}
From the previous proof, we also deduce that 
$$\lim_{r \to +\infty} \tau_1^M = \lim_{r \to +\infty} T_1^M = 
 (N-1) \sum_{i=1}^{N-1} H_ i.$$
Together with Remarks~\ref{rem:redcomplete}~and~\ref{rem:redcycle}, this remark seems to suggest that the limit is related to the \emph{cover time} of the simple random walk (SRW in short) on the underlying graph (see \cite{Lovasz1993} and \cite{LevinPeres}). See Supplementary Material S3.
Both quantities coincide for complete and cycle graphs, although SRW is more efficient in star graphs.
\end{remark}

\subsection{Fixation time for Bernoulli proliferation on stars}

Using notations from the previous paragraph, we have:  
\begin{align}  \label{eq:FTstar_11}
\begin{split}
  z_{0,1} &= \hat p_{0,1}^- z_{0,0} 
                  + \hat p_{0,1}^+ z_{N-1,1} + 
			\frac{\Phi_{0,1}}{p_{0,1}^- +p_{0,1}^+} \phantom{X}  \\
             &= \frac{w_1\Phi_{0,1}}{N-1+rp},  
\end{split}
\end{align}
\begin{align} \label{eq:FTstar_12}
\begin{split}
z_{1,0} &= \hat p_{1,0}^- z_{0,0} 
                  + \hat p_{1,0}^+ z_{1,1} +  
	\frac{\Phi_{1,0}}{p_{1,0}^- +p_{1,0}^+}  \\
  &= \frac{(N-1)rp}{(N-1)rp+1}z_{1,1} + 
\frac{(N-1)w_1\Phi_{1,0}}{(N-1)rp+1}, 
\end{split}
\end{align}
\begin{align} \label{eq:FTstar_13} 
\begin{split}
z_{1,1} &= \hat p_{1,1}^- z_{1,0} + \hat p_{1,1}^+ z_{N-1,1}+ \frac{\Phi_{1,1}}{p_{1,1}^- +p_{1,1}^+}  \phantom{X} \\
       &= \frac{N-2}{N - 2 + rp} z_{1,0} +  \frac{w_2\Phi_{1,1}}{N-2+rp}, 
\end{split}
\end{align}
where $z_{i,\varepsilon} = T_{i,\varepsilon} \Phi_{i,\varepsilon}$ for any $i = 0,\dots, N-1$ and any $\varepsilon = 0,1$. 
From \eqref{eq:FTstar_12} and \eqref{eq:FTstar_13}, we deduce:
\[
 T_{1,0} = (N-1) \frac{ rpw_2\frac{\Phi_{1,1}}{\Phi_{1,0}} + (rp+N-2)w_1}{(N-1)(rp)^2 + rp + N-2}.
\]
Adding \eqref{eq:FTstar_11}, the mean fixation time for one single mutant is given by: 
\begin{equation}
  \label{eq:FT_star}
 T_1  = \frac{1}{N} \Big( \frac{w_1}{N-1+rp} + (N-1)^2 \frac{ rpw_2\frac{\Phi_{1,1}}{\Phi_{1,0}} + (rp+N-2)w_1}{(N-1)(rp)^2 + rp + N-2} \Big). 
\end{equation}
More generally, we have: 
\begin{equation} \label{eq:FTstar_i1}  
  z_{i,1} = \hat p_{i,1}^- z_{i,0} 
                  + \hat p_{i,1}^+ z_{N-1,1} + 
			\frac{\Phi_{i,1}}{p_{i,1}^- +p_{i,1}^+} 
             = \frac{N-1-i}{N-1-i+rp}z_{i,0} + \frac{w_{i+1}\Phi_{i,1}}{N-1-i+rp}, 
\end{equation}
\begin{align} 
\begin{split}
z_{i,0} &= \hat p_{i,0}^- z_{i-1,0} 
                  + \hat p_{i,0}^+ z_{i,1} +  
	\frac{\Phi_{i,0}}{p_{i,0}^- +p_{i,0}^+}  \label{eq:FTstar_i2} \\ 
             &= \frac{1}{(N-1)rp+1} z_{i-1,0} 
                  + \frac{(N-1)rp}{(N-1)rp+1}z_{i,1} + 
\frac{(N-1)w_i\Phi_{i,0}}{\big((N-1)rp+1\big)i}. \phantom{XXXXXXX}
\end{split}
\end{align}
Combining both equations, we obtain a new recurrence formula for the peripheral process: 
\begin{align} \label{eq:TFrecurrent}
T_{i,0} = \frac{\displaystyle 1}{\displaystyle
\Big[ 1 - \frac{(N-1)rp}{[(N-1)rp+1]}\frac{N-1-i}{[N-1-i+rp]} \Big]} 
&  \Big( \frac{1}{[(N-1)rp+1]} T_{i-1,0}  \frac{\Phi_{i-1,0}}{\Phi_{i,0}} \nonumber \\
 & + \frac{(N-1)rp}{[(N-1)rp+1]}\frac{w_{i+1}}{[N-1-i+rp]} \frac{\Phi_{i,1}}{\Phi_{i,0}}\\
& + \frac{(N-1)}{[(N-1)rp+1]}\frac{w_i}{i} \Big). \nonumber
\end{align}


\noindent
Once again, combining Proposition~\ref{prop:infinityfitness} and Proposition~\ref{prop:AR_star}, we have: 
$$\lim_{r \to +\infty} \Rho(r,p_c)  = \lim_{r \to +\infty} \rho(r,p_c)  = \frac{N}{3N-2} 
 \big[ (N-1) \sum_{i=1}^{N-1} H_ i \big].
$$
The behaviour of stars with respect to cliques and cycles can be also graphically visualised in Supplementary Material S5.

\section{Absorption and fixation times for binomial proliferation}
\label{Sbinomial}

In this section we shall describe the absorption and fixation times for binomial proliferation on a complete graph. We will combine the idea used to compute the fixation probability in \cite{AlcaldeGuerberoffLozano2022} with the technique used to compute the absorption and fixation times for the usual Moran process in \cite{H}. As before, we suppose the starting mutant is placed at vertex~1. 
It may evolve to any other state: the embedded jump processes in order $N=3,4$ are described in Figure~\ref{fig:binomial}.
\begin{figure}[b]
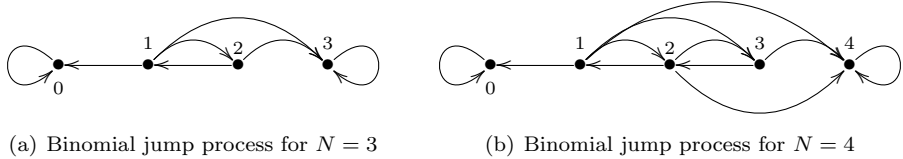

\centering
\subfigure[Binomial jump process for $N=3$]{
\label{fig:binomial3}
  $ \xy
    0;/r.14pc/: 
	(-40,-15)*{};
    (-80,0)*{\bullet}="A0"; 
    (-80,-5)*{\scriptstyle 0};
    (-60,0)*{\bullet}="B0"; 
    (-60,5)*{\scriptstyle 1};
    (-40,0)*{\bullet}="C0"; 
    (-40,4)*{\scriptstyle  2};
    (-20,0)*{\bullet}="D0"; 
   (-20,5)*{\scriptstyle  3};
    {\ar@/^0pc/ "B0"; "A0"}; 
    {\ar@/^0.8pc/ "B0"; "C0"}; 
    {\ar@/^1.5pc/ "B0"; "D0"}; 
    {\ar@/^0pc/ "C0"; "B0"}; 
    {\ar@/^0.8pc/ "C0"; "D0"};  
    {\ar@(lu,ld) "A0"; "A0"};   
    {\ar@(ru,rd) "D0"; "D0"};  
    \endxy
    $ 
}
\subfigure[Binomial jump process for $N=4$]{
\label{fig:binomial4}
  $ \xy
    0;/r.14pc/:
	(-40,-15)*{};
    (-80,0)*{\bullet}="A0"; 
    (-80,-5)*{\scriptstyle 0};
    (-60,0)*{\bullet}="B0"; 
    (-60,5)*{\scriptstyle 1};
    (-40,0)*{\bullet}="C0"; 
    (-40,4)*{\scriptstyle  2};
    (-20,0)*{\bullet}="D0"; 
   (-20,5)*{\scriptstyle  3};
    (0,0)*{\bullet}="E0"; 
   (0,5)*{\scriptstyle  4};
    {\ar@/^0pc/ "B0"; "A0"}; 
    {\ar@/^0.8pc/ "B0"; "C0"}; 
    {\ar@/^1.5pc/ "B0"; "D0"}; 
    {\ar@/^2pc/ "B0"; "E0"}; 
    {\ar@/^0pc/ "C0"; "B0"}; 
    {\ar@/^0.8pc/ "C0"; "D0"}; 
    {\ar@/^0pc/ "D0"; "C0"};  
    {\ar@/^0.8pc/ "D0"; "E0"};  
    {\ar@/_1.5pc/ "C0"; "E0"}; 
    {\ar@(lu,ld) "A0"; "A0"}; 
    {\ar@(ru,rd) "E0"; "E0"};  
    \endxy
    $
}
  \caption{Binomial jump proliferation on a complete graph of \subref{fig:binomial3}~order $N=3$, \subref{fig:binomial4}~order $N=4$.}
\label{fig:binomial}
\end{figure}

The absorption time is given as a solution of the non-homogeneous linear system
\begin{equation}
  \label{eq:Asistema1}
	\tau_i = p_{i,i-1} \tau_{i-1} + p_{i,i} \tau_i + 
    \cdots + p_{i,N} \tau_N + 1,
\end{equation}
where $i = 1,2,\ldots,N-1$. If $j \geq i$, the coefficient are
$$
  p_{i,j}
  = \frac{ri}{ri+N-i} \binom{N-i}{j-i} p^{j-i} (1-p)^{N-j},
$$
while
$$
  p_{i,i-1} = \frac{N-i}{ri+N-i}\,\frac{i}{N-1}. 
$$
We write $\Delta_{i-1,i} = \tau_{i-1} - \tau_i$ and $\Delta_{i,j} = \tau_{i} - \tau_{j}$ for $j \geq i+1$. Thus, since $\sum_{j =i-1}^N p_ {i,j} = 1$,
we deduce:
\begin{equation}
  \label{eq:ADeltacomplete1}
  p_{i,i-1} \Delta_{i-1,i} = p_{i,i+1}\Delta_{i,i+1} + 
 \cdots + p_{i,N}\Delta_{i,N} - 1.
\end{equation}
As
$\Delta_{i,j} = \Delta_{j-1,j} + \Delta_{j-2,j-1} + \cdots + \Delta_{i,i+1}$, 
we can reorder the terms of \eqref{eq:ADeltacomplete1} to prove:
\begin{align*}
p_{i,i-1} \delta_i =~ &  p_{i,i+1}\delta_{i+1} +  \\
 & p_{i,i+2}\delta_{i+1}  + p_{i,i+2}\delta_{i+2} + \\
 & p_{i,i+3}\delta_{i+1}  + p_{i,i+3}\delta_{i+2} + 
p_{i,i+3} \delta_{i+3} + \cdots \\
 & p_{i,N} \phantom{x} \delta_{i+1}  + p_{i,N}\phantom{x} \delta_{i+2}  + p_{i,N} \phantom{x}\delta_{i+3} + \cdots + p_{i,N}\delta_N - 1,
\end{align*}
where $\delta_i = \Delta_{i-1,i}$ and $\delta_{j} = \Delta_{j-1,j}$ for $j \geq i+1$.
Hence, defining 
$$
w_{i,j} = \frac1{ p_{i,i-1} }\sum_{k=j}^N p_{i,k},
$$
we reduce \eqref{eq:Asistema1} to the linear system
\begin{equation}
  \label{eq:Asistema2}
  \delta_i = w_{i,i+1} \delta_{i+1} + 
    w_{i,i+2} \delta_{i+2} + \cdots + w_{i,N-1} \delta_{N-1} + w_{i,N} \delta_{N} - \frac{1}{p_{i,i-1}},
\end{equation}
where $i = 1,2,\ldots,N-1$. This system can be solved by backward substitution. 
Firstly, we take 
$$\delta_{N-1} = w_{N-1,N}\delta_N -  \frac{1}{p_{N-1,N-2}}.$$ Substituting in the expression for $\delta_{N-2}$, we get:
\begin{align*}
  \delta_{N-2} & = w_{N-2,N-1} \delta_{N-1} + w_{N-2,N}\delta_N -  \frac{1}{p_{N-2,N-3}} \\ 
 & = \bigl(w_{N-2,N-1} w_{N-1,N} + w_{N-2,N}\bigr)\delta_N - \frac{w_{N-2,N-1}}{p_{N-1,N-2}} -  \frac{1}{p_{N-2,N-3}},
\end{align*}
and continue recursively. 
Notice all possible strictly increasing sequences of numbers from $N-2$ to
$N$ appears in the term involving $\delta_N $, while those from $N-2$ to
$N-1$ \mbox{(with $w_{N-2,N-2} = 1$)} appears in the constant terms. In general, let $\mathcal{S}_{i,j}$ be the set of strictly increasing sequences (with variable length $\ell$) $s \equiv 
s_0 < s_1 < \ldots < s_\ell$ 
of integers such that $s_0=i$ and $s_\ell=j$ for $j \geq i$, with the convention $w_{j,j} = 1$ for all $j$, and denote
$W(s) = w_{s_0,s_1}w_{s_1,s_2}\cdots w_{s_{\ell-1}, s_\ell}$.
We have:
\begin{equation}
  \label{eq:Adeltafinal}
\delta_i = \delta_N  \sum_{s\in\mathcal{S}_{i,N}} W(s)  - \sum_{j=i}^{N-1} \frac{1}{p_{j,j-1}} \sum_{s\in\mathcal{S}_{i,j}} W(s).
\end{equation}
Arguing similarly to \cite{AlcaldeGuerberoffLozano2022}, as $\sum_{i=1}^N \delta_i = 0$, we obtain:\begin{equation} \label{eq:deltaN}
  \delta_N = \frac{\sum_{i=1}^{N-1} \sum_{j=i}^{N-1}  \frac{1}{p_{j,j-1}} \sum_{s\in\mathcal{S}_{i,j}} W(s)}{\sum_{i=1}^{N} \sum_{s\in\mathcal{S}_{i,N}} W(s)}.
\end{equation}
Finally, using \eqref{eq:Adeltafinal} and the equation above, we get:
\begin{align} \label{eq:ATbinomial}
\begin{split}
  \tau_1\; = \; - \delta_1 \; & = \sum_{j=1}^{N-1} \frac{1}{p_{j,j-1}} \sum_{s\in\mathcal{S}_{1,j}} W( s) \\  & -  \sum_{\tilde{s} \in \mathcal{S}_{1,N}} W(\tilde{s}) \frac{\sum_{j=1}^{N-1}  \sum_{k=j}^{N-1} \frac{1}{p_{k,k-1}} \sum_{s\in\mathcal{S}_{j,k}}  W(s)}{ \sum_{k=1}^{N}  \sum_{s\in\mathcal{S}_{k,N}} W(s)}.
\end{split}
\end{align}
Also recall 
\[
  w_{i,j} = r\frac{(N-1)}{N-i}
    \sum_{k=j}^N \binom{N-i}{k-i} p^{k-i} (1-p)^{N-k}.
\]

The same approach can be used to compute the absorption time for the Moran process (or any Birth-Death process) on a complete graph. In this case, the weight 
$w_{i,j} = p^M_{i,i+1}/p^M_{i,i-1} =  1/\gamma_i$ 
if $j = i+1$ and $w_{i,j}=0$ otherwise, 
and \eqref{eq:Asistema2} reduces to
 \begin{equation}
  \label{eq:Asistema3}
  \delta_i = w_{i,i+1} \delta_{i+1} - \frac{1}{p^M_{i,i-1}}
\end{equation}
for $i = 1,2, \dots , N-1$. Arguing as before, we obtain the following expression:
\begin{align} \label{eq:ATBD1}
  \tau^M_1  & = \sum_{i=1}^{N-1} \frac{1}{p^M_{i,i-1}} 
\prod_{k=1}^{i-1} w_{k,k+1}  -  \frac{\displaystyle \prod_{l=1}^{N-1} w_{l,l+1} \sum_{i=1}^{N-1}  \sum_{j=i}^{N-1} \frac{1}{p^M_{j,j-1}} \prod_{k=i}^{j-1} w_{k,k-1}}{\displaystyle 1 +  \sum_{i=1}^{N-1}  \sum_{k=i}^{N-1} w_{k,k+1}}. 
\end{align}
However, replacing $\delta_i$ by $-\delta_i$, the equation \eqref{eq:Asistema3} can be replaced by 
 \begin{equation}
  \label{eq:Asistema4}
  \delta_{i+1} = \gamma_i \delta_i - \frac{1}{p^M_{i,i+1}}, 
\end{equation}
and now we obtain the following equivalent expression (cf. \cite[Section 4.7]{KT}): 
\begin{align} \label{eq:ATBD2}
  \tau^M_1 &  = \frac{\displaystyle \sum_{i=1}^{N-1}  \frac{1}{p^M_{i,i+1}}\prod_{k=1}^{i} w_{k,k+1} \sum_{j=i}^{N-1}  \prod_{k=j+1}^{N-1} w_{k,k+1}}{\displaystyle \sum_{i=0}^{N-1} \prod_{k=i+1}^{N-1} w_{k,k+1}}.
\end{align}

 The fixation time is now the solution of the non-homogeneous linear system
\begin{equation}
  \label{eq:Fsistema1}
	\Phi_i T_i = p_{i,i-1} \Phi_{i-1} T_{i-1}+ p_{i,i} \Phi_i  T_i+ 
    \cdots + p_{i,N} \Phi_N T_N+ \Phi_i,
\end{equation}
where $\Phi_N = 1$ and $T_N = 0$. As in the previous computations, writing 
$z_i = \Phi_i  T_i$, we reduce \eqref{eq:Fsistema1} to the linear system 
\begin{equation}
  \label{eq:Fsistema2}
	z_i = p_{i,i-1} z_{i-1} + p_{i,i} z_i + 
    \cdots + p_{i,N} z_N + \Phi_i
\end{equation}
similar to \eqref{eq:Asistema1}. This system can be solved using the same method. 

\section{Symmetry and monotonicity}
\label{Smonotone}

The aim of this section is to prove  Theorem~\ref{thm:decreaseMoran} for the Moran process $\Proc^M$ and Theorem~\ref{thm:decreasebeta} for the proliferation processes $\Proc^\beta$ with $\beta = B,b$. Let $\Proc$ be any process $\Proc^M$ or $\Proc^\beta$ defined on a connected undirected graph $G = (V,E)$ with fitness $r \in (0,+\infty)$ and proliferation parameter $p \in (0,1]$. The process $\Proc$ identifies to the random walk on the state space $\cals = \calp(V)$ with the graph structure given by the transition matrix $P$. Two states $S, S' \in \cals$ are joined by an oriented edge if and only if $p_{S,S'} > 0$ and this probability becomes the weight of the edge. We call \emph{symmetry} of $\Proc$ any symmetry of the weighted and directed graph $\cals$. Denoting $\calp$ the space of trajectories $\sigma = \{S_n\}_{n \geq 0}$ and $\calp_{S_0}$ the subspace of trajectories starting at $S_0$, the probability measure on $\calp_{S_0}$ is given by
$$
\Pp_{S_0} [\sigma(0) = S_0, \dots, \sigma(n) = S_n] = p_{S_0,S_1} \cdots p_{S_{n-1},S_n}.
$$
A more complete description can be seen in Supplementary Material S1 and S2. 

\subsection{G-symmetry.} Monotonicity of the expected fixation time in the Moran process will be based on some kind of symmetry. Let us restrict to the classical case $\Proc = \Proc^M$ 
on the complete graph $G = K_N$. Any state $S \in \calp(V)$ is a clique (of graph $G$) with $i=|S|$ elements endowed with the subgraph structure. Thus the symmetries of $G$ become symmetries of $\Proc$ reducing it to the random walk on the quotient graph $\calk_N$, which is described in Figure~\ref{fig:reducedcomplete} below. 
\begin{figure}[h!]
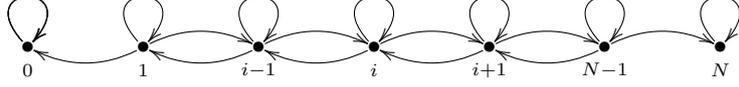

\centering
 $ \xy
    0;/r.12pc/: 
    (-90,0)*{\bullet}="A0"; 
    (-90,-6)*{\scriptstyle 0};
    (-60,0)*{\bullet}="B0"; 
    (-60,-6)*{\scriptstyle 1};
    (-30,0)*{\bullet}="C0"; 
    (-30,-6)*{\scriptstyle  i-1};
    (0,0)*{\bullet}="D0"; 
    (0,-6)*{\scriptstyle  i};
    (30,0)*{\bullet}="E0"; 
    (30,-6)*{\scriptstyle i+1};
    (60,0)*{\bullet}="F0"; 
    (60,-6)*{\scriptstyle  N-1};
    (90,0)*{\bullet}="G0"; 
    (90,-6)*{\scriptstyle N};
    (0,-12)*{};
    {\ar@/^0.5pc/ "B0"; "A0"}; 
    {\ar@/^0.5pc/ "B0"; "C0"}; 
    {\ar@/^0.5pc/ "C0"; "B0"}; 
    {\ar@/^0.5pc/ "C0"; "D0"}; 
    {\ar@/^0.5pc/ "D0"; "C0"}; 
    {\ar@/^0.5pc/ "D0"; "E0"}; 
    {\ar@/^0.5pc/ "E0"; "D0"}; 
    {\ar@/^0.5pc/ "E0"; "F0"}; 
    {\ar@/^0.5pc/ "F0"; "E0"}; 
    {\ar@/^0.5pc/ "F0"; "G0"};
    {\ar@(ul,ur) "A0"; "A0"}; 
    {\ar@(ul,ur) "A0"; "A0"}; 
    {\ar@(ul,ur) "B0"; "B0"}; 
    {\ar@(ul,ur) "C0"; "C0"}; 
    {\ar@(ul,ur) "D0"; "D0"}; 
    {\ar@(ul,ur) "E0"; "E0"}; 
    {\ar@(ul,ur) "F0"; "F0"}; 
    {\ar@(ul,ur) "A0"; "A0"}; 
    {\ar@(ul,ur) "G0"; "G0"}
    \endxy
    $
\caption{The reduced state space $\calk_N$ for the Moran process $\Proc^M$.}
\label{fig:reducedcomplete}
\end{figure}

The fixation probability $\Phi_{S_0}$ is equal to the fixation probability $\Phi_{|S_0|}$ of the reduced model. We also know that $\Proc^M$ can be replaced by the embedded jump process $\hat \Proc^M$ (see Supplementary Material S2). It is easy to see that the reduced process of $\hat \Proc^M$ is the embedded jump process of the reduced process of $\Proc^M$ with forward probability $\frac{r}{r+1}$ and backward probability $\frac{1}{r+1}$. Its state space $\hat \calk_N$ is also shown in Figure~\ref{fig:jumpcomplete}. 
\begin{figure}[h!]
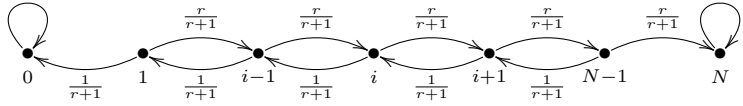

\centering
 $ \xy
    0;/r.12pc/: 
    (-90,0)*{\bullet}="A0"; 
    (-90,-6)*{\scriptstyle 0};
    (-60,0)*{\bullet}="B0"; 
    (-60,-6)*{\scriptstyle 1};
    (-30,0)*{\bullet}="C0"; 
    (-30,-6)*{\scriptstyle  i-1};
    (0,0)*{\bullet}="D0"; 
    (0,-6)*{\scriptstyle  i};
    (30,0)*{\bullet}="E0"; 
    (30,-6)*{\scriptstyle i+1};
    (60,0)*{\bullet}="F0"; 
    (60,-6)*{\scriptstyle  N-1};
    (90,0)*{\bullet}="G0"; 
    (90,-6)*{\scriptstyle N};
    (0,-14)*{};
    (-75,-9)*{\scriptstyle \frac{1}{r+1}};
    (-45,-9)*{\scriptstyle \frac{1}{r+1}};
    (-15,-9)*{\scriptstyle \frac{1}{r+1}};
    (15,-9)*{\scriptstyle \frac{1}{r+1}};
    (45,-9)*{\scriptstyle \frac{1}{r+1}};
    (75,9)*{\scriptstyle \frac{r}{r+1}};
    (-45,9)*{\scriptstyle \frac{r}{r+1}};
    (-15,9)*{\scriptstyle \frac{r}{r+1}};
    (15,9)*{\scriptstyle \frac{r}{r+1}};
    (45,9)*{\scriptstyle \frac{r}{r+1}};
    {\ar@/^0.5pc/ "B0"; "A0"}; 
    {\ar@/^0.5pc/ "B0"; "C0"}; 
    {\ar@/^0.5pc/ "C0"; "B0"}; 
    {\ar@/^0.5pc/ "C0"; "D0"}; 
    {\ar@/^0.5pc/ "D0"; "C0"}; 
    {\ar@/^0.5pc/ "D0"; "E0"}; 
    {\ar@/^0.5pc/ "E0"; "D0"}; 
    {\ar@/^0.5pc/ "E0"; "F0"}; 
    {\ar@/^0.5pc/ "F0"; "E0"}; 
    {\ar@/^0.5pc/ "F0"; "G0"};
    {\ar@(ul,ur) "A0"; "A0"}; 
    {\ar@(ul,ur) "G0"; "G0"}
    \endxy
    $
\caption{The state space $\hat \calk_N$ for the embedded jump process $\hat \Proc^M$ on a complete graph $G=K_N$.}
\label{fig:jumpcomplete}
\end{figure}

However, to compute the expected fixation time, some care must be taken. If the number of non-trivial transitions of a trajectory $\sigma \in \calp_{S_0}$ is equal to $n$, then 
the length of $\sigma$ (in average) is given by
\begin{equation} \label{eq:length}
\tau_{S_0}(\sigma) = n + \sum_{i=0}^{n-1} \frac{p_{S_i,S_i}}{1 - p_{S_i,S_i}} 
= n + \sum_{i=0}^{n-1} \big( \frac{1}{1 - p_{S_i,S_i}} -1 \big)  = \sum_{i=0}^{n-1} \frac{1}{1 - p_{S_i,S_i}}
\end{equation}
where the sojourn time in each state $S_i$ does not depend on $S_i$, but only on $|S_i|$.

Now we can observe that the weighted graph $\calk_N$ has a global symmetry, which we call \emph{G-symmetry}, when we replace $r$ with $1/r$. Equivalently this symmetry applies to the weighted graph $\hat \calk_N$ and the probabilities 
\begin{equation} \label{eq:sojourn}
p_{i,i} = p_{i,i}(r) = \frac{ri}{ri+N-i}.\frac{i-1}{N-1} + \frac{N-i}{ri+N-i}.\frac{N-i-1}{N-1}
\end{equation} 
(or in the same way the sojourn times $1/1-p_{i,i}$) decorating the vertices of $\hat \calk_N$. Namely $p_{i,i}(\frac 1 r) = p_{N-i,N-i}(r)$ for all $i = 1, \dots , N-1$ and all $r > 0$.

As we explain in Supplementary Material S2, conditioning to the fixation event is equivalent to take the \emph{Doob transform} to the process $\Proc$. A detailed definition can be found in \cite[Chapter 17]{LevinPeres} and also in the above cited  Supplementary Material. From Proposition S1, the Doob transfom of the embedded jump process is the same that the embedded jump process of the Doob transform. Now, using G-symmetry, we retrieve the symmetry between the expected fixation time for $i$ mutant individuals and the expected extinction time for $N-i$ resident individuals, formulated by Maruyama and Kimura in \cite{MaruyamaTimes1974, MaruyamaKimura} (see also \cite{Antal2006, Traulsen2015, Taylor2006}). We also see that the expected fixation time of mutants of fitness $r$ and $1/r$ are identical, an observation which is derived from the Maruyama-Kimura symmetry in \cite{HathcockStrogatz}. However, we should note that the results from \cite{Traulsen2015} and \cite{HathcockStrogatz} are more general and involve the whole fixation time distribution. 

The same construction applies when $G$ is a cycle graph, exhibiting a global G-symmetry for all essential states in the sense of \cite{Seneta} (see Figure~\ref{fig:reducedcycle}). They form the communicating class of $1$. But below we describe two examples where this G-symmetry is broken.
\begin{figure}[h!]
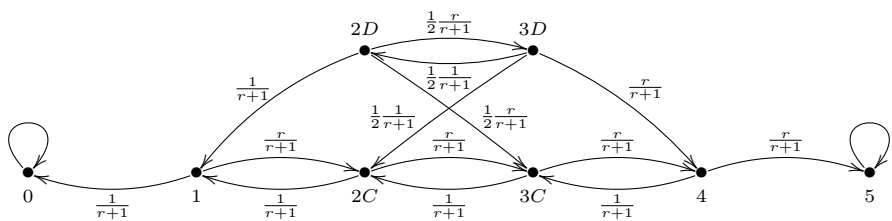
 
\centering
$
\xy
0;/r.175pc/: 
(-60,0)*{\bullet}="A0"; 
(-60,-4)*{\scriptstyle 0};
(-30,0)*{\bullet}="B0"; 
(-30,-4)*{\scriptstyle 1};
(0,0)*{\bullet}="C0"; 
(0,-4)*{\scriptstyle 2C};
(0,22)*{\bullet}="C1"; 
(0,26)*{\scriptstyle 2D};
(30,0)*{\bullet}="D0"; 
(30,-4)*{\scriptstyle 3C};
(30,22)*{\bullet}="D1"; 
(30,26)*{\scriptstyle 3D};
(60,0)*{\bullet}="E0";
(60,-4)*{\scriptstyle 4};
(90,0)*{\bullet}="F0";
(90,-4)*{\scriptstyle 5};
{\ar@/^0.5pc/  "B0"; "A0"}; 
{\ar@/^0.5pc/ "B0"; "C0"}; 
{\ar@/^0.5pc/ "C0"; "B0"}; 
{\ar@/^0.5pc/  "C0"; "D0"}; 
{\ar@/^0.5pc/  "D0"; "C0"};
{\ar@/_0.5pc/  "C1"; "B0"}; 
{\ar@/^0.4pc/  "C1"; "D1"};
{\ar@/^0.1pc/  "C1"; "D0"};
{\ar@/^0.4pc/  "D1"; "C1"}; 
{\ar@/_0.1pc/  "D1"; "C0"};
{\ar@/^0.4pc/  "D1"; "E0"}; 
{\ar@/^0.5pc/  "D0"; "E0"};
{\ar@/^0.5pc/  "E0"; "D0"}; 
{\ar@/^0.5pc/ "E0"; "F0"};  
{\ar@(ul,ur) "A0"; "A0"}; 
{\ar@(ul,ur) "F0"; "F0"}; 
(-45,-6)*{\scriptstyle \frac{1}{r+1}};
(-15,-6)*{\scriptstyle \frac{1}{r+1}};
(-15,5.5)*{\scriptstyle \frac{r}{r+1}};
(15,-6)*{\scriptstyle \frac{1}{r+1}};
(15,5.5)*{\scriptstyle \frac{r}{r+1}};
(15,27)*{\scriptstyle \frac 1 2 \! \frac{r}{r+1}};
(15,17)*{\scriptstyle \frac 1 2 \! \frac{1}{r+1}};
(45,-6)*{\scriptstyle \frac{1}{r+1}};
(45,5.5)*{\scriptstyle \frac{r}{r+1}};
(75,5.5)*{\scriptstyle \frac{r}{r+1}};
(-20,15)*{\scriptstyle \frac{1}{r+1}};
(50,15)*{\scriptstyle \frac{r}{r+1}};
(25,10)*{\scriptstyle \frac 1 2 \! \frac{r}{r+1}};
(5,10)*{\scriptstyle \frac 1 2 \! \frac{1}{r+1}};
\endxy
$
\caption{The reduced state space of $\hat \Proc^M$ on the cycle of order $N=5$. States marked with $C$ are connected and essential, and those marked with $D$ are disconnected and inessential.} \label{fig:reducedcycle}
\end{figure}

\subsection{Two examples} Let $L_3$ be the line graph with $3$ vertices, that is, the star graph $K_{1,2}$. Keeping in mind the notations used in Section~\ref{SBernoulli}, it is easy to describe the reduced state space $\hat \cals$ of the embedded jump process $\hat \Proc^M$, as drawn in Figure~\ref{fig:line} below. 
\begin{figure}[h!]
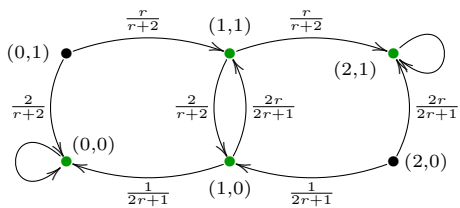

\centering
    $
     \xy
      0;/r.17pc/: 
      (-70,0)*{\green \bullet}="A0"; 
      (-65,3)*{\scriptstyle (0,0)};
      (-40,0)*{\green \bullet}="B0"; 
      (-40,-5)*{\scriptstyle (1,0)};
      (-10,0)*{\bullet}="C0";  
      (-4,0)*{\scriptstyle (2,0)};
      (-70,20)*{\bullet}="A1"; 
      (-77,20)*{\scriptstyle (0,1)};
      (-40,20)*{\green \bullet}="B1"; 
      (-40,25)*{\scriptstyle (1,1)};
      (-10,20)*{\green \bullet}="C1"; 
     (-17,17)*{\scriptstyle (2,1)};
      {\ar@/^0.5pc/  "B0"; "A0"}; 
      {\ar@/^0.5pc/ "C0"; "B0"}; 
      {\ar@/^0.5pc/ "A1"; "B1"}; 
      {\ar@/^0.5pc/  "B1"; "C1"};  
      {\ar@/_0.5pc/ "A1"; "A0"}; 
      {\ar@/_0.5pc/  "B1"; "B0"};
      {\ar@/_0.5pc/  "B0"; "B1"}; 
      {\ar@/_0.5pc/ "C0"; "C1"};        
      {\ar@(ru,rd) "C1"; "C1"};  
      {\ar@(lu,ld)  "A0"; "A0"}; 
      (-55,-6)*{\scriptstyle \frac{1}{2r+1}};
      (-25,-6)*{\scriptstyle \frac{1}{2r+1}};
      (-56,26)*{\scriptstyle \frac{r}{r+2}};
      (-26,26)*{\scriptstyle \frac{r}{r+2}};
      (-77,10)*{\scriptstyle \frac{2}{r+2}};   
      (-47,10)*{\scriptstyle \frac{2}{r+2}};
      (-32,10)*{\scriptstyle \frac{2r}{2r+1}};
      (-2,10)*{\scriptstyle \frac{2r}{2r+1}};
    \endxy
  $
  \caption{The reduced state space of $\hat \Proc^M$ on the line $L_3$.}
\label{fig:line}
\end{figure}

Obviously the central vertex and the two peripheral vertices show different behaviour and G-symmetry does not even make sense. Indeed, although the transition and sojourn probabilities are symmetric, the complementary states $(0,1)$ and $(2,0)$ do not belong to the same communicating class. However the subgraph of $\hat \cals$ determined by the green-coloured states admits a global G-symmetry since the sojourn probabilities 
$\frac 3 2 . \frac{1}{r+2}$ and $\frac 3 2 . \frac{r}{2r+1}$ 
at the states $(1,0)$ and $(1,1)$ are also exchanged when we replace $r$ with $1/r$. Thus the peripheral vertices of $L_3$ have the same expected fixation time, but the mean fixation time is not preserved by this symmetry.

Let us next consider the octahedron $G = O_6$, that is, the 4-regular graph of order $6$. According to the Circulation Theorem of \cite{LHN}, the Moran process on $O_6$ is equivalent to the  Moran process on $K_6$. This means that the number of elements $|S|$ of states $S \in \cals$ performs a random walk on the integer interval $[0,N]$ with absorbing states $0$ and $N$ and forward bias $r$. Indeed, using the symmetries of $G$, the state space of $\hat \Proc^M$ could be reduced to the graph $\hat \calk_6$ (see Figure~\ref{fig:jumpcomplete}). However this argument is only valid for the computation of the fixation probability since the symmetries of $G$ are not symmetries of $\hat \cals$. Distinguishing four central vertices from two peripheral vertices (as before for the star $K_{1,2}$), we obtain an intermediate state space $\hat \cals_1$ for $\hat \Proc^M$ drawn in Figure~\ref{fig:octahedron1}. 
\begin{figure}[h!]
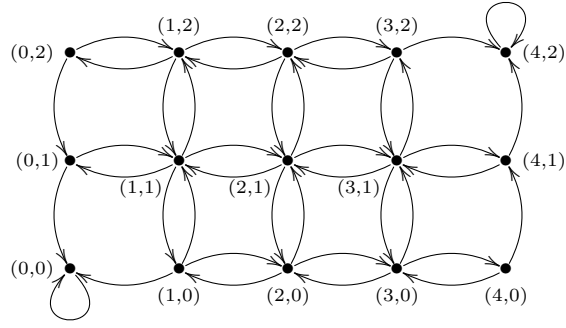

\centering
$
\xy
0;/r.17pc/: 
(-60,0)*{\bullet}="A0"; 
(-67,0)*{\scriptstyle (0,0)};
(-40,0)*{\bullet}="B0"; 
(-40,-5)*{\scriptstyle (1,0)};
(-20,0)*{\bullet}="C0"; 
(-20,-5)*{\scriptstyle (2,0)};
(0,0)*{\bullet}="D0"; 
(0,-5)*{\scriptstyle (3,0)};
(20,0)*{\bullet}="E0"; 
(20,-5)*{\scriptstyle (4,0)};
(-60,20)*{\bullet}="A1"; 
(-66,20)*{\scriptstyle (0,1)};
(-40,20)*{\bullet}="B1"; 
(-47,15)*{\scriptstyle (1,1)};
(-20,20)*{\bullet}="C1"; 
(-27,15)*{\scriptstyle (2,1)};
(0,20)*{\bullet}="D1"; 
(-7,15)*{\scriptstyle (3,1)};
(20,20)*{\bullet}="E1"; 
(27,20)*{\scriptstyle (4,1)};
(-60,40)*{\bullet}="A2"; 
(-67,40)*{\scriptstyle (0,2)};
(-40,40)*{\bullet}="B2"; 
(-40,45)*{\scriptstyle (1,2)};
(-20,40)*{\bullet}="C2"; 
(-20,45)*{\scriptstyle (2,2)};
(0,40)*{\bullet}="D2"; 
(0,45)*{\scriptstyle (3,2)};
(20,40)*{\bullet}="E2"; 
(27,40)*{\scriptstyle (4,2)};
{\ar@/^0.5pc/  "B0"; "A0"}; 
{\ar@/^0.5pc/ "C0"; "B0"}; 
{\ar@/^0.5pc/ "B0"; "C0"}; 
{\ar@/^0.5pc/  "D0"; "C0"}; 
{\ar@/^0.5pc/ "C0"; "D0"}; 
{\ar@/^0.5pc/ "E0"; "D0"}; 
{\ar@/^0.5pc/ "D0"; "E0"};  
{\ar@/^0.5pc/  "A1"; "B1"}; 
{\ar@/^0.5pc/  "B1"; "C1"}; 
{\ar@/^0.5pc/  "C1"; "D1"}; 
{\ar@/^0.5pc/  "D1"; "E1"};  
{\ar@/^0.5pc/  "B1"; "A1"}; 
{\ar@/^0.5pc/  "C1"; "B1"}; 
{\ar@/^0.5pc/  "D1"; "C1"}; 
{\ar@/^0.5pc/  "E1"; "D1"}; 
{\ar@/^0.5pc/  "A2"; "B2"}; 
{\ar@/^0.5pc/  "B2"; "A2"}; 
{\ar@/^0.5pc/  "B2"; "C2"}; 
{\ar@/^0.5pc/  "C2"; "B2"}; 
{\ar@/^0.5pc/ "C2"; "D2"}; 
{\ar@/^0.5pc/  "D2"; "C2"}; 
{\ar@/^0.5pc/  "D2"; "E2"}; 
{\ar@/_0.5pc/ "A1"; "A0"}; 
{\ar@/_0.5pc/ "B1"; "B0"}; 
{\ar@/_0.5pc/ "B0"; "B1"}; 
{\ar@/_0.5pc/ "C1"; "C0"}; 
{\ar@/_0.5pc/ "C0"; "C1"}; 
{\ar@/_0.5pc/ "D1"; "D0"}; 
{\ar@/_0.5pc/ "D0"; "D1"}; 
{\ar@/_0.5pc/ "E0"; "E1"}; 
{\ar@/_0.5pc/ "A2"; "A1"}; 
{\ar@/_0.5pc/ "B2"; "B1"}; 
{\ar@/_0.5pc/ "B1"; "B2"}; 
{\ar@/_0.5pc/ "C2"; "C1"}; 
{\ar@/_0.5pc/  "C1"; "C2"}; 
{\ar@/_0.5pc/ "D2"; "D1"}; 
{\ar@/_0.5pc/  "D1"; "D2"}; 
{\ar@/_0.5pc/ "E1"; "E2"}; 
{\ar@(dr,dl) "A0"; "A0"}; 
{\ar@(ul,ur) "E2"; "E2"}; 
\endxy
$
\caption{A reduction of the state space of $\hat \Proc^M$ on the octahedron $G = O_6$.}
\label{fig:octahedron1}
\end{figure}

The symmetries of $G$ allow us to identify all states in $\hat \cals$ or $\hat \cals_1$ with a single vertex. 
The same happens for all states with five vertices in $\hat \cals$ or $\hat \cals_1$. However, for states with 2 vertices, we retrieve the same connectedness phenomenon as for the cycle, although reduced states $2C$ and $2D$ are both essential. For states with 3 vertices, we can find two different subgraph structures induced by the graph structure of $G$, the line and the cycle, that we will denote $3L$ and $3T$. For states with 4 vertices, two different subgraph structures are also found: two triangles with a common face $4T$ and a cycle $4C$. As mentioned above, for states with $5$ vertices, only a pyramidal structure is induced. 
For computing the fixation probability of any state, the only thing that matters is its number of elements. However the subgraph structure is important to computing its expected fixation time. The sojourn time in state $3T$ is higher than in $3L$ since their probabilities are related by
$p_{3T,3T} = \frac 3 2 p_{3L,3L}$. The reduced state space is shown is Figure~\ref{fig:octahedron2} below.  
Observe that $3T$ evolves forward necessarily to $4T$, whereas $3L$ evolves forward to $4C$ with probability $1/3$ and to $4T$ with probability $2/3$. In the other direction $4T$ evolves backward equiprobably to $3L$ or $3T$.
In summary, there is no G-symmetry since we need to distinguish the different subgraph structures induced in states of order $3$ and $4$. 

\begin{figure}[h!]
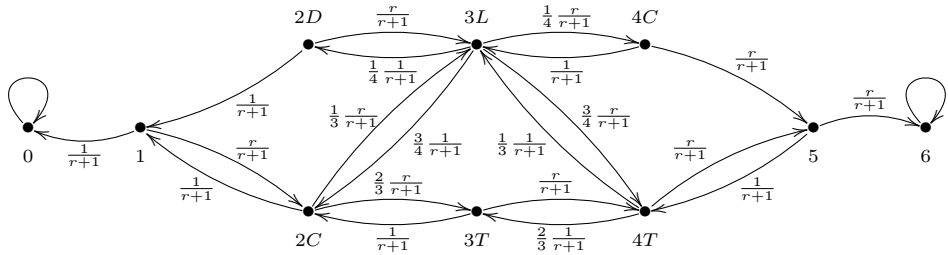
 
$
\xy
0;/r.175pc/: 
(-70,0)*{\bullet}="A0"; 
(-70,-5)*{\scriptstyle 0};
(-50,0)*{\bullet}="B0"; 
(-50,-5)*{\scriptstyle 1};
(-20,-15)*{\bullet}="C0"; 
(-20,-20)*{\scriptstyle 2C};
(-20,15)*{\bullet}="C1"; 
(-20,20)*{\scriptstyle 2D};
(10,-15)*{\bullet}="D0"; 
(10,-20)*{\scriptstyle 3T};
(10,15)*{\bullet}="D1"; 
(10,20)*{\scriptstyle 3L};
(40,-15)*{\bullet}="E0"; 
(40,-20)*{\scriptstyle 4T};
(40,15)*{\bullet}="E1"; 
(40,20)*{\scriptstyle 4C};
(70,0)*{\bullet}="F0";
(70,-5)*{\scriptstyle 5};
(90,0)*{\bullet}="G0"; 
(90,-5)*{\scriptstyle 6};
{\ar@/^0.4pc/  "B0"; "A0"}; 
{\ar@/^0.4pc/ "B0"; "C0"}; 
{\ar@/^0.4pc/ "C0"; "B0"}; 
{\ar@/^0.4pc/ "C1"; "B0"}; 
{\ar@/^0.4pc/  "C0"; "D0"}; 
{\ar@/^0.4pc/  "D0"; "C0"};
{\ar@/^0.4pc/  "C0"; "D1"}; 
{\ar@/^0.4pc/  "D1"; "C0"};
{\ar@/^0.4pc/  "C1"; "D1"}; 
{\ar@/^0.4pc/  "D1"; "C1"};
{\ar@/^0.4pc/  "D1"; "E1"};
{\ar@/^0.4pc/  "E1"; "D1"}; 
{\ar@/^0.4pc/  "D1"; "E0"};
{\ar@/^0.4pc/  "E0"; "D1"}; 
{\ar@/^0.4pc/  "D0"; "E0"};
{\ar@/^0.4pc/  "E0"; "D0"}; 
{\ar@/^0.4pc/ "E0"; "F0"}; 
{\ar@/^0.4pc/ "F0"; "E0"}; 
{\ar@/^0.4pc/ "E1"; "F0"}; 
{\ar@/^0.4pc/ "F0"; "G0"}; 
{\ar@(ul,ur) "A0"; "A0"}; 
{\ar@(ul,ur) "G0"; "G0"}; 
(-60,-5)*{\scriptstyle \frac{1}{r+1}};
(-40,-11)*{\scriptstyle \frac{1}{r+1}};
(-30,-4)*{\scriptstyle \frac{r}{r+1}};
(-30,4)*{\scriptstyle \frac{1}{r+1}};
(-4,-10)*{\scriptstyle \frac 2 3   \frac{r}{r+1}};
(-5,-20)*{\scriptstyle \frac{1}{r+1}};
(-12,2)*{\scriptstyle \frac 1 3 \frac{r}{r+1}};
(-5,20)*{\scriptstyle \frac{r}{r+1}};
(-5,10)*{\scriptstyle \frac 1 4  \frac{1}{r+1}};
(3,-3)*{\scriptstyle \frac 3 4  \frac{1}{r+1}};
(25,-20)*{\scriptstyle \frac 2 3   \frac{1}{r+1}};
(24,-10)*{\scriptstyle \frac{r}{r+1}};
(26,20)*{\scriptstyle \frac 1 4  \frac{r}{r+1}};
(33,2)*{\scriptstyle \frac 3 4   \frac{r}{r+1}};
(26,10)*{\scriptstyle \frac{1}{r+1}};
(18,-3)*{\scriptstyle \frac 1 3   \frac{1}{r+1}};
(60,-11)*{\scriptstyle \frac{1}{r+1}};
(48,-4)*{\scriptstyle \frac{r}{r+1}};
(59,12)*{\scriptstyle \frac{r}{r+1}};
(80,5)*{\scriptstyle \frac{r}{r+1}};
\endxy
$
\caption{The reduced state space of $\hat \Proc^M$ on the octahedron $G = O_6$.}
\label{fig:octahedron2}
\end{figure}

\subsection{Monotonicity in $r$}
A consequence of G-symmetry for complete and cycle graphs is the fact that the maximum fixation time corresponds to the neutral drift $r =1$. The two previous examples, line and octahedron, show that this symmetry can partially persist, but it is globally broken (except in the two cases just mentioned). This implies a shift in the fitness value at which the maximum fixation time is reached, showing that the symmetry in Figure~\ref{fig:moran_conditional} is only apparent.  

\begin{proof}[Proof of Theorem~\ref{thm:decreaseMoran}] For any graph $G$ of order $N$, there are only a finite number of states in which the G-symmetry on $\calk_N$ can be broken. Therefore, the fitness value at which the maximum fixation time is reached can only be shifted finitely many times. 
For each vertex $v \in V$, as a rational function has finitely many local maxima, there is a fitness value $r(v) > 0$ such that the expected fixation time
$T^M_{\{v\}}(r)$ decrease for $r \geq r(v)$.
\end{proof}

The same argument applies for the mean fixation time $T_1^M$ (as convex combination of functions $T^M_{\{v\}}$) and the mean absorption time $\tau_1^M$ (as convex combination of the mean fixation and extinction times according to Supplementary Material S1). Obviously our argument does not allow us to say whether the maximum shifts to the right or left, nor the amplitude of the shift. For example, in the line $L_3$, the maximum is reached at $r = 1.4253$, whereas $r = 0.9997$ in the octahedron. In fact, for the graphs of order $6$, the fitness value where the maximum is reached ranges between $r=0.9991$ and $r =1.2559$. 

\subsection{Graphical model} 
To prove Theorem~\ref{thm:decreasebeta} about the mononicity of expected fixation time depending on the proliferation parameter, we now consider the proliferation process $\Proc = \Proc^\beta$ with $\beta = B,b$. Let us recall that Moran and proliferation processes can be obtained from a graphical representation  \cite{AlcaldeGuerberoffLozano2022}, which is inspired on Harris'  representation for contact processes \cite{Harris} (see also \cite{Griffeath}). This realisation is done on the set $G \times [0,+\infty)$ and we start by taking two independent Poisson distributions 
$$
\Pi_\varepsilon(v) = \{ \tau^\varepsilon_i(v) : i \in \N^\ast\}
$$
on $[0,+\infty)$, with rate $r$ for $\varepsilon = 1$ and rate $1$ for $\varepsilon = 2$. Each distribution determines a set of coloured marks
$$
\calm_\varepsilon = \{ (v,\tau) \in V \times [0,+\infty) : \tau \in \Pi_\varepsilon(v)\}, 
$$
red for $\varepsilon = 1$ and blue for $\varepsilon = 2$. Next we add two random sets of red and blue arrows. For any $(v,\tau) \in V \times [0,+\infty)$, we denote $A(v,\tau) \subset E \times \{v\}$ the set of edges with origin $v$. The set $\cala_1$ of red arrows is defined as follows: 
\begin{list}{\itemii}{\leftmargin=14pt}

\item[1)] For the Moran process, an element of $A(v,\tau)$ is chosen at random if $(v,\tau) \in \calm_1$. 

\item[2)] For Bernoulli proliferation, we have $\cala_1(v,\tau) = A(v,\tau)$ with probability $p$ and $\cala_1(v,\tau) = \emptyset$ with probability $1-p$.

\item[3)] For binomial proliferation, each element of $\cala_1(v,\tau)$ is chosen independently at random in $A(v,\tau)$ with probability $p$. 
\end{list}

\noindent
To define $\cala_2$, similarly to that happens in the Moran process, a unique element of $A(v,\tau)$ is chosen at random if $(v,\tau) \in \calm_2$. 

From this graphical model we obtain a continuous-time Markov chain $\{X_t\}_{t \in [0,+\infty)}$ which will realise the discrete-time Markov chain $\{S_n\}_{n \in \N}$ provided by $\Proc$. 
From now on, we will not distinguish between the expected absorption and fixation times of both types of processes, since they differ only in a factor that does not affect the inequalities that interest us. In fact, it will be convenient to consider only continuous-time processes, even if the results were stated for discrete-time processes. If $\Proc = \Proc^\beta$, we will write $\{X_t^\beta\}$ if we need to specify it.
Ignoring blue marks and arrows, we obtain another continuous-time Markov chain, which we will denote by $\{\bar X_t\}$ and call \emph{red process}. Up to scale, this process is uniquely determined by the underlying graph.
 
Processes with the same fitness, but associated to different proliferation parameters $0 < p' < p$ can be coupled taking the same random set of red and blue marks, but randomly deleting red arrows (see Supplementary Material S3 for details). Obviously, if the overall processes $\{X_t\}$ and $\{X'_t\}$ are coupled, then the red processes $\{\bar X_t\}$ and $\{\bar X'_t\}$ are also coupled.

For the red process, under Moran updating (1) and conditioning to an initial state $\bar X_0 = S_0 = \{v\}$, we define the first sojourn time as 
$$
\bar t_1 = \inf \{  t \in [0,+\infty) : \bar X_t \neq \bar X_0 \}
$$
Recursively, if $\bar t_n < +\infty$, we define 
$$
\bar t_{n+1} = \inf \{  t \in [\bar t_n,+\infty) : \bar X_t \neq \bar X_{\bar t_n} \}.
$$
All sojourn times
$$
\bar s_n = \bar t_n - \bar t_{n-1}
$$
follow exponential distributions, but the parameters depend on the states $S_{n-1}$. Given the random variable 
$$
\bar \n = \sup \{ n \geq 1 : \bar  t_n < +\infty \}, 
$$
the expected fixation time $T_{S_0}$ is the expectation of the random variable 
\begin{equation} \label{eq:absorptionred}
\bar  t_{\bar  \n} = \sum_{n=1}^\infty  \bar t_n \mathds{1}_{\{\bar \n=n\}}, 
\end{equation}
where $\mathds{1}_{\{\bar \n=n\}}$ is the characteristic function of the event $\{\bar  \n=n\}$. 
We have already described the 
mean fixation time $\bar T_1$ of the (neutral) red process for complete, cycle and star graphs, as it coincides with the limit $\lim_{r \to +\infty} \, T_1^M$ analytically computed in Remarks~\ref{rem:redcomplete},~\ref{rem:redcycle},~and~\ref{rem:redstar}. As observed in the Supplementary Material S3, if the sojourn times $\bar s_n$ are identically distributed, then Wald's first lemma implies that: 
\begin{equation} \label{eq:fixationred}
\bar T_{S_0} = \E[\bar t_{\bar \n} ] = \E[\bar t_1] \, \E[\bar \n].
\end{equation} 
More generally, returning to the overall process $\{X_t\}$, we can also define the first sojourn time 
$$
t_1 = \inf \{  t \in [0,+\infty) :  X_t \neq X_0 \},
$$
and recursively  
$$
t_{n+1} = \inf \{  t \in [t_n,+\infty) : X_t \neq X_{ t_n} \}
$$
if $t_n < +\infty$. The conditional distribution of $t_1$ given by $X_0 = S_0 = \{v\}$ is exponential with parameter $\lambda(S_0)$. Similarly 
$$
s_n = t_n - t_{n-1}
$$ follows an exponential distribution of parameter $\lambda(S_{n-1})$ when we conditioned to the state $S_ {n-1} = X_{t_{n-1}}$ in $\cals$. Thus, if $\{S_n\}$ denote the embedded jump chain associated to $\{X_t\}$ and if we conditioned to $X_0 = S_0, \dots,  X_{t_{n-1}} = S_ {n-1}$, the variable
$$
t_n =  s_1 + \dots +  s_n
$$
has a hypoexponential distribution with parameters $\lambda(S_0), \dots , \lambda(S_{n-1})$. This description should be compared to that of \cite[Section 4.8]{KT}). Defining 
$$
\n = \sup \{ n \geq 1 : t_n < +\infty \},
$$
then \eqref{eq:absorptionred} becomes: 
$$ 
t_\n = \sum_{n=1}^\infty  t_n \mathds{1}_{\{\n=n\}} 
$$ 
and \eqref{eq:fixationred} can be reformulated as follows (see again Supplementary Material S3): for any state $S_0 = \{v\}$, if the sojourn times $s_n$ are identically distributed, then the expected fixation time is given by 
\begin{equation} \label{eq:fixationoverall}
T_{S_0} = \E[t_\n | \F_{S_0}] = \E[t_1 | \F_{S_0}]\E[\n | \F_{S_0}],
\end{equation}
where $\F_{S_0}$ is the fixation event. Nevertheless, this approach to obtain monotonicity in $r$ and $p$ depends strongly on the hypothesis about the identical distribution of sojourn times.  
Furthermore, even under this constraint, it is not possible to prove monotonicity in $r$, not even for the expected value of the first sojourn time $\E[\bar t_1]$ or $\E[t_1 | \F_{S_0}]$. 

\subsection{Monotonicity in $p$}

Let $\Proc^\beta$ be the proliferation process on $G$ where $\beta = B,b$. As before under Moran updating, we can now define random variables $t_1^\beta$, $t_n^\beta$ and $t_\n^\beta$ associated to the continuous-time Markov chain $\{X_t^\beta\}$ under $\beta$ proliferation updating. The random variables associated to the corresponding red process will be denoted  $\bar t_1^\beta$, $\bar  t_n^\beta$ and $\bar t_{\bar \n}^\beta$. As before, we choose a vertex $v$ and take $S_0 = \{v\}$ as initial state. 

Given any fitness value $r > 0$, if the proliferation parameters $0 < p' \leq p$, then 
\begin{equation} \label{eq:redtimen}
\bar t_n^\beta(p') \geq \bar t_n^\beta(p)
\end{equation}
for any pair of coupled trajectories. 
Therefore
\begin{equation} \label{eq:expectedredtimen}
\E[\bar t_n^\beta(p')] \geq \E[\bar t_n^\beta(p)], 
\end{equation}
and then 
\begin{equation} \label{eq:redtime}
\E[\bar t_{\bar \n'}^\beta(p')] \geq \E[\bar t_{\bar \n}^\beta(p)]. 
\end{equation}

In the overall process, the blue marks and arrows modify the value of probabilities and times, but they do not modify the inequality after conditioning. 

\begin{proof}[Proof of Theorem~\ref{thm:decreasebeta}] Indeed, replacing the red process by the overall process and translating the conditioning by using fixation probabilities (as we describe in Supplementary Material S2), we deduce: 
\begin{align} \label{eq:monotonep}
\begin{split}
T_{S_0}^\beta(r,p')  & =  \frac{\E[t^\beta_{\n'}(p')]}{\Phi_{S_0}^\beta(r,p')}  \\ 
& \geq  \frac{\E[t^\beta_{\n'}(p')]}{\Phi_{S_0}^\beta(r,p)} \\
& \geq  \frac{\E[t^\beta_{\n}]}{\Phi_{S_0}^\beta(r,p)} = T_{S_0}^\beta(r,p), 
\end{split}
\end{align}
proving the result. 
\end{proof}


\section{Critical proliferation}
\label{Sphase}

In this section, we shall prove Theorem~\ref{thm:ptbeta}, that is, the existence of a second critical value of the proliferation parameter $p^\beta_t$ such that $\beta$ proliferation is advantageous in term of fixation time if $p \geq p^\beta_t$ and disadvantatgeous if $p \leq p^\beta_t$. The proof will be immediately derived from three lemmas. The first lemma is a corollary of Theorem~\ref{thm:decreasebeta}:  

\begin{lemma} \label{lem:1}
For any graph $G$ and any fitness value $r > 0$, the mean fixation rate $\Rho^\beta(r,p)$ is monotonically increasing in $p$.
\end{lemma}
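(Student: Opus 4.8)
The plan is to obtain Lemma~\ref{lem:1} directly as a corollary of Theorem~\ref{thm:decreasebeta}, with no further analysis required. First I would recall that the mean fixation time is the average of the single-vertex fixation times,
\[
T^\beta_1(r,p) = \frac{1}{N} \sum_{v \in V} T^\beta_{\{v\}}(r,p),
\]
and that Theorem~\ref{thm:decreasebeta} asserts precisely that each summand is non-increasing in $p$: for $0 < p' \leq p$ one has $T^\beta_{\{v\}}(r,p') \geq T^\beta_{\{v\}}(r,p)$. Since a convex combination (here, the uniform average over $v \in V$) of non-increasing functions is again non-increasing, it follows immediately that $T^\beta_1(r,p') \geq T^\beta_1(r,p)$ whenever $0 < p' \leq p$. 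In other words, the mean fixation time $T^\beta_1(r,\cdot)$ is monotonically non-increasing in $p$.

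Next I would pass from the time to the rate. By definition,
\[
\Rho^\beta(r,p) = \frac{T^M_1(r)}{T^\beta_1(r,p)},
\]
and the key observation is that the numerator $T^M_1(r)$ is the fixation time of the Moran process and therefore does not depend on $p$ at all. Since all fixation times of non-absorbing initial states are strictly positive, the denominator $T^\beta_1(r,\cdot)$ is a strictly positive function that we have just shown to be non-increasing in $p$; hence its reciprocal $1/T^\beta_1(r,\cdot)$ is non-decreasing, and multiplying by the fixed positive constant $T^M_1(r)$ preserves this monotonicity. Therefore $\Rho^\beta(r,p') \leq \Rho^\beta(r,p)$ for all $0 < p' \leq p$, which is exactly the claimed monotonic increase of $\Rho^\beta(r,p)$ in $p$.

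There is essentially no analytic obstacle in this argument, since the statement is a formal consequence of the already-established Theorem~\ref{thm:decreasebeta}; the only points deserving care are the two elementary facts that averaging over $v$ preserves monotonicity in $p$ and that passing to the reciprocal of a positive decreasing quantity reverses the direction of the inequality. The single hypothesis I would explicitly verify is the strict positivity of $T^\beta_1(r,p)$ for every $r > 0$ and $p \in (0,1]$, so that the rate $\Rho^\beta(r,p)$ is well defined throughout the relevant range; this holds because the expected fixation time conditioned on a non-absorbing initial state $\{v\}$ is always strictly positive.
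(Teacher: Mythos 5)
Your proposal is correct and follows essentially the same route as the paper: the paper's proof likewise deduces the monotonicity of $\Rho^\beta(r,p)$ directly from Theorem~\ref{thm:decreasebeta}, observing that $T^M_1(r)$ is independent of $p$ while the denominator $T^\beta_1(r,p)$ is decreasing in $p$. You merely make explicit two steps the paper leaves implicit --- that averaging the single-vertex times over $v$ preserves monotonicity, and that $T^\beta_1(r,p) > 0$ so the reciprocal is well defined and order-reversing --- which is a harmless (indeed welcome) elaboration, not a different argument.
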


\begin{proof} The mean fixation rate 
$$
\Rho^\beta(r,p) = \frac{T^M_1(r)}{T^\beta_1(r,p)}
$$
is obviously monotonically increasing in $p$ since the mean fixation time $T^\beta_1(r,p)$ is monotonically decreasing in $p$ according to Theorem~\ref{thm:decreasebeta}.
\end{proof}

From the previous remark, we know that both limits
$$
\lim_{p \to 0} T^\beta_1 (r,p) \quad \text{and} \quad \lim_{p \to 0}\Rho^\beta(r,p)
$$ 
always exist.

\begin{lemma} \label{lem:2}
For any graph $G$ and any fitness value $r > 0$, 
$\lim_{p \to 0}\Rho^\beta(r,p) < 1$.
\end{lemma}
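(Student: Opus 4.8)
The plan is to recast the inequality in terms of fixation times. Since $\Rho^\beta(r,p)=T^M_1(r)/T^\beta_1(r,p)$ and $T^M_1(r)$ is independent of $p$, while $T^\beta_1(r,p)$ is nonincreasing in $p$ by Theorem~\ref{thm:decreasebeta}, the limit $L:=\lim_{p\to0^+}T^\beta_1(r,p)=\sup_{p}T^\beta_1(r,p)$ exists in $(0,+\infty]$, and the assertion $\lim_{p\to0^+}\Rho^\beta(r,p)<1$ is exactly the statement $L>T^M_1(r)$ (with the convention that $L=+\infty$ settles the case at once). It therefore suffices to bound $L$ from below by $T^M_1(r)$, and since $T^\beta_1=\frac1N\sum_v T^\beta_{\{v\}}$ it is enough to argue vertex by vertex. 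I would do this inside the continuous-time graphical representation of Section~\ref{Smonotone}, using the conditioned form $T^\beta_{\{v\}}(r,p)=\E[t^\beta_{\n}(p)\,\mathds{1}_{\F_{\{v\}}}]/\Phi^\beta_{\{v\}}(r,p)$ and extracting the $p\to0$ behaviour by a leading-order expansion in $p$ of numerator and denominator.

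The mechanism I would exploit is the asymmetric way in which $p\to0$ degrades the two kinds of transitions. For every state $S\neq\emptyset,V$ the backward rate (a resident reproducing onto a mutant neighbour) stays bounded away from $0$, because $G$ is connected and such an $S$ always carries a mutant--resident edge; every forward rate (a mutant proliferating onto a resident) instead carries a factor $p$. Hence $\Phi^\beta_{\{v\}}(r,p)$ is of order $p^{k}$, with $k$ the least number of proliferation events that fill $V$ from $\{v\}$, and the same minimal-cost trajectories dominate $\E[t^\beta_{\n}(p)\,\mathds{1}_{\F_{\{v\}}}]$. Passing to the Doob ($h$-)transform, the forward rates are boosted by the ratios $\Phi^\beta_{S'}/\Phi^\beta_S$, which are of order $p^{-1}$, so the conditioned sojourn times remain of order one and their $p\to0$ limits sum to the value $L$. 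The point is that along these dominant trajectories the proliferation process must wait out, at each intermediate state, a sojourn governed essentially by the backward clock, whereas in the Moran process the mutant escapes each state through a forward move of rate comparable to the backward one — which should force the limiting sum to exceed $T^M_1(r)$.

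The main obstacle is precisely this last comparison, and it is where the graph enters. Carrying it out rigorously means identifying the leading coefficients of the $\Phi^\beta_S(r,p)$ along the dominant fixation paths (equivalently, the limiting $h$-transformed chain) and then matching the resulting sum of limiting sojourn times against $T^M_1(r)$ on an arbitrary $G$; the combinatorics of the minimal-cost paths, and hence the value of $L$, depend on the induced subgraph structure at each state, exactly the phenomenon that already breaks the G-symmetry in the examples of Section~\ref{Smonotone}. For binomial updating I expect the cleanest route is a coupling with the Moran dynamics: to leading order in $p$ a proliferation event that actually changes the state converts a single resident neighbour, just as a Moran event does, so the two conditioned jump chains agree in the limit and the whole discrepancy is concentrated in the sojourn rates, where the strict inequality is produced. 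Establishing that this discrepancy has the right sign for every graph — and adapting the argument to Bernoulli updating, where a single event can convert many neighbours at once — is the delicate part of the proof.
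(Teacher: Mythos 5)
Your proposal is a plan, not a proof: the decisive inequality --- that $L=\lim_{p\to 0}T^\beta_1(r,p)$ exceeds $T^M_1(r)$ --- is exactly the step you leave open at the end, and nothing before it substitutes for it. Your skeleton (monotonicity from Theorem~\ref{thm:decreasebeta} gives existence of the limit; reduce to $L>T^M_1(r)$; argue vertex by vertex) is the same as the paper's, but where the paper then disposes of the lemma in two lines --- asserting that for $p$ small the expected sojourn time in the initial state $\{v\}$ already exceeds $T^M_1(r)$ --- your Doob-transform bookkeeping correctly refuses to grant this. Indeed the $h$-transform leaves the holding structure unchanged ($\hat p_{S,S}=p_{S,S}$ in discrete time, and the total holding rate is preserved in continuous time, by harmonicity of $\Phi^\beta$), and the escape probability from $\{v\}$ contains the backward term $\frac{1}{w_1}\sum_{w\sim v}d(w)^{-1}$, which is bounded away from $0$ uniformly in $p$. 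So the conditioned sojourn at the initial state stays bounded as $p\to 0$; the divergence the paper invokes is a property of the red process (no resident reproduction), i.e.\ of the $r\to+\infty$ regime of Proposition~\ref{prop:infinityfitness}, not of $T^\beta$ at fixed $r$. In this respect your analysis is sharper than the paper's own proof, which is flawed at precisely the point you declined to assert.

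Worse, the gap you flag cannot be closed in the stated generality, and the obstruction is the mechanism you yourself single out for Bernoulli updating (one successful event converts the whole neighbourhood). On $G=K_N$ the paper computes $T^B_1(r,p)=\frac{r+N-1}{1+rp}$, so $\lim_{p\to 0}T^B_1(r,p)=r+N-1$ is finite, while \eqref{eq:fix_Moran} gives $T^M_1(1)=(N-1)^2$; hence $\lim_{p\to 0}\Rho^B(1,p)=(N-1)^2/N>1$ for every $N\geq 3$ (equivalently, \eqref{eq:fix_rate} reads $\Rho^B(1,p)=(1+p)(N-1)^2/N$, which exceeds $1$ for all $p$). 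Conditioned on fixation, a rare successful Bernoulli event fixes the clique in a single step, so proliferation is time-advantageous for all $p$, Lemma~\ref{lem:2} fails for $\beta=B$ at $r=1$ on cliques, and the corresponding $p^B_t$ of Theorem~\ref{thm:ptbeta} degenerates. For binomial updating your small-$p$ coupling route does look salvageable: to leading order a successful proliferation converts a single resident, the limiting $h$-transformed jump chain is Moran-like, and the conditioned holding times are strictly longer (on $K_3$ at $r=1$ the limit is $27/5>4=T^M_1(1)$); but carrying out the matching on an arbitrary graph --- identifying the leading coefficients of $\Phi^b_S$ along the dominant paths, as you propose --- is the real content and remains undone, and your closing admission that the sign of the discrepancy might depend on the graph and the update rule is, as the clique example shows, not caution but fact.
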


\begin{proof} For each vertex $v \in V$, there exists $p > 0$ small enough to assure that the expected value of the sojourn time in the initial state $S_0 =\{v\}$ (which depends on $p$ when $\beta = B$, but also on $d(v)$ when $\beta = b$) is greater than the mean fixation time $T^M_1(r)$. 
As $V$ is finite, we deduce that all expected fixation times $T^\beta_{\{v\}} (r,p) > T^M_1(r)$ for $p$ small enough. It follows   
$$\lim_{p \to 0} T^\beta_1 (r,p) > T^M_1(r),$$
and then 
$$
\lim_{p \to 0}\Rho^\beta(r,p) < 1, 
$$
as we wanted. 
\end{proof}

%

\begin{lemma} \label{lem:3}
If $G \neq K_2$, then $\Rho^\beta(r,1) \geq 1$
\end{lemma}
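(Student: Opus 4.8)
The plan is to realise the statement as the boundary case of Theorem~\ref{thm:decreasebeta}, viewing the Moran process as the ``minimal arrow'' proliferation and the process at $p=1$ as the ``maximal arrow'' one. First I would note that at $p=1$ the two proliferation rules coincide: a mutant reproduction at a vertex $v$ always converts the whole neighbourhood $A(v,\tau)$, so $\Proc^B$ and $\Proc^b$ define the same continuous-time chain $\{X_t^{\beta}(1)\}$, and it suffices to compare it with the Moran chain $\{X_t^M\}$. Using the graphical representation of Section~\ref{Smonotone}, I would couple the two by sharing the red and blue Poisson marks and declaring the single Moran red arrow at $(v,\tau)$ to be one of the arrows of $A(v,\tau)$ used by the $p=1$ process; thus the red arrow set of the Moran process is contained in that of $\Proc^\beta(1)$ at every mark. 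Since $G$ is connected and $G\neq K_2$ we have $N\ge 3$, so the degree sum $2|E|\ge 2(N-1)$ forces a vertex of degree at least $2$, at which the inclusion of arrow sets is strict and the domination below is genuine.

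Next I would establish the pathwise domination $X_t^M\subseteq X_t^{\beta}(1)$ for all $t\ge 0$ by induction over the successive marks, exactly as in the coupling used for \eqref{eq:redtimen}: a red mark only adds mutants, and by arrow inclusion the $p=1$ process adds at least those added by Moran, while a blue mark only removes mutants and preserves the inclusion in each configuration of $v$. Two consequences follow. First, whenever the Moran process reaches $V$ the proliferation process has already done so, so $\Phi^M_{\{v\}}(r)\le \Phi^{\beta}_{\{v\}}(r,1)$ and $\F^M_{\{v\}}\subseteq \F^{\beta}_{\{v\}}$. Second, on $\F^M_{\{v\}}$ the absorption time of $\Proc^\beta(1)$ is no larger than that of $\Proc^M$. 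At the level of the red processes this yields $\bar t_n^{\beta}(1)\le \bar t_n^M$ and hence the analogue of \eqref{eq:redtime}, namely $\E[\bar t_{\bar\n}^{\beta}(1)]\le \E[\bar t_{\bar\n}^M]$.

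Finally I would transfer this to the conditioned (fixation) times by repeating the computation \eqref{eq:monotonep}: writing each fixation time as $\E[t_{\n}\mathds{1}_{\F}]/\Phi$ and using both the fixation-probability domination $\Phi^M_{\{v\}}\le\Phi^{\beta}_{\{v\}}(r,1)$ and the absorption-time domination of the red process, one obtains
\[
T^M_{\{v\}}(r)=\frac{\E[t^M_{\n}\mathds{1}_{\F^M}]}{\Phi^M_{\{v\}}(r)}\ \ge\ \frac{\E[t^M_{\n}\mathds{1}_{\F^M}]}{\Phi^{\beta}_{\{v\}}(r,1)}\ \ge\ \frac{\E[t^{\beta}_{\n}(1)\mathds{1}_{\F^\beta}]}{\Phi^{\beta}_{\{v\}}(r,1)}=T^{\beta}_{\{v\}}(r,1).
\]
Averaging over $v\in V$ gives $T^M_1(r)\ge T^{\beta}_1(r,1)$, i.e. $\Rho^\beta(r,1)\ge 1$. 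The delicate point, which I expect to be the main obstacle, is the second inequality in the display: the two coupled chains have different fixation events ($\F^M\subseteq\F^\beta$), so the pathwise bound $t^{\beta}_{\n}(1)\le t^M_{\n}$ only controls the common paths, and one must argue that the extra fixation paths of $\Proc^\beta(1)$ (where it fixes while Moran goes extinct) do not raise the conditional mean. This is precisely the ``the blue marks do not modify the inequality after conditioning'' step of Theorem~\ref{thm:decreasebeta}, which I would invoke with the Moran process playing the role of the smaller parameter, the hypothesis $G\neq K_2$ ensuring that the comparison is non-degenerate.
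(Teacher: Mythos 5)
Your proposal is correct and takes essentially the same route as the paper: the paper's proof also couples $\Proc^M$ with $\Proc^\beta$ at $p=1$ through the graphical model (the single Moran red arrow being contained in the full arrow set $A(v,\tau)$, with $G \neq K_2$ guaranteeing some vertex with at least two red arrows), deduces $T^\beta_{\{v\}}(r,1) \leq T^M_{\{v\}}(r)$ from the pathwise domination, and averages over $v$. The conditioning subtlety you flag at the end is handled in the paper exactly as you propose, by the same ``blue marks and arrows do not modify the inequality after conditioning'' step invoked in the proof of Theorem~\ref{thm:decreasebeta}.
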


\begin{proof} First, observe that the Moran process $\Proc^M$ and the proliferation process $\Proc^\beta$ can be coupled if $p = 1$. Furthermore, if $G \neq K_2$, the number of red arrows must be $\geq 2$ for some vertex $v \in V$.
This implies 
$T^\beta_{\{v\}} (r,1) \leq T^M_{\{v\}}(r)$
and 
then
$$
\Rho^\beta(r,1) = \frac{T^M_1(r)}{T^\beta_1(r,1)} \geq 1, 
$$
as we asserted. 
\end{proof}

\begin{proof}[Proof of Theorem~\ref{thm:ptbeta}] For any graph $G \neq K_2$ and any fitness value $r > 0$, the fixation time rate 
$\Rho^\beta(r,p) < 1$ for 
$p$ small enough from Lemma~\ref{lem:2} and $\Rho^\beta(r,1) \geq  1$ from Lemma~\ref{lem:3}. From Lemma~\ref{lem:1}, there exists a unique proliferation value $p_t = p^\beta_t(r) \in (0,1]$ such that 
$\Rho^\beta(r,p_t) = 1$. For $p < p_t$, the rate $\Rho^\beta(r,p_t) < 1$ and the proliferation is disadvantageous, and for $p > p_t$ the rate $\Rho^\beta(r,p_t) > 1$ and the proliferation is advantageous.
\end{proof}

\section{Conclusion}

\emph{Proliferation processes} were introduced in \cite{AlcaldeGuerberoffLozano2022} as a realistic method for increasing the fixation probability of mutant individuals in populations structured on graphs. However, somewhat surprisingly, we found that there always is a value of the proliferation parameter depending on the graph structure in which the fixation probability was the same. For lower values, proliferation is disadvantageous and only for higher values is advantageous. However, it could be expected that the occupation of neighbouring nodes following Bernoulli or binomial draws would always be advantageous in terms of (mean) fixation time. 

This idea was initially supported by the analytical computation of mean absorp\-tion and fixation times for Bernoulli proliferation on complete, cycle and star graphs in Section~\ref{SBernoulli} and binomial proliferation on complete graphs in Section~\ref{Sbinomial}. We have applied methods similar to those used in \cite{Antal2006}, \cite{BHR} and \cite{H} to solve the equations formulated from the transition matrix of the Moran process. Exact computations of mean fixation times for all graphs of order 6 (included in Supplementary Material) raised two questions about this process: first the apparent symmetry between the mean fixation time for mutants with fitness $r$ and $1/r$ and second the monotonicity of the mean fixation time when $r > 1$. As observed by Hathcock and Strogatz in \cite{HathcockStrogatz}, this symmetry can be derived from Maruyama-Kimura symmetry, introduced in  \cite{MaruyamaTimes1974} and \cite{MaruyamaKimura} and  rediscovered in \cite{Antal2006}, \cite{Traulsen2015} and \cite{Taylor2006}. We derive this symmetry from some symmetry of the process, which we has called \emph{G-symmetry}, only observed in cliques and cycles. As a consequence, the maximum fixation time for these graphs is reached when $r=1$, but the breaking of the G-symmetry implies a shift to the right or the left. Our method does not allow us to compute the displacement, nor predict its direction. We only know that the mean fixation time is monotonically decreasing from this value. It would be interesting to find another method that explains this shift and, more precisely, why the values found by numerical methods for regular graphs of order 6 are less than 1.

Contrary to the initial idea, supported by analytical calculations for cliques, cycles and stars, we have proven that there is a second critical value $p_t$ such that proliferation is no longer advantageous for $p \leq p_t$. As in \cite{AlcaldeGuerberoffLozano2022}, we have resorted to a graphical model based on the Harris' graphical method \cite{Harris} and reduced the analysis to a "no-backward" process that we have called \emph{red process}. This method has allowed us to demonstrate that the mean fixation time decreases monotonically when the proliferation parameter increases, providing the second critical value, but does not explain 
the entire dynamics of the process. An open question is the existence of graph structures where proliferation is advantageous in probability and disadvantageous in time in the intermediate regime. Another issue to be studied in the future is the relation between the fixation time for the (neutral) red process and the cover time for the SRW on the underlying graph.

Our main contribution in this work has been to clarify the role of symmetries of the Moran process for computing fixation time in a population structured on a graph. Unlike what happens with the fixation probability, the Maruyama-Kimura symmetry and the symmetry for mutants with fitness $r$ and $1/r$ depends on the graph structure induced on each states,  justifying 
some asymmetry in all graphs except cliques and cycles. The application of Harris' graphical method to the study of Moran and proliferation processes is promising, but we have only used it to study the dependence of the mean fixation time on the proliferation parameter. 
It is possible that the combination of these methods with others, whether classical (such as those used by Durrett, Grimmett and their co-authors \cite{Durrettcontact1, Durrettcontact2, Durrettcontact3,Grimmettcontact} in the study of contact processes) or more recent (such as the martingale method proposed by Monk and van Schaik in \cite{Monk2020} and \cite{Monk2021}), will allow progress in the study of the absorption and fixation times of the Moran process.

\bibliographystyle{unsrt}

\end{document}